\newtheorem{defi}{Definition}
\newtheorem{thm}{Theorem}
\newtheorem{prop}{Proposition}
\newtheorem{rk}{Remark}
\newtheorem{ex}{Example}
\newtheorem{nota}{Notation}
\DeclareMathOperator{\Span}{span}
\providecommand{\keywords}[1]
{
  \small	
  \textbf{Keywords:} #1
}
\numberwithin{equation}{section}
\numberwithin{defi}{section}
\numberwithin{thm}{section}
\numberwithin{prop}{section}
\numberwithin{rk}{section}
\numberwithin{lem}{section}
\numberwithin{coro}{section}
\numberwithin{ex}{section}
\numberwithin{nota}{section}
\title{Signature-based validation of real-world economic scenarios}
\author[1,2]{Hervé Andrès \thanks{Corresponding author: herve.andres@milliman.com }}
\author[1]{Alexandre Boumezoued}
\author[2]{Benjamin Jourdain}
\affil[1]{Milliman R\&D, Paris, France}
\affil[2]{CERMICS, Ecole des Ponts, INRIA, Marne-la-Vallée, France.}
\date{\today}
\begin{document}

\makeatletter  
\renewcommand{\@seccntformat}[1]{\csname the#1\endcsname .\quad}
\makeatother
\maketitle
\begin{abstract}
Motivated by insurance applications, we propose a new approach for the validation of real-world economic scenarios. This approach is based on the statistical test developed by \citet{chevyrev2018signature} and relies on the notions of signature and maximum mean distance. This test allows to check whether two samples of stochastic processes paths come from the same distribution. Our contribution is to apply this test to a variety of stochastic processes exhibiting different pathwise properties (Hölder regularity, autocorrelation, regime switches) and which are relevant for the modelling of stock prices and stock volatility as well as of inflation in view of actuarial applications. 

\end{abstract}
\hspace{10pt}

\keywords{real-world economic scenarios, economic scenarios validation, insurance, signature, maximum mean distance.}

\section{Introduction}
Real-world economic scenarios provide stochastic forecasts of economic variables like interest rates, equity returns, inflation, etc. and are widely used in the insurance sector for a variety of applications including asset and liability management (ALM) studies, strategic asset allocation, computing the Solvency Capital Requirement (SCR) within an Internal Model or pricing assets or liabilities including a risk premium. Unlike risk-neutral economic scenarios that are used for market consistent pricing, real-world economic scenarios aim at being realistic in view of the historical data and/or expert expectations about future outcomes. In the literature, many real-world models have been studied for applications in insurance. Those applications relate to (i) valuation of insurance products, (ii) hedging strategies for annuity portfolios and (iii) risk calculation for economic capital assessment. On item (i), we can mention the work of \citet{boudreault2009} who study the impact on Conditional Tail Expectation provision of GARCH and regime-switching models calibrated on historical data, and the work of \citet{graf2014} who perform simulations under the real-world probability measure to estimate the risk-return profile of various old-age provision products. On item (ii), \citet{zhu2018} measure the hedging error of several dynamic hedging strategies along real-world scenarios for cash balance pension plans while \citet{lin2020} calculate the value of a large variable annuity portfolio and its hedge using nested simulations (real-world scenarios for the outer simulations and risk-neutral scenarios for the inner simulations). Finally, on item (iii), \citet{hardy2006} compare several real-world models for the equity return process in terms of fitting quality and resulting capital requirements and discuss the problem of the validation of real-world scenarios. Similarly, \citet{otero2012estimating} measure the impact on the Solvency II capital requirements (SCR) when using a regime-switching model in comparison to lognormal, GARCH and E-GARCH models. \citet{floryszczak2019} introduce a simple model for equity returns allowing to avoid over-assessment of the SCR specifically after market disruptions. On the other hand, \citet{asadi2020} propose a more complex model for stocks based on ARMA and GARCH processes that results in a higher SCR than in the Solvency II standard model. This literature shows the importance of real-world economic scenarios in various applications in insurance. We observe that the question of the consistency of the generated real-world scenarios is barely discussed or only from a specific angle such as the model likelihood or the ability of the model to reproduce the 1 in 200 worst shock observed on the market.\\

In the insurance industry, the assessment of the realism of real-world economic scenarios is often referred to as scenario validation. It allows to verify a posteriori the consistency of a given set of real-world economic scenarios with historical data and/or expert views. As such, it also guides which models can better be used to generate real-world economic scenarios. In the risk-neutral framework, the validation step consists for example in verifying the martingale property of the discounted values along each scenario. In the real-world framework, the most widespread practice is to perform a so-called point-in-time validation. It consists in analyzing the distribution of some variables derived from the generated scenarios (for example annual log-returns for equity stocks or relative variation for an inflation index) at some specific horizons like one year which is the horizon considered in the Solvency II directive. Generally, this analysis only focuses on the first moments of the one-year distribution as real-world models are often calibrated by a moment-matching approach. The main drawback of this approach is that it only allows to capture properties of the simulated scenarios at some point in time. In particular, the consistency of the paths between $t=0$ and $t=1$ year is not studied so that properties like clustering, smoothness, high-order autocorrelation, etc. are not captured. Capturing these properties has its importance as their presence or absence in the economic scenarios can have an impact in real practice, such as for example on a strategic asset allocation having a monthly rebalancing frequency or on the SCR calculation when a daily hedging strategy is involved, since the yearly loss distribution will be path-dependent. In this paper, we propose to address this drawback by comparing the distribution of the stochastic process underlying the simulated paths to the distribution of the historical paths. This can be done using a distance between probability measures, called the Maximum Mean Distance (MMD), and a mathematical object, called the signature, allowing to encode a continuous path in an efficient and parsimonious way. Based on these tools, \citet{chevyrev2018signature} designed a statistical test allowing to accept or reject the hypothesis that the distributions of two samples of paths are equal. This test has already been used by \citet{buehler2020generating} to test whether financial paths generated by a Conditional Variational Auto Encoder (CVAE) are close to the historical paths being used to train the CVAE. An alternative way to compare the distributions of two sample of paths is to flatten each sequence of observations into a long vector of length $d\times L$, where $L$ is the length of the sequence of observations and $d$ is the dimension of each observation, and to apply a multi-variate statistical test. However, Chevyrev and Oberhauser have shown that their signature-based test performs overall better (both in terms of statistical power and in terms of computational cost) than standard multi-variate tests on a collection of multidimensional time series data sets. Moreover, this alternative approach requires that each sequence of observations is of the same length which is not a prerequisite in the case of the signature-based test. \\

Our contribution is to study more deeply this statistical test from a numerical point of view on a variety of stochastic models and to show its practical interest for the validation of real-world stochastic scenarios when this validation is specified as an hypothesis testing problem. First, we present a numerical analysis with synthetic data in order to measure the statistical power of the test and second, we work with historical data to study the ability of the test to discriminate between several models in practice. Moreover, two constraints are considered in the numerical experiments. The first one is to impose that the distributions of the annual increments are the same in the two compared samples, which is natural since, without this constraint, point-in-time validation methods could distinguish the two samples. Secondly, in order to mimic the operational process of real-world scenarios validation in insurance, we consider samples of different sizes: the first sample consisting of synthetic or real historical paths is of small size (typically below 50) while the second sample consisting of the simulated scenarios is of greater size (typically around 1000). Our aim is to demonstrate the high statistical power of the test under these constraints. Numerical results are presented for three financial risk drivers, namely of a stock price, stock volatility as well as inflation. As for the stock price, we first generate two samples of paths under the widespread Black-Scholes dynamics, each sample corresponding to a specific choice of the (deterministic) volatility structure. We also simulate two samples of paths under both the classic and rough Heston models. As for the stock volatility, the two samples are generated using fractional Brownian motions with different Hurst parameters. Note that the exponential of a fractional Brownian motion, as model for volatility, has been proposed by \citet{gatheral2018volatility} who showed that such a model is consistent with historical data. For the inflation, one sample is generated using a regime-switching $AR(1)$ process and the other sample is generated using a random walk with i.i.d. Gamma noises. Finally, we compare two samples of two-dimensional processes with either independent or correlated coordinates when the first coordinate evolves according to the price in the rough Heston model and the second coordinate evolves according to an $AR(1)$ regime-switching process. Besides these numerical results on simulated paths, we also provide numerical results on real historical data. More specifically, we test historical paths of S\&P 500 realized volatility (used as a proxy of spot volatility) against sample paths from a standard Ornstein-Uhlenbeck model on the one hand and against sample paths from a fractional Ornstein-Uhlenbeck model on the other hand. We show that the test allows to reject the former model while the latter is not rejected. Similarly, it allows to reject a random walk model with i.i.d. Gamma noises when applied to US inflation data while a regime-switching $AR(1)$ process is not rejected. A summary of the studied risk factors and associated models is provided in Table \ref{tab:summary_models}. \\

\begin{rk}
   In the literature, the closest work to ours is the one of \cite{bilokon2021market}. Given a set of paths, they propose to compute the signature of these paths in order to apply the clustering algorithm of \cite{azran2006new} and to ultimately identify market regimes. The similarity function underlying the clustering algorithm relies in particular on the Maximum Mean Distance. Using Black-Scholes sample paths corresponding to four different configurations of the drift and volatility parameters, they show the ability of their methodology to correctly cluster the paths and to identify the number of different configurations (i.e. four). Let us point out that our contributions are different from theirs in several ways. First, their objective is to cluster a set of paths in several groups while our objective is to statistically test whether two sets of paths come from the same probability distribution. Second, our numerical results go beyond the Black-Scholes model and explore more sophisticated models. Moreover, we also provide numerical results on historical data. Finally, our numerical experiments are conducted in a setting closer to our practical applications where the marginal one-year distributions of the two samples are the same or very close while the frequency of observation of the paths is lower in our case (we consider monthly observations over one year while they consider 100 observations). 
\end{rk}

\begin{table}
   \centering
   \caption{Summary of the studied risk factors and associated models in each framework (synthetic data and real historical data)}
   \label{tab:summary_models}
   \resizebox{\textwidth}{!}{%
   \begin{tabular}{@{}ccc@{}}
   \toprule
                                       & Risk factor      & Models                                                 \\ \midrule
   \multirow{3}{*}{Synthetic data}    & Stock price      & Black-Scholes dynamics and rough Heston model                              \\ \cmidrule(l){3-3} 
                                       & Stock volatility & Fractional Brownian motion                             \\ \cmidrule(l){3-3} 
                                       & Inflation        & Regime-switching AR(1) process and   Gamma random walk \\ \midrule
   \multirow{2}{*}{Historical   data} & Stock volatility & Ordinary and fractional   Ornstein-Uhlenbeck processes \\
                                       & Inflation        & Regime-switching AR(1) process and   Gamma random walk \\ \bottomrule
   \end{tabular}%
   }
\end{table}

The objective of the present article is also to provide a concise introduction to the signature theory that does not require any prerequisite for insurance practitioners. Introduced for the first time by \citet{chen1957} in the late 50s and then rediscovered in the 90s in the context of rough path theory \citep{lyons1998}, the signature is a mapping that allows to characterize deterministic paths up to some equivalence relation (see Theorem \ref{thm:sig_injectivity}). \citet{chevyrev2018signature} have extended this result to stochastic processes as they have shown that the expected signature of a stochastic process characterizes its law. The idea to use the signature to address problems in finance is not new although it is quite recent. To our knowledge, \citet{gyurko2010} are the first in this area. They present a general framework for deriving high order, stable and tractable pathwise approximations of stochastic differential equations relying on the signature and apply their results to the simulation of the Cox-Ingersoll-Ross process. Then, \citet{gyurko2013extracting} introduced the signature as a way to obtain a faithful transform of financial data streams that is used as a feature of a classification method. \citet{levin2013learning} use the signature to study the problem of regression where the input and the output variables are paths and illustrate their results by considering the prediction task for AR and ARCH time series models. In the same vein, \cite{cohen2023nowcasting} address the nowcasting problem - the problem of computing early estimates of economic indicators that are published with a delay such as the GDP - by applying a regression on the signature of frequently published indicators. \citet{ni2021sig} develop a Generative Adversarial Network (GAN) based on the signature allowing to generate time series that capture the temporal dependence in the training and validation data set both for synthetic and real data. In his PhD thesis, \citet{perez2020signatures} shows several applications of the signature in finance including the pricing and hedging of exotic derivatives (see \citeauthor{lyons2020}, \citeyear{lyons2020}) or optimal execution problems (see \citeauthor{kalsi2020}, \citeyear{kalsi2020} and \citeauthor{cartea2022}, \citeyear{cartea2022}). \citet{cuchiero2022signature} extend the work of Perez Arribas and develop a new class of asset price models based on the signature of semimartingales allowing to approximate arbitrarily well classical models such as the SABR and the Heston models. Using this new modelling framework, \cite{cuchiero2023joint} propose a method to solve the joint S\&P 500/VIX calibration problem without adding jumps or rough volatility. \cite{akyildirim2022applications} introduce a signature-based machine learning algorithm to detect rare or unexpected items in a given data set of time series type. \cite{cuchiero2023signature} employ the signature in the context of stochastic portfolio theory to introduce a novel class of portfolios making in particular several optimization tasks highly tractable. Finally, \cite{bayer2023} propose a new method relying on the signature for solving optimal stopping problems.\\

The present article is organized as follows: as a preliminary, we introduce in Section \ref{sec:theoretical_foundations} the Maximum Mean Distance and the signature before describing the statistical test proposed by \citet{chevyrev2018signature}. This test is based on these two notions and allows to assess whether two stochastic processes have the same law using finite numbers of their sample paths. Then in Section \ref{sec:num_results}, we study this test from a numerical point of view. We start by studying its power using synthetic data in settings that are realistic in view of insurance applications and then, we apply it to real historical data. We also discuss several challenges related to the numerical implementation of this approach, and highlight its domain of validity in terms of the distance between models and the volume of data at hand.

\section{From the MMD and the signature to a two-sample test for stochastic processes}\label{sec:theoretical_foundations}
In this section, we start by introducing the Maximum Mean Distance (MMD), which allows to measure how similar two probability measures are. Secondly, Reproducing Kernel Hilbert Spaces (RKHS) are presented as they are key to obtain a simple formula for the MMD. Then, we briefly introduce the signature and we show how it allows to construct a RKHS that we can use to make the MMD a metric able to discriminate two probability measures defined on the bounded variation paths quotiented by some equivalence relation. Finally, the statistical test underlying the signature-based validation is introduced. In what follows, $\mathcal{X}$ is a metric space. 

\subsection{The Maximum Mean Distance}

\begin{defi}[Maximum Mean Distance]\label{def:mmd}
Let $\mathcal{G}$ be a class of functions $f:\mathcal{X}\rightarrow \mathbb{R}$ and $\mu$, $\nu$ two Borel probability measures defined on $\mathcal{X}$. The Maximum Mean Distance (MMD) is defined as:
\begin{equation}
MMD_{\mathcal{G}}(\mu,\nu) = \sup_{f\in \mathcal{G}}\left|\int_{\mathcal{X}}f(x)\mu(dx)-\int_{\mathcal{X}}f(x)\nu(dx) \right|.
\end{equation}
\end{defi}

Depending on $\mathcal{G}$, the MMD is not necessarily a metric (actually, it is a pseudo-metric, that is a metric without the property that two points with zero distance are identical), i.e. we could have $MMD_{\mathcal{G}}(\mu,\nu) = 0$ for some $\mu \ne \nu$ if the class of functions $\mathcal{G}$ is not rich enough. A sufficiently rich class of functions that makes $MMD_{\mathcal{G}}$ a metric is for example the space of bounded continuous functions on $\mathcal{X}$ equipped with a metric $d$ (Lemma 9.3.2 of \citeauthor{dudley2002real}, \citeyear{dudley2002real}). A sufficient condition on $\mathcal{G}$ for $MMD_\mathcal{G}$ to be a metric is given in Theorem 5 of \citet{gretton2012kernel}. \\

As presented in Definition \ref{def:mmd}, the MMD appears more as a theoretical tool than a practical one since computing this distance seems impossible in practice due to the supremum over a class of functions. However, if this class of function is the unit ball in a reproducing kernel Hilbert space (RKHS), the MMD is much simpler to estimate. Before setting out this result precisely, let us make a quick reminder about Mercer kernels and RKHSs. 

\begin{defi}[Mercer kernel]
A mapping $K:\mathcal{X}\times \mathcal{X}\rightarrow \mathbb{R}$ is called a Mercer kernel if it is continuous, symmetric and positive semi-definite i.e. for all finite sets $\{x_1, \dots, x_k\} \subset \mathcal{X}$ and for all $(\alpha_1,\dots,\alpha_k)\in\mathbb{R}^k$, the kernel $K$ satisfies:
\begin{equation}
\sum_{i=1}^k\sum_{j=1}^k \alpha_i\alpha_j K(x_i,x_j) \ge 0.
\end{equation}
\end{defi}
\begin{rk}
In the kernel learning litterature, it is common to use the terminology "positive definite" instead of "semi-positive definite" but we prefer the latter one in order to be consistent with the linear algebra standard terminology.
\end{rk}

We set:
\begin{equation}
\mathcal{H}_0 = \Span\{K_x:=K(x,\cdot) \mid x\in \mathcal{X}\}.
\end{equation}
With these notations, we have the following theorem due to Moore-Aronszajn (see Theorem 2 of Chapter III in \citeauthor{cucker2002}, \citeyear{cucker2002}):
\begin{thm}[Moore-Aronszajn]
Let $K$ be a Mercer kernel. Then, there exists a unique Hilbert space $\mathcal{H}\subset \mathbb{R}^{\mathcal{X}}$ with scalar product $\langle \cdot, \cdot \rangle_{\mathcal{H}}$ satisfying the following conditions:
\begin{enumerate}[(i)]
\item $\mathcal{H}_0$ is dense in $\mathcal{H}$
\item For all $f\in \mathcal{H}$, $f(x) = \langle K_x, f \rangle_{\mathcal{H}}$ (reproducing property). 
\end{enumerate} 
\end{thm}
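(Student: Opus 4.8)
The plan is to build $\mathcal{H}$ as the completion of $\mathcal{H}_0$ for a suitable inner product, and then to settle uniqueness separately. First I would define a bilinear form on $\mathcal{H}_0$ by setting, for $f = \sum_{i} \alpha_i K_{x_i}$ and $g = \sum_j \beta_j K_{y_j}$,
\[
\langle f, g \rangle_{\mathcal{H}_0} := \sum_{i,j} \alpha_i \beta_j K(x_i, y_j).
\]
The first thing to verify is that this does not depend on the (non-unique) representations chosen for $f$ and $g$. This follows from the identity $\langle f, g \rangle_{\mathcal{H}_0} = \sum_i \alpha_i\, g(x_i) = \sum_j \beta_j\, f(y_j)$: the middle expression depends on $g$ only through its values and the right one on $f$ only through its values, and symmetry of $K$ makes them agree. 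Bilinearity and symmetry are then immediate, and positive semi-definiteness is exactly the defining property of a Mercer kernel.

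The delicate point at this stage is to upgrade positive semi-definiteness to positive definiteness, so that $\langle \cdot, \cdot \rangle_{\mathcal{H}_0}$ is a genuine inner product. Here I would first note that the reproducing property already holds on $\mathcal{H}_0$, namely $\langle K_x, f \rangle_{\mathcal{H}_0} = f(x)$ by construction. Combining this with the Cauchy--Schwarz inequality (valid for any positive semi-definite form) yields $|f(x)|^2 = |\langle K_x, f\rangle_{\mathcal{H}_0}|^2 \le K(x,x)\, \langle f, f\rangle_{\mathcal{H}_0}$, so that $\langle f, f \rangle_{\mathcal{H}_0} = 0$ forces $f \equiv 0$. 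The same inequality shows that each evaluation functional $f \mapsto f(x)$ is bounded on $\mathcal{H}_0$, which is the key to the completion step.

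Next I would pass to the completion. Abstractly $\mathcal{H}_0$ completes to a Hilbert space, but the theorem demands that $\mathcal{H}$ actually sit inside $\mathbb{R}^{\mathcal{X}}$, and I expect this identification to be the main obstacle: one must realize each abstract limit as a bona fide function and check the realization is injective. The boundedness of the evaluation functionals does the work. Any $\langle \cdot, \cdot \rangle_{\mathcal{H}_0}$-Cauchy sequence $(f_n)$ is, at each fixed $x$, a Cauchy sequence of reals since $|f_n(x) - f_m(x)| \le \sqrt{K(x,x)}\, \|f_n - f_m\|_{\mathcal{H}_0}$, hence converges pointwise to some $f \in \mathbb{R}^{\mathcal{X}}$. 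I would then define $\mathcal{H}$ as the set of such pointwise limits, equipped with the inner product inherited from the abstract completion, and verify that the map sending a Cauchy class to its pointwise limit is well defined and injective, injectivity again following from the evaluation bound (a sequence converging to $0$ in norm converges to $0$ pointwise). By construction $\mathcal{H}_0$ is dense in $\mathcal{H}$, which is (i); extending $\langle K_x, \cdot \rangle = (\cdot)(x)$ from $\mathcal{H}_0$ to $\mathcal{H}$ by continuity, both sides being continuous in the $\mathcal{H}$-norm, gives the reproducing property (ii).

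Finally, for uniqueness, suppose $\mathcal{H}'$ is any Hilbert space in $\mathbb{R}^{\mathcal{X}}$ satisfying (i) and (ii). The reproducing property forces $\langle K_x, K_y \rangle_{\mathcal{H}'} = K_y(x) = K(x,y)$, so the inner products of $\mathcal{H}$ and $\mathcal{H}'$ coincide on $\mathcal{H}_0$. Since $\mathcal{H}_0$ is dense in both spaces and both are complete, the identity on $\mathcal{H}_0$ extends to an isometric isomorphism, and because elements of either space are determined by their pointwise values (through the same evaluation functionals), the two realizations must coincide as subspaces of $\mathbb{R}^{\mathcal{X}}$.
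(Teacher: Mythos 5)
The paper itself does not prove this theorem; it refers to Theorem 2 of Chapter III of Cucker and Smale (2002), whose proof is the same completion construction you follow. Your outline --- well-definedness of the bilinear form on $\mathcal{H}_0$, positive definiteness via the pre-reproducing property and Cauchy--Schwarz, realization of the abstract completion inside $\mathbb{R}^{\mathcal{X}}$ through pointwise limits, and uniqueness by density --- is the standard route, and it is sound except at one point.

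The gap is in your justification of injectivity of the map $\Phi$ sending a Cauchy class to its pointwise limit. Your parenthetical reason, ``a sequence converging to $0$ in norm converges to $0$ pointwise,'' is the implication $h=0 \Rightarrow \Phi(h)=0$, which is exactly the well-definedness of $\Phi$ that you had already established --- not its injectivity. Injectivity requires the converse: if $(f_n)\subset\mathcal{H}_0$ is Cauchy and $f_n(x)\to 0$ for every $x$, then $\|f_n\|_{\mathcal{H}_0}\to 0$. This is true, but it does not follow from the evaluation bound alone; it uses the fact that elements of $\mathcal{H}_0$ are \emph{finite} linear combinations of kernel sections. Two standard repairs: (a) fix $\epsilon>0$ and $N$ with $\|f_n-f_m\|_{\mathcal{H}_0}\le\epsilon$ for $n,m\ge N$; for fixed $m\ge N$ write $f_m=\sum_{j}\beta_j K_{y_j}$, so that $\langle f_n,f_m\rangle_{\mathcal{H}_0}=\sum_j \beta_j f_n(y_j)\to 0$ as $n\to\infty$, and then
\begin{equation*}
\|f_m\|_{\mathcal{H}_0}^2 \le \|f_m-f_n\|_{\mathcal{H}_0}\,\|f_m\|_{\mathcal{H}_0} + \langle f_n,f_m\rangle_{\mathcal{H}_0} \le \epsilon\,\|f_m\|_{\mathcal{H}_0}+o(1),
\end{equation*}
giving $\|f_m\|_{\mathcal{H}_0}\le\epsilon$ for all $m\ge N$; or (b) observe that on the abstract completion the pointwise-limit map is $h\mapsto (\langle K_x,h\rangle)_{x\in\mathcal{X}}$ (extension of the evaluation functionals by continuity), so its kernel is $\left(\Span\{K_x \mid x\in\mathcal{X}\}\right)^{\perp}=\mathcal{H}_0^{\perp}=\{0\}$ by density of $\mathcal{H}_0$ in its completion. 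With either argument inserted, the rest of your proof --- pushing the inner product forward to the image, extending the reproducing identity by continuity, and the uniqueness argument via the isometry fixing $\mathcal{H}_0$ together with the reproducing property --- goes through.
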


\begin{rk}
The obtained Hilbert space $\mathcal{H}$ is said to be a reproducing kernel Hilbert space (RKHS) whose reproducing kernel is $K$. 
\end{rk}

We may now state the main theorem about the MMD. 
\begin{thm}\label{thm:mmd_rkhs}
Let $(\mathcal{H},K)$ be a reproducing kernel Hilbert space and let $\mathcal{G}:=\{f\in  \mathcal{H} \mid \|f\|_{\mathcal{H}} \le 1\}$. If $\int_{\mathcal{X}}\sqrt{K(x,x)}\mu(dx) < \infty$ and $\int_{\mathcal{X}}\sqrt{K(x,x)}\nu(dx) < \infty$, then for $X$, $X'$ independent random variables distributed according to $\mu$ and $Y$, $Y'$ independent random variables distributed according to $\nu$ and such that $X$ and $Y$ are independent, $K(X,X')$, $K(Y,Y')$ and $K(X,Y)$ are integrable and:
\begin{equation}
MMD^2_{\mathcal{G}}(\mu,\nu) = \mathbb{E}[K(X,X')]+\mathbb{E}[K(Y,Y')]-2\mathbb{E}[K(X,Y)].
\end{equation}
\end{thm}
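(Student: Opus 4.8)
The plan is to introduce the \emph{kernel mean embedding} of each measure and reduce the computation of the MMD to a single norm in $\mathcal{H}$. First I would observe that the canonical feature map $\Phi:x\mapsto K_x$ is continuous from $\mathcal{X}$ into $\mathcal{H}$: by the reproducing property, $\|K_x-K_{x'}\|_{\mathcal{H}}^2 = K(x,x)-2K(x,x')+K(x',x')$, which tends to $0$ as $x'\to x$ by continuity of the Mercer kernel $K$. In particular $\Phi$ is Borel measurable, and since $\|K_x\|_{\mathcal{H}} = \sqrt{\langle K_x,K_x\rangle_{\mathcal{H}}} = \sqrt{K(x,x)}$, the integrability hypotheses say exactly that $\mathbb{E}[\|K_X\|_{\mathcal{H}}]<\infty$ and $\mathbb{E}[\|K_Y\|_{\mathcal{H}}]<\infty$. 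Hence the $\mathcal{H}$-valued random variables $K_X$ and $K_Y$ are Bochner integrable, and I would define the mean embeddings $\mu_K:=\mathbb{E}[K_X]\in\mathcal{H}$ and $\nu_K:=\mathbb{E}[K_Y]\in\mathcal{H}$.

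The second step is to express integration against $\mu$ (resp. $\nu$) through $\mu_K$ (resp. $\nu_K$). Using the reproducing property and the fact that the inner product with a fixed element commutes with the Bochner integral, for every $f\in\mathcal{H}$ one gets $\int_{\mathcal{X}}f\,d\mu = \mathbb{E}[f(X)] = \mathbb{E}[\langle K_X,f\rangle_{\mathcal{H}}] = \langle \mu_K,f\rangle_{\mathcal{H}}$, and likewise for $\nu$. Consequently $MMD_{\mathcal{G}}(\mu,\nu) = \sup_{\|f\|_{\mathcal{H}}\le 1}|\langle \mu_K-\nu_K,f\rangle_{\mathcal{H}}|$, and the supremum of a continuous linear functional over the closed unit ball of a Hilbert space equals the norm of its representing vector (attained at $f=(\mu_K-\nu_K)/\|\mu_K-\nu_K\|_{\mathcal{H}}$ when the difference is non-zero). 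This yields $MMD_{\mathcal{G}}(\mu,\nu)=\|\mu_K-\nu_K\|_{\mathcal{H}}$.

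It then remains to expand the squared norm. Writing $MMD^2_{\mathcal{G}}(\mu,\nu) = \|\mu_K\|_{\mathcal{H}}^2 - 2\langle \mu_K,\nu_K\rangle_{\mathcal{H}} + \|\nu_K\|_{\mathcal{H}}^2$, I would compute each term by pulling the inner product inside the two independent Bochner integrals and invoking $\langle K_x,K_{x'}\rangle_{\mathcal{H}}=K(x,x')$: this gives $\|\mu_K\|_{\mathcal{H}}^2=\mathbb{E}[K(X,X')]$ with $X'$ an independent copy of $X$, $\|\nu_K\|_{\mathcal{H}}^2=\mathbb{E}[K(Y,Y')]$, and $\langle \mu_K,\nu_K\rangle_{\mathcal{H}}=\mathbb{E}[K(X,Y)]$, which is the claimed identity. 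The integrability of $K(X,X')$, $K(Y,Y')$ and $K(X,Y)$ needed to make these expectations meaningful follows from the Cauchy--Schwarz bound $|K(x,x')| = |\langle K_x,K_{x'}\rangle_{\mathcal{H}}| \le \sqrt{K(x,x)}\sqrt{K(x',x')}$ together with independence, e.g. $\mathbb{E}[|K(X,Y)|] \le \mathbb{E}[\sqrt{K(X,X)}]\,\mathbb{E}[\sqrt{K(Y,Y)}] < \infty$.

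The main obstacle is the measure-theoretic bookkeeping behind the mean embedding: establishing that $x\mapsto K_x$ is strongly measurable and Bochner integrable, and that the inner product legitimately commutes with the Bochner integral. The key that unlocks all of this is the continuity of the Mercer kernel, which makes $\Phi$ continuous and hence reduces measurability to a triviality; once the Bochner integral is in hand, the remaining steps are purely formal manipulations in the Hilbert space.
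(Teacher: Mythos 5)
The paper does not prove this theorem itself; it defers to Lemma 6 of \citet{gretton2012kernel}, whose proof is exactly the kernel-mean-embedding argument you give: embed $\mu$ and $\nu$ into $\mathcal{H}$, identify $MMD_{\mathcal{G}}(\mu,\nu)$ with $\|\mu_K-\nu_K\|_{\mathcal{H}}$ by duality over the unit ball, and expand the squared norm using $\langle K_x,K_{x'}\rangle_{\mathcal{H}}=K(x,x')$. Your proof is correct and essentially the same; the only (minor) difference is that you construct the mean embeddings as Bochner integrals --- which strictly speaking requires $K_X$ to be essentially separably valued in addition to Borel measurable, automatic when $\mathcal{X}$ is separable --- whereas the cited proof obtains them via the Riesz representation of the bounded functional $f\mapsto\mathbb{E}[f(X)]$, which sidesteps that point.
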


The proof of this theorem can be found in Lemma 6 of \citet{gretton2012kernel}. A natural question at this stage is that of the choice of the Mercer kernel in order to obtain a metric on the space of probability measures defined on the space of continuous mappings from $[0,T]$ to a finite dimensional vector space or at least on a non-trivial subspace of this space. \citet{chevyrev2018signature} constructed such a Mercer kernel using the signature, which we will now define. \\

\subsection{The signature}\label{sec:intro_signature}

This subsection aims at providing a short overview of the signature, more details are given in the Appendix. We call "path" any continuous mapping from some time interval $[0,T]$ to a finite dimensional vector space $E$ which we equip with a norm $\|\cdot \|_E$. We denote by $\otimes$ the tensor product defined on $E\times E$ and by $E^{\otimes n}$ the tensor space obtained by taking the tensor product of $E$ with itself $n$ times:
\begin{equation}
E^{\otimes n} = \underbrace{E\otimes \dots \otimes E}_{n \text{ times}}.
\end{equation} 
The space in which the signature takes its values is called the space of formal series of tensors. It can be defined as:
\begin{equation}
T(E) = \left\{(\mathbf{t}^n)_{n\ge 0} \mid \forall n\ge 0, \mathbf{t}^n \in E^{\otimes n} \right\}
\end{equation}
with the convention $E^{\otimes 0} = \mathbb{R}$.\\

In a nutshell, the signature of a path $X$ is the collection of all iterated integrals of $X$ against itself. In order to be able to define these iterated integrals of $X$, one needs to make some assumptions about the regularity of $X$. The simplest framework is to assume that $X$ is of bounded variation.

\begin{defi}[Bounded variation path]\label{def:bounded_variation}
We say that a path $X:[0,T]\rightarrow E$ is of bounded variation on $[0,T]$ if its total variation
\begin{equation}
\|X\|_{1,[0,T]}:=\sup_{(t_0,\dots,t_r )\in \mathcal{D} } \sum_{i=0}^{r-1} \| X_{t_{i+1}}- X_{t_i} \|_E 
\end{equation}
is finite with $\mathcal{D}=\{(t_0,\dots,t_r ) \mid r\in \mathbb{N}^*, t_0 = 0 < t_1 < \dots < t_r = T   \}$ the set of all subdivisions of $[0,T]$.   
\end{defi}
\begin{nota}
We denote by $\mathcal{C}^1([0,T],E)$ the set of bounded variation paths from $[0,T]$ to $E$. 
\end{nota}
\begin{rk}
Intuitively, a bounded variation path on $[0,T]$ is a path whose graph vertical arc length is finite. In fact, if $X$ is a real-valued continuously differentiable path on $[0,T]$, then $\|X\|_{1,[0,T]} = \int_0^T |X'_t| dt $.
\end{rk}

We can now define the signature of a bounded variation path. 

\begin{defi}[Signature]\label{def:signature}
Let $X:[0,T]\rightarrow E$ be a bounded variation path. The signature of $X$ on $[0,T]$ is defined as $S_{[0,T]}(X) = (\mathbf{X}^n)_{n\ge 0}$ where by convention $\mathbf{X}^0=1$ and 
\begin{equation}
\mathbf{X}^n = \int_{0\le u_1 < u_2 < \dots < u_n \le T} dX_{u_1}\otimes \dots \otimes dX_{u_n} \in E^{\otimes n}.
\end{equation}
where the integrals must be understood in the sense of Riemann-Stieljes. We call $\mathbf{X}^n$ the term of order $n$ of the signature and $S^N_{[0,T]}(X) = (\mathbf{X}^n)_{0\le n \le N}$ the truncated signature at order $N$. Note that when the time interval is clear from the context, we will omit the subscript of $S$. 
\end{defi}
\begin{ex}\label{ex:sig_1d}
If $X$ is a one-dimensional bounded variation path, then its signature over $[0,T]$ is very simple as it reduces to the powers of the increment $X_T-X_0$, i.e. for any $n\ge 0$:
\begin{equation}
\mathbf{X}^n = \frac{1}{n!}(X_T-X_0)^n.
\end{equation}
\end{ex}

The above definition could be extended to less regular paths, namely to paths of finite $p$-variation with $p<2$. In this case, the integrals can be defined in the sense of \citet{young1936}. However, if $p\ge 2$, it is no longer possible to define the iterated integrals. Still, it is possible to give a sense to the signature but the definition is much more involved and relies on the rough path theory so we refer the interested reader to \citet{lyons2007differential}. \\

In this work, we will focus on bounded variation paths where the signature takes its values in the space of finite formal series (as a consequence of Proposition 2.2 of \citeauthor{lyons2007differential}, \citeyear{lyons2007differential}):
\begin{equation}\label{eq:finite_formal_series}
T^*(E) := \left\{\mathbf{t}\in T(E) \mid \|\mathbf{t}\| := \sqrt{\sum_{n\ge 0} \|\mathbf{t}^n\|_{E^{\otimes n}}^2} <\infty \right\}
\end{equation}
where $\|\cdot\|_{E^{\otimes n}}$ is the norm induced by the scalar product $\langle \cdot, \cdot \rangle_{E^{\otimes n}}$ defined on $E^{\otimes n}$ by:
\begin{equation}
\langle x, y \rangle_{E^{\otimes n}} = \sqrt{\sum_{I=(i_1,\dots,i_n)\in \{1,\dots,d\}^n} x_Iy_I} \quad \text{for } x,y\in E^{\otimes n}
\end{equation}
with $d$ the dimension of $E$ and $x_I$ (resp. $y_I$) the coefficient at position $I$ of $x$ (resp. $y$).\\

The signature is a powerful tool allowing to encode a path in a hierarchical and efficient manner. In fact, two bounded variation paths having the same signature are equal up to an equivalence relation (the so-called tree-like equivalence, denoted by $\sim_t$, and defined in Appendix \ref{sec:sig_properties}). In other words, the signature is one-to-one on the space $\mathcal{P}^1([0,T],E):=\mathcal{C}^1([0,T],E)/\sim_t$ of bounded variation paths quotiented by the tree-like equivalence relation. This is presented in a more comprehensive manner in Appendix \ref{sec:sig_properties}. Now, we would like to characterize the law of stochastic processes with bounded variation sample paths using the expected signature, that is the expectation of the signature taken component-wise. In a way, the expected signature is to stochastic processes what the sequence of moments is to random vectors. Thus, in the same way that the sequence of moments characterizes the law of random vectors only if the moments do not grow too fast, we need that the terms of the expected signature do not grow too fast in order to be able to characterize the law of stochastic processes. In order to avoid to have to restrict the study to laws with compact support (as assumed by \citeauthor{fawcett2002problems}, \citeyear{fawcett2002problems}), \citet{chevyrev2018signature} propose to apply a normalization mapping to the signature ensuring that the norm of the normalized signature is bounded. This property allows them to prove the characterization of the law of a stochastic process by its expected normalized signature (Theorem \ref{thm:sig_law_sto}). One of the consequences of this result is the following theorem, which makes the connection between the MMD and the signature and represents the main theoretical result underlying the signature-based validation. This result is a particular case of Theorem 30 from \citet{chevyrev2018signature}.

\begin{thm}\label{thm:application_rkhs_sig}
Let $E$ be a Hilbert space and $\langle\cdot, \cdot\rangle$ the scalar product on $T^*(E)$ defined by $\langle \mathbf{x}, \mathbf{y} \rangle= \sum_{n\ge 0} \langle \mathbf{x}^n, \mathbf{y}^n \rangle_{E^{\otimes n}}$
for all $\mathbf{x}$ and $\mathbf{y}$ in $T^*(E)$. Then the signature kernel defined on $\mathcal{P}^1([0,T],E)$ by $K^{sig}(x,y) = \left\langle\Phi(x),\Phi(y)\right\rangle$ where $\Phi$ is the normalized signature (see Theorem \ref{thm:sig_law_sto}), is a Mercer kernel and we denote by $\mathcal{H}^{sig}$ the associated RKHS. Moreover, $MMD_{\mathcal{G}}$ where $\mathcal{G}$ is the unit ball of $\mathcal{H}^{sig}$ is a metric on the space $\mathcal{M}$ defined as:
\begin{equation}
\mathcal{M} = \left\{\mu \text{ Borel probability measure defined on } \mathcal{P}^1([0,T],E) \middle\vert \int_{\mathcal{P}^1}\sqrt{K^{sig}(x,x)}\mu(dx) < \infty \right\}
\end{equation}
and we have:
\begin{equation}\label{eq:mmd_sig_rkhs}
MMD_{\mathcal{G}}(\mu,\nu) = \mathbb{E}[K^{sig}(X,X')]+\mathbb{E}[K^{sig}(Y,Y')]-2\mathbb{E}[K^{sig}(X,Y)]
\end{equation}
where $X$,$X'$ are independent random variables distributed according to $\mu$ and $Y$,$Y'$ are independent random variables distributed according to $\nu$ such that $Y$ is independent from $X$. 
\end{thm}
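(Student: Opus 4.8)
The plan is to verify the three assertions in turn: that $K^{sig}$ is a Mercer kernel, that formula~\eqref{eq:mmd_sig_rkhs} holds, and that $MMD_{\mathcal{G}}$ separates points of $\mathcal{M}$. Throughout I would exploit the two structural facts already available: that $\Phi$ is a feature map into the Hilbert space $T^*(E)$, and that the normalization guarantees $\|\Phi(x)\|\le C$ for a universal constant $C$.

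First I would check that $K^{sig}$ is a Mercer kernel. Symmetry is immediate from the symmetry of $\langle\cdot,\cdot\rangle$. Positive semi-definiteness follows from the feature-map structure: for any finite family $\{x_1,\dots,x_k\}\subset\mathcal{P}^1([0,T],E)$ and scalars $(\alpha_1,\dots,\alpha_k)$,
\begin{equation}
\sum_{i=1}^k\sum_{j=1}^k \alpha_i\alpha_j K^{sig}(x_i,x_j) = \left\langle \sum_{i=1}^k \alpha_i\Phi(x_i),\, \sum_{j=1}^k\alpha_j\Phi(x_j)\right\rangle = \left\|\sum_{i=1}^k\alpha_i\Phi(x_i)\right\|^2\ge 0.
\end{equation}
For continuity I would rely on the continuity of the normalized signature $\Phi$ with respect to the topology on $\mathcal{P}^1([0,T],E)$ together with continuity of the inner product; since $\|\Phi(x)\|\le C$ uniformly, the map $(x,y)\mapsto\langle\Phi(x),\Phi(y)\rangle$ is jointly continuous by the Cauchy--Schwarz inequality.

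Next I would observe that the bound $\|\Phi(x)\|\le C$ gives $K^{sig}(x,x)=\|\Phi(x)\|^2\le C^2$, so the integrability hypotheses $\int\sqrt{K^{sig}(x,x)}\,\mu(dx)<\infty$ required by Theorem~\ref{thm:mmd_rkhs} hold automatically for every $\mu\in\mathcal{M}$ (in fact for every Borel probability measure on $\mathcal{P}^1([0,T],E)$). Applying Theorem~\ref{thm:mmd_rkhs} with the Mercer kernel $K^{sig}$ and the unit ball $\mathcal{G}$ of $\mathcal{H}^{sig}$ then yields identity~\eqref{eq:mmd_sig_rkhs} directly.

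It remains to establish the metric property, which is the main obstacle. Non-negativity, symmetry and the triangle inequality are formal: by Definition~\ref{def:mmd}, $MMD_{\mathcal{G}}$ is a supremum of absolute differences of linear functionals and is therefore a pseudo-metric. The delicate point is separation. Since $T^*(E)$ is a Hilbert space and $\Phi$ is bounded and measurable, the mean embedding $m_\mu:=\mathbb{E}_\mu[\Phi(X)]\in T^*(E)$ is well defined as a Bochner integral, and this boundedness justifies interchanging expectation with the inner product, which recasts~\eqref{eq:mmd_sig_rkhs} as
\begin{equation}
MMD^2_{\mathcal{G}}(\mu,\nu) = \|m_\mu\|^2 + \|m_\nu\|^2 - 2\langle m_\mu, m_\nu\rangle = \|m_\mu - m_\nu\|^2,
\end{equation}
where the splitting uses independence of $X,X'$, of $Y,Y'$, and of $X,Y$. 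Hence $MMD_{\mathcal{G}}(\mu,\nu)=0$ if and only if $m_\mu=m_\nu$, i.e. the two laws share the same expected normalized signature. I would then invoke the characterization result (Theorem~\ref{thm:sig_law_sto}), which asserts that the expected normalized signature determines the law on $\mathcal{P}^1([0,T],E)$, to conclude $\mu=\nu$. This last step is where the real content lies: the pseudo-metric structure is automatic, but upgrading it to a genuine metric hinges entirely on the injectivity of $\mu\mapsto m_\mu$, which is precisely the deep characterization theorem of \citet{chevyrev2018signature}.
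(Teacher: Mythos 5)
Your proposal is correct, and it is worth noting that it supplies something the paper itself does not: the paper's entire ``proof'' of Theorem \ref{thm:application_rkhs_sig} is the single sentence citing it as a particular case of Theorem 30 of \citet{chevyrev2018signature}. What you have done instead is reassemble that result from ingredients already stated in the paper --- the uniform bound $\|\Phi(x)\|\le K$ built into the tensor normalization (Definition \ref{def:tensor_normalisation}), the kernel expectation formula of Theorem \ref{thm:mmd_rkhs}, and the characterization of laws by expected normalized signatures (Theorem \ref{thm:sig_law_sto}) --- which is essentially how the cited result is proved at the source. Your route buys self-containedness and makes transparent exactly where the ``deep'' content sits (the injectivity of $\mu\mapsto\mathbb{E}_\mu[\Phi(X)]$, i.e.\ Theorem \ref{thm:sig_law_sto}; everything else is soft Hilbert-space bookkeeping), and it also exposes a useful observation the paper leaves implicit: since $K^{sig}(x,x)\le K^2$, the integrability condition defining $\mathcal{M}$ is vacuous, so $\mathcal{M}$ is in fact all Borel probability measures on $\mathcal{P}^1([0,T],E)$. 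Two small caveats: first, your continuity argument for the Mercer property presupposes continuity of $\Phi$ on the quotient space $\mathcal{P}^1([0,T],E)$, which holds (the normalization is continuous by definition and the signature is continuous for the $1$-variation topology) but deserves a word, as does strong measurability of $\Phi$ when you invoke the Bochner integral; second, your derivation correctly produces $MMD^2_{\mathcal{G}}(\mu,\nu)$ on the left-hand side of the expectation identity, whereas the paper's display \eqref{eq:mmd_sig_rkhs} omits the square --- a typo in the statement, not a defect of your argument.
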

Based on this theorem, Chevyrev and Oberhauser propose a two-sample statistical test that allows to test whether two samples of paths come from the same distribution and that we now introduce.

\subsection{A two-sample test for stochastic processes}
\label{sec:two_sample_test}
Assume that we are given a sample $(X_1,\dots,X_m)$ consisting of $m$ independent realizations of a stochastic process of unknown law $\mathbb{P}_X$ and an independent sample $(Y_1,\dots,Y_n)$ consisting of $n$ independent realizations of a stochastic process of unknown law $\mathbb{P}_Y$. We assume that both processes are in $\mathcal{X} = \mathcal{P}^1([0,T],E)$ almost surely. A natural question is whether $\mathbb{P}_X = \mathbb{P}_Y$. Let us consider the following null and alternative hypotheses:
\begin{equation}
H_0:\mathbb{P}_X=\mathbb{P}_Y,\ H_1:\mathbb{P}_X\ne \mathbb{P}_Y.
\end{equation}
According to Theorem \ref{thm:application_rkhs_sig}, we have $MMD_{\mathcal{G}}(\mathbb{P}_X,\mathbb{P}_Y) \neq 0$ under $H_1$ while $MMD_{\mathcal{G}}(\mathbb{P}_X,\mathbb{P}_Y) = 0$ under $H_0$ when $\mathcal{G}$ is the unit ball of the reproducing Hilbert space associated to the signature kernel (note that we use the notation $K$ instead of $K^{sig}$ for the signature kernel in this section as there is no ambiguity). Moreover,
\begin{equation}
MMD^2_{\mathcal{G}}(\mathbb{P}_X,\mathbb{P}_Y) = \mathbb{E}[K(X,X')]+\mathbb{E}[K(Y,Y')]-2\mathbb{E}[K(X,Y)]
\end{equation}
where $X$,$X'$ are two random variables of law $\mathbb{P}_X$ and $Y$, $Y'$ are two random variables of law $\mathbb{P}_Y$ with $X,Y$ independent. This suggests to consider the following test statistic:
\begin{equation}\label{eq:mmd_estimator}
\begin{aligned}
MMD^2_{m,n}(X_1,\dots,X_m,Y_1,\dots,Y_n):= &\frac{1}{m(m-1)}\sum_{1\le i\ne j \le m} K(X_i,X_j)+\frac{1}{n(n-1)}\sum_{1\le i\ne j \le n} K(Y_i,Y_j)\\
&-\frac{2}{mn}\sum_{\substack{1\le i \le m \\ 1\le j \le n}} K(X_i,Y_j)
\end{aligned}
\end{equation}
as it is an unbiased estimator of $MMD^2_{\mathcal{G}}(\mathbb{P}_X,\mathbb{P}_Y)$. In the sequel, we omit the dependency on $X_1,\dots,X_m$ and $Y_1,\dots,Y_n$ for notational simplicity. This estimator is even strongly consistent as stated by the following theorem which is an application of the strong law of large numbers for two-sample $U$-statistics \citep{kumar1977}. 
\begin{thm}\label{thm:lfgn_u_stats}
Assuming that
\begin{enumerate}[(i)]
\item $\mathbb{E}\left[ \sqrt{K(X,X)}\right] < \infty$ with $X$ distributed according to $\mathbb{P}_X$ and $\mathbb{E}\left[ \sqrt{K(Y,Y)}\right] < \infty$ with $Y$ distributed according to $\mathbb{P}_Y$
\item $\mathbb{E}[|h|\log^+|h|] < \infty$ where $\log^+$ is the positive part of the logarithm and 
\begin{equation}
h = K(X,X')+K(Y,Y')-\frac{1}{2}\left( K(X,Y)+K(X,Y')+K(X',Y)+K(X',Y')\right)
\end{equation}
with $X,X'$ distributed according to $\mathbb{P}_X$, $Y,Y'$ distributed according to $\mathbb{P}_Y$ and $X,X',Y,Y'$ independent
\end{enumerate}
then:
\begin{equation}
MMD^2_{m,n} \underset{m,n\to +\infty}{\overset{a.s.}{\rightarrow}} MMD^2_{\mathcal{G}}(\mathbb{P}_X,\mathbb{P}_Y).
\end{equation}
\end{thm}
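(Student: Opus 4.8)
The plan is to recognize the statistic $MMD^2_{m,n}$ as a two-sample $U$-statistic of degree $(2,2)$ whose kernel is exactly the function $h$ appearing in hypothesis (ii), and then to invoke the strong law of large numbers for two-sample $U$-statistics of \citet{kumar1977}, whose integrability hypothesis is precisely condition (ii). So the real content lies in the reduction: once $MMD^2_{m,n}$ is written as the $U$-statistic based on $h$ and the mean $\mathbb{E}[h]$ is identified with $MMD^2_{\mathcal{G}}(\mathbb{P}_X,\mathbb{P}_Y)$, the conclusion is a black-box application.

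First I would verify that $h$, viewed as a function of $(x,x';y,y')$, is symmetric under the exchanges $x\leftrightarrow x'$ and $y\leftrightarrow y'$ (immediate from the symmetry of $K$ and from the symmetric way the four cross terms enter), so that it is a legitimate symmetric kernel of degree $(2,2)$. Then I would check the algebraic identity
\begin{equation}
\frac{1}{\binom{m}{2}\binom{n}{2}}\sum_{1\le i<j\le m}\sum_{1\le k<l\le n} h(X_i,X_j;Y_k,Y_l) = MMD^2_{m,n},
\end{equation}
which is pure bookkeeping: the term $K(X_i,X_j)$ of $h$ does not depend on $(k,l)$ and is therefore counted $\binom{n}{2}$ times, contributing $\frac{1}{m(m-1)}\sum_{i\ne j}K(X_i,X_j)$ after normalization; symmetrically for $K(Y_k,Y_l)$; and each $K(X_a,Y_b)$ appears with multiplicity $\tfrac12(m-1)(n-1)$, which after dividing by $\binom{m}{2}\binom{n}{2}$ yields exactly the cross term $-\frac{2}{mn}\sum_{a,b}K(X_a,Y_b)$. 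Hence $MMD^2_{m,n}$ is the two-sample $U$-statistic associated with $h$.

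Second, I would name the limit. By the Cauchy--Schwarz inequality in $\mathcal{H}^{sig}$ one has $|K(x,y)|\le \sqrt{K(x,x)}\sqrt{K(y,y)}$, so condition (i) together with independence gives $\mathbb{E}[K(X,X')]\le (\mathbb{E}[\sqrt{K(X,X)}])^2<\infty$, and likewise $\mathbb{E}[K(Y,Y')]$ and $\mathbb{E}[|K(X,Y)|]$ are finite; this also guarantees $\mathbb{P}_X,\mathbb{P}_Y\in\mathcal{M}$, so that Theorem \ref{thm:application_rkhs_sig} applies. Taking expectations in $h$ under the stated independence then gives $\mathbb{E}[h]=\mathbb{E}[K(X,X')]+\mathbb{E}[K(Y,Y')]-2\mathbb{E}[K(X,Y)]=MMD^2_{\mathcal{G}}(\mathbb{P}_X,\mathbb{P}_Y)$, the candidate limit.

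Finally, with the kernel identified and its mean computed, the result follows by applying the SLLN for two-sample $U$-statistics: under $\mathbb{E}[|h|\log^+|h|]<\infty$ (hypothesis (ii)), the $U$-statistic converges almost surely to $\mathbb{E}[h]$ as $m,n\to\infty$ independently. The delicate point — and the reason the $\log^+$ moment rather than mere integrability is required — is that the averaging is indexed by two indices tending to infinity freely, a multi-index strong law of Marcinkiewicz--Zygmund/Smythe type underlying \citet{kumar1977}. Since that result is available off the shelf, I expect the only genuine obstacle on our side to be the first two steps: the combinatorial identification of the kernel $h$, and the verification via (i) that all relevant expectations are finite so that Theorem \ref{thm:application_rkhs_sig} may be used to label the limit as $MMD^2_{\mathcal{G}}(\mathbb{P}_X,\mathbb{P}_Y)$.
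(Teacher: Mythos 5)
Your proposal is correct and takes essentially the same route as the paper: the paper treats this theorem purely as an application of the strong law of large numbers for two-sample $U$-statistics of \citet{kumar1977}, with hypothesis (ii) serving as that theorem's moment condition. The details you supply --- the symmetry of $h$, the combinatorial identification of $MMD^2_{m,n}$ as the $U$-statistic with kernel $h$, and the use of condition (i) with Cauchy--Schwarz to justify writing the limit $\mathbb{E}[h]$ as $MMD^2_{\mathcal{G}}(\mathbb{P}_X,\mathbb{P}_Y)$ --- are exactly the steps the paper leaves implicit.
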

Under $H_1$, $MMD^2_{\mathcal{G}}(\mathbb{P}_X,\mathbb{P}_Y)>0$ so that $N\times MMD^2_{m,n} \overset{a.s.}{\underset{m,n \to +\infty}{\rightarrow}} +\infty$ for $N=m+n$. Thus, we reject the null hypothesis at level $\alpha$ if $MMD^2_{m,n}$ is greater than some threshold $c_{\alpha}$. In order to determine this threshold, we rely on the asymptotic distribution of $MMD^2_{m,n}$ under $H_0$ which is due to \citeauthor{gretton2012kernel} (\citeyear{gretton2012kernel}, Theorem 12).\\

\begin{thm}\label{thm:H0_tcl}
Let us define the kernel $\tilde{K}$ by:
\begin{equation}
\tilde{K}(x,y) = K(x,y)-\mathbb{E}[K(x,X)] -\mathbb{E}[K(X,y)] -\mathbb{E}[K(X,X')] 
\end{equation}
where $X$ and $X'$ are i.i.d. samples drawn from $\mathbb{P}_X$. Assume that:
\begin{enumerate}[(i)]
\item $\mathbb{E}[\tilde{K}(X,X')^2] <+\infty$
\item $m/N \rightarrow \rho \in (0,1)$ as $N=m+n \rightarrow +\infty$. 
\end{enumerate}
Under these assumptions, we have:
\begin{enumerate}
\item under $H_0$:
\begin{equation}
N\times MMD^2_{m,n} \underset{N \to +\infty}{\overset{\mathcal{L}}{\rightarrow}}\frac{1}{\rho(1-\rho)}\sum_{\ell=1}^{+\infty}\lambda_{\ell} \left(G_{\ell}^2 -1 \right)
\end{equation}
where $(G_{\ell})_{\ell \ge 1}$ is an infinite sequence of independent standard normal random variables and the $\lambda_{\ell}$'s are the eigenvalues of the operator $S_{\tilde{K}}$ defined as:
\begin{equation}
\begin{array}{rrcl}
S_{\tilde{K}}:& L_2(\mathbb{P}_X) &\rightarrow &L_2(\mathbb{P}_X) \\
&g &\mapsto &\mathbb{E}[\tilde{K}(\cdot,X)g(X)]
\end{array}
\end{equation} 
with $L_2(\mathbb{P}_X):=\{g: \mathcal{X} \rightarrow \mathbb{R} \mid \mathbb{E}[g(X)^2] <\infty \}$. 
\item under $H_1$:
\begin{equation}
N\times MMD^2_{m,n} \underset{N\to +\infty}{\rightarrow} +\infty
\end{equation}
\end{enumerate}
\end{thm}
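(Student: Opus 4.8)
The plan is to recognize $N\times MMD^2_{m,n}$ as a two-sample $U$-statistic that becomes \emph{degenerate} under $H_0$, and to reproduce the spectral argument of \citeauthor{gretton2012kernel} (\citeyear{gretton2012kernel}, Theorem 12). The starting point is that the statistic \eqref{eq:mmd_estimator} is invariant under centering of the kernel: writing $a(x):=\mathbb{E}[K(x,X)]$ and $c:=\mathbb{E}[K(X,X')]$ and expanding the three sums, every term involving $a$ or $c$ cancels, so $MMD^2_{m,n}$ is unchanged when $K$ is replaced by $\tilde{K}$. Under $H_0$ the common law is $\mathbb{P}:=\mathbb{P}_X=\mathbb{P}_Y$, and the centering is chosen precisely so that $\mathbb{E}[\tilde{K}(x,X)]=0$ for every $x$; the first-order H\'ajek projection of the $U$-statistic therefore vanishes, which is why the limit is a weighted sum of centered $\chi^2$ variables rather than a Gaussian.

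Next I would apply Mercer's theorem to $\tilde{K}$. By assumption (i) the operator $S_{\tilde{K}}$ is Hilbert--Schmidt on $L_2(\mathbb{P}_X)$ with $\sum_{\ell\ge1}\lambda_\ell^2=\mathbb{E}[\tilde{K}(X,X')^2]<\infty$, so there is an orthonormal sequence of eigenfunctions $(\phi_\ell)$ with $\mathbb{E}[\phi_\ell(X)]=0$, $\mathbb{E}[\phi_\ell(X)\phi_{\ell'}(X)]=\delta_{\ell\ell'}$ and $\tilde{K}(x,y)=\sum_{\ell\ge1}\lambda_\ell\,\phi_\ell(x)\phi_\ell(y)$ in $L_2$. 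Substituting this into the centered form of \eqref{eq:mmd_estimator} and setting $U_\ell^X:=\tfrac{1}{\sqrt m}\sum_{i=1}^m\phi_\ell(X_i)$, $U_\ell^Y:=\tfrac{1}{\sqrt n}\sum_{j=1}^n\phi_\ell(Y_j)$, one uses $\tfrac1m\sum_i\phi_\ell(X_i)^2\to1$, $N/m\to1/\rho$, $N/n\to1/(1-\rho)$ and $N/\sqrt{mn}\to1/\sqrt{\rho(1-\rho)}$ to show that the contribution of the $\ell$-th eigenmode to $N\times MMD^2_{m,n}$ converges to
\[
\lambda_\ell\left[\frac{(G_\ell^X)^2-1}{\rho}+\frac{(G_\ell^Y)^2-1}{1-\rho}-\frac{2\,G_\ell^X G_\ell^Y}{\sqrt{\rho(1-\rho)}}\right],
\]
where the multivariate CLT gives that $(U_\ell^X,U_\ell^Y)_\ell$ converges jointly to an independent family of standard normal pairs $(G_\ell^X,G_\ell^Y)$.

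It then remains to simplify each mode. Completing the square,
\[
\frac{(G_\ell^X)^2-1}{\rho}+\frac{(G_\ell^Y)^2-1}{1-\rho}-\frac{2\,G_\ell^X G_\ell^Y}{\sqrt{\rho(1-\rho)}}=\left(\frac{G_\ell^X}{\sqrt\rho}-\frac{G_\ell^Y}{\sqrt{1-\rho}}\right)^2-\frac{1}{\rho(1-\rho)},
\]
using $\tfrac1\rho+\tfrac1{1-\rho}=\tfrac{1}{\rho(1-\rho)}$. Since $G_\ell:=\sqrt{\rho(1-\rho)}\bigl(\tfrac{G_\ell^X}{\sqrt\rho}-\tfrac{G_\ell^Y}{\sqrt{1-\rho}}\bigr)$ has variance $1$ and the $G_\ell$ are independent across $\ell$, each mode equals $\tfrac{1}{\rho(1-\rho)}\lambda_\ell(G_\ell^2-1)$, which yields the announced limit after summation.

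The main obstacle is the passage from finitely many eigenmodes to the infinite series. I would fix a truncation level $L$, obtain the joint convergence above for $\ell\le L$ from the finite-dimensional CLT and the continuous mapping theorem, and then control the tail $\sum_{\ell>L}$ uniformly in $(m,n)$: the $L_2$-norm of the tail of both the prelimit statistic and the limit series is controlled by $\sum_{\ell>L}\lambda_\ell^2$, which is finite and tends to $0$ as $L\to\infty$ by assumption (i); a Chebyshev/Slutsky argument then permits letting $L\to\infty$. Establishing this uniform tail bound is the delicate step. Finally, the behaviour under $H_1$ is immediate from the strong consistency of Theorem \ref{thm:lfgn_u_stats}: there $MMD^2_{m,n}\to MMD^2_{\mathcal{G}}(\mathbb{P}_X,\mathbb{P}_Y)>0$ almost surely, so $N\times MMD^2_{m,n}\to+\infty$.
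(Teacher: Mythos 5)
Your proposal is correct and takes essentially the same route as the paper, which offers no proof of its own but attributes the result to \citet[Theorem 12]{gretton2012kernel}: that proof is exactly the degenerate two-sample $U$-statistic argument you reconstruct (invariance of the statistic under kernel centering, spectral expansion of $\tilde{K}$, finite-dimensional CLT per eigenmode plus completing the square, and a uniform tail bound via $\sum_{\ell}\lambda_{\ell}^2<\infty$ to pass to the infinite series), with the $H_1$ part following from the strong consistency of Theorem \ref{thm:lfgn_u_stats} just as the paper notes. One remark: your degeneracy claim $\mathbb{E}[\tilde{K}(x,X)]=0$ holds only for the centred kernel $K(x,y)-\mathbb{E}[K(x,X)]-\mathbb{E}[K(X,y)]+\mathbb{E}[K(X,X')]$, so your argument implicitly (and correctly) fixes a sign typo in the paper's definition of $\tilde{K}$, which writes $-\mathbb{E}[K(X,X')]$ instead of $+\mathbb{E}[K(X,X')]$.
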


This theorem indicates that if one wants to have a test with level $\alpha$, one should take the $1-\alpha$ quantile of the above asymptotic distribution as rejection threshold. In order to approximate this quantile, we use an approach proposed by \citet{gretton2009fast} that aims at estimating the eigenvalues in Theorem \ref{thm:H0_tcl} using the empirical Gram matrix spectrum. This approach relies on the following theorem (Theorem 1 of \citeauthor{gretton2009fast}, \citeyear{gretton2009fast}).
\begin{thm}\label{thm:eigenv_threshold}
Let $(\lambda_{\ell})_{\ell\ge 1}$ be the eigenvalues defined in Theorem \ref{thm:H0_tcl} and $(G_{\ell})_{\ell\ge 1}$ be a sequence of i.i.d. standard normal variables. For $N=m+n$, we define the centred Gram matrix $\hat{A}= HAH $ where $A=(K(Z_i,Z_j))_{1\le i,j \le N}$ ($Z_i = X_i$ if $i\le m$ and $Z_i = Y_{i-m}$ if $i > m$) and $H = I_{N} -\frac{1}{N}\mathbf{1}\mathbf{1}^T$. If $\sum_{l=1}^{+\infty}\sqrt{\lambda_l} < \infty$ and $m/N \rightarrow \rho \in (0,1)$ as $N \rightarrow +\infty$, then under $H_0$:
\begin{equation}
\frac{1}{\hat{\rho}(1-\hat{\rho})}\sum_{\ell=1}^{+\infty} \frac{\nu_{\ell}}{N}\left(G_{\ell}^2-1\right)\underset{N\to +\infty}{\overset{\mathcal{L}}{\rightarrow}}\frac{1}{\rho(1-\rho)}\sum_{\ell=1}^{+\infty} \lambda_{\ell}\left(G_{\ell}^2-1\right)
\end{equation}
where $\hat{\rho}=m/N$ and the $\nu_{\ell}$'s are the eigenvalues of $\hat{A}$.
\end{thm}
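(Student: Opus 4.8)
The plan is to exploit the fact that, under $H_0$, the pooled sample $Z_1,\dots,Z_N$ consists of i.i.d.\ draws from the common law $\mathbb{P}:=\mathbb{P}_X=\mathbb{P}_Y$, so that $\nu_\ell/N$ and $\lambda_\ell$ are, respectively, the empirical and the population eigenvalues of one and the same centred covariance operator. Since $\hat\rho=m/N\to\rho\in(0,1)$ is deterministic, the prefactor $1/(\hat\rho(1-\hat\rho))$ converges to $1/(\rho(1-\rho))$, so by Slutsky's lemma it suffices to prove that $\sum_{\ell\ge1}\frac{\nu_\ell}{N}(G_\ell^2-1)$ converges in law to $\sum_{\ell\ge1}\lambda_\ell(G_\ell^2-1)$, with $(G_\ell)_{\ell\ge1}$ independent of the data.

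First I would recast the eigenvalues operator-theoretically. Writing $\phi$ for the canonical feature map of $K$ into its RKHS, $\mu=\mathbb{E}[\phi(X)]$, and $\tilde K(x,y)=\langle\phi(x)-\mu,\phi(y)-\mu\rangle$ for the centred kernel of Theorem \ref{thm:H0_tcl}, the $\lambda_\ell$ are the eigenvalues of the covariance operator $C=S_{\tilde K}$, while the nonzero eigenvalues of the centred Gram matrix $\hat A=HAH$, divided by $N$, coincide by the standard Gram/Nyström correspondence with the nonzero eigenvalues of the empirical covariance operator $\hat C_N=\frac1N\sum_{i=1}^N(\phi(Z_i)-\hat\mu)\otimes(\phi(Z_i)-\hat\mu)$, with $\hat\mu=\frac1N\sum_i\phi(Z_i)$; these are the $\nu_\ell/N$, padded by zeros for $\ell>N$. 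Both $\hat C_N$ and $C$ are positive, self-adjoint, Hilbert--Schmidt operators, and assumption (i) of Theorem \ref{thm:H0_tcl} gives $\|C\|_{\mathrm{HS}}^2=\sum_\ell\lambda_\ell^2=\mathbb{E}[\tilde K(X,X')^2]<\infty$.

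The heart of the argument is the spectral convergence $\hat C_N\to C$. A law of large numbers for the Hilbert--Schmidt--valued i.i.d.\ average, together with the vanishing rank-one correction $(\hat\mu-\mu)\otimes(\hat\mu-\mu)$ produced by empirical centring, yields $\|\hat C_N-C\|_{\mathrm{HS}}\to0$ in probability. A Hoffman--Wielandt / Lidskii inequality for self-adjoint Hilbert--Schmidt operators, applied with both spectra arranged in decreasing order, then gives $\sum_{\ell\ge1}(\nu_\ell/N-\lambda_\ell)^2\le\|\hat C_N-C\|_{\mathrm{HS}}^2\to0$ in probability. The role of the hypothesis $\sum_\ell\sqrt{\lambda_\ell}<\infty$ is to guarantee, through the random-operator spectral approximation theory of Koltchinskii--Giné, that the total bias and fluctuation of the finite-rank empirical eigenvalues across the \emph{whole} spectrum vanish; it comfortably implies the Hilbert--Schmidt condition $\sum_\ell\lambda_\ell^2<\infty$ used above. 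I expect this spectral-convergence step to be the main obstacle: the empirical operator has finite rank while $C$ has an infinite spectrum, and the decreasing ordering used to match $\nu_\ell$ with $\lambda_\ell$ must be exactly the ordering that pairs them with a common $G_\ell$ in the next step.

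Finally I would transfer the $\ell^2$ convergence of the eigenvalues into convergence in law of the series. Set $\hat\lambda^{(N)}=(\nu_\ell/N)_{\ell\ge1}$ and $\lambda=(\lambda_\ell)_{\ell\ge1}$, both in $\ell^2$, and define $W:\ell^2\to L^2(\Omega)$ by $W(a)=\sum_\ell a_\ell(G_\ell^2-1)$. Because the $G_\ell^2-1$ are independent, centred, with variance $2$, one has $\mathbb{E}[(W(a)-W(b))^2]=2\|a-b\|_{\ell^2}^2$, so $W$ is Lipschitz. Coupling the two series through the same sequence $(G_\ell)$, independent of the data, and conditioning on $Z_1,\dots,Z_N$ gives
\begin{equation}
\mathbb{E}\big[(W(\hat\lambda^{(N)})-W(\lambda))^2 \mid Z_1,\dots,Z_N\big]=2\,\|\hat\lambda^{(N)}-\lambda\|_{\ell^2}^2\xrightarrow[N\to+\infty]{}0 \text{ in probability.}
\end{equation}
A conditional Chebyshev bound, split on the event that $\|\hat\lambda^{(N)}-\lambda\|_{\ell^2}^2$ is small, then shows $W(\hat\lambda^{(N)})-W(\lambda)\to0$ in probability, hence $W(\hat\lambda^{(N)})\overset{\mathcal{L}}{\to}W(\lambda)$. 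Reinstating the prefactors and applying Slutsky's lemma once more yields the stated convergence in law.
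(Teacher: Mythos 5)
Your proposal is correct, but there is no internal proof to compare it against: the paper quotes this statement directly as Theorem 1 of \citet{gretton2009fast} and gives no argument of its own, the cited proof resting in turn on the spectral approximation theory of Koltchinskii and Gin\'e. Your reconstruction shares the architecture of that cited proof — Slutsky reduction to the unnormalized series, identification of the $\nu_\ell/N$ with the eigenvalues of the empirically centred covariance operator via the Gram/covariance-operator duality, spectral convergence, then transfer to the weighted chi-squared series — but handles the two key steps differently and more economically. Where Gretton et al.\ invoke Koltchinskii--Gin\'e's $\delta_1$-type convergence of spectra (which is where $\sum_\ell\sqrt{\lambda_\ell}<\infty$ is genuinely needed) and bound the expected absolute difference of the two series, you obtain $\delta_2$-convergence from a strong law of large numbers in the separable Hilbert space of Hilbert--Schmidt operators combined with the Hoffman--Wielandt/Lidskii inequality for positive compact self-adjoint operators, and you transfer it through the isometry $\mathbb{E}[(W(a)-W(b))^2]=2\|a-b\|_{\ell^2}^2$ plus a conditional Chebyshev bound. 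This route requires only $\sum_\ell\lambda_\ell<\infty$: note that what the Hilbert-space LLN actually needs is the trace condition $\mathbb{E}\|\phi(Z)-\mu\|^2=\sum_\ell\lambda_\ell<\infty$, not the Hilbert--Schmidt bound $\sum_\ell\lambda_\ell^2<\infty$ you point to, so the hypothesis $\sum_\ell\sqrt{\lambda_\ell}<\infty$ enters only through this weaker consequence — a feature of your argument, not a defect. To make the sketch airtight you should (a) cite a precise infinite-dimensional Hoffman--Wielandt statement (it holds for positive compact self-adjoint operators with decreasingly ordered spectra padded by zeros, and is available in Koltchinskii--Gin\'e), (b) justify that $S_{\tilde K}$ acting on $L_2(\mathbb{P}_X)$ and the RKHS covariance operator $C$ share their nonzero spectrum (write $S_{\tilde K}=TT^*$ and $C=T^*T$ for the natural embedding $T$), and (c) make explicit that you read $\tilde K(x,y)=\langle\phi(x)-\mu,\phi(y)-\mu\rangle$, i.e.\ with $+\mathbb{E}[K(X,X')]$, which is the intended centred kernel despite the sign typo in the display of Theorem \ref{thm:H0_tcl}.
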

Therefore, we can approximate the asymptotic distribution in Theorem \ref{thm:H0_tcl} by:
\begin{equation}
\frac{1}{\hat{\rho}(1-\hat{\rho})}\sum_{\ell=1}^{R} \frac{\nu_{\ell}}{N} \left(G_l^2-1\right)
\label{eq:H0_MMD_distribution}
\end{equation}
with $R$ the truncation order and $\nu_1 > \nu_2 > \dots > \nu_R$ are the $R$ first eigenvalues of $\hat{A}$ in decreasing order. A rejection threshold is then obtained by simulating several realizations of the above random variable and then computing their empirical quantile at level $1-\alpha$. 

\section{Implementation and numerical results}\label{sec:num_results}
The objective of this section is to show the practical interest of the two-sample test described in the previous section for the validation of real-world economic scenarios. In the sequel, we refer to the two-sample test as the signature-based validation test. As a preliminary, we discuss the challenges implied by the practical implementation of the signature-based validation test.

\subsection{Practical implementation of the signature-based validation test}
\subsubsection{Signature of a finite number of observations}
In practice, the historical paths of the risk drivers under interest are not observed continuously and the data are only available on a discrete time grid with steps that are generally not very small (one day for the realized volatility data and one month for the inflation considered in Section \ref{sec:hist_data}). One has to embed these observations into a continuous path in order to be able to compute the signature and \textit{a fortiori} the MMD with the model that we want to test. The two most popular embeddings in the literature are the linear and the rectilinear embeddings. The former one consists in a plain linear interpolation of the observations, while the latter consists in an interpolation using only parallel shifts with respect to the $x$ and $y$-axis. In the sequel, we will only use the linear embedding as this choice does not seem to have a material impact on the information contained in the obtained signature as shown by the comparative study led by Fermanian (section 4.2 of  \citeauthor{fermanian2021embedding}, \citeyear{fermanian2021embedding}). Even if several of the models considered below have paths with unbounded variation, we discretize them with a rather crude step (consistent with insurance practice) and then we compute the signatures of the continuous paths with bounded variation deduced by the linear interpolation embedding. \\

\subsubsection{Extracting information of a one-dimensional process}\label{sec:transf}

Remember that if $X$ is a one-dimensional bounded variation path, then its signature over $[0,T]$ is equal to the powers of the increment $X_T-X_0$. As a consequence, finer information than the global increment about the evolution of $X$ on $[0,T]$ is lost. In practice, $X$ is often of dimension 1. Indeed, the validation of real-world economic scenarios is generally performed separately for each risk driver and not all together as a first step and only then the validation is performed on a more global level (e.g. through the analysis of the copula among risk drivers). For example, when validating risk drivers such as equity or inflation, $X$ represents an equity or inflation index which is one-dimensional. Moreover, even for multidimensional risk drivers such as interest rates, practitioners tend to work on the marginal distributions (e.g. they focus separately on each maturity for interest rates) so that the validation is again one-dimensional. In order to nonetheless be able to capture finer information about the evolution of $X$ on $[0,T]$, one can apply a transformation to $X$ to recover a multi-dimensional path. The two most widely used transformations are the time transformation and the lead-lag transformation. The time transformation consists in considering the two-dimensional path $\hat{X}_t:t\mapsto (t,X_t)$ instead of $t\mapsto X_t$. The lead-lag transformation has been introduced by \citet{gyurko2013extracting} in order to capture the quadratic variation of a path in the signature. Let $X$ be a real-valued stochastic process and $0=t_0< t_{1/2} < t_1 < \dots < t_{N-1/2} < t_N = T$ be a partition of $[0,T]$. The lead-lag transformation of $X$ on the partition $(t_{i/2})_{i=0,\dots,2N}$ is the two-dimensional path $t\mapsto (X_t^{lead},X_t^{lag})$ defined on $[0,T]$ where:
\begin{enumerate}
\item the lead process $t\mapsto X_t^{lead}$ is the linear interpolation of the points $(X^{lead}_{t_{i/2}})_{i=0,\dots,2N}$ with:
\begin{equation}
X^{lead}_{t_{i/2}} = 
\left\{
\begin{array}{ll}
X_{t_j} & \text{if } i = 2j \\
X_{t_{j+1}} & \text{if } i = 2j+1 
\end{array}
\right.
\end{equation}
\item the lag process $t\mapsto X_t^{lag}$ is the linear interpolation of the points $(X^{lag}_{t_{i/2}})_{i=0,\dots,2N}$ with:
\begin{equation}
X^{lag}_{t_{i/2}} = 
\left\{
\begin{array}{ll}
X_{t_j} & \text{if } i=2j \\
X_{t_j} & \text{if } i=2j+1
\end{array}
\right.
\end{equation}
\end{enumerate}
Illustrations of the lead and lag paths as well as of the lead-lag transformation are provided in Figure \ref{fig:lead-lag}. Note that by summing the area of each rectangle formed by the lead-lag transformation between two orange dots in Figure \ref{fig:lead_lag_2d}, one recovers the sum of the squares of the increments of $X$ which is exactly the quadratic variation of $X$ over the partition $(t_i)_{i=0,\dots,N}$. The link between the signature and the quadratic variation is more formally stated in Proposition \ref{prop:levy_area_ll}.
\begin{rk}
The choice of the dates $(t_{i+1/2})_{i=0,\dots,N-1}$ such that $t_i < t_{i+1/2} < t_{i+1}$ can be arbitrary since the signature is invariant by time reparametrization (see Proposition \ref{prop:time_reparam}). 
\end{rk}
\begin{figure}
\centering
\begin{subfigure}{.5\textwidth}
  \centering
  \includegraphics[width=\linewidth]{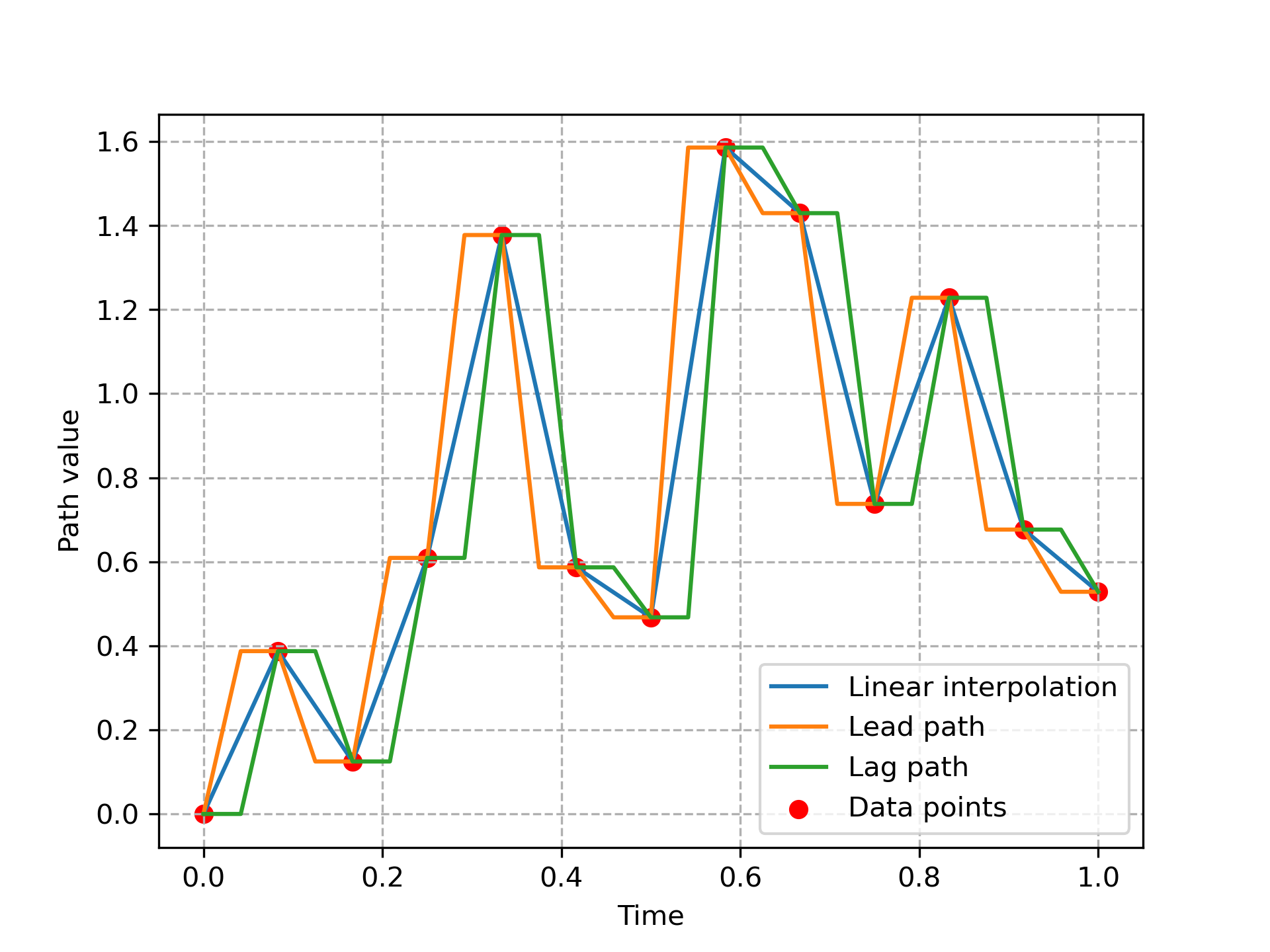}
  \caption{Lead and lag paths}
\end{subfigure}%
\begin{subfigure}{.5\textwidth}
  \centering
  \includegraphics[width=\linewidth]{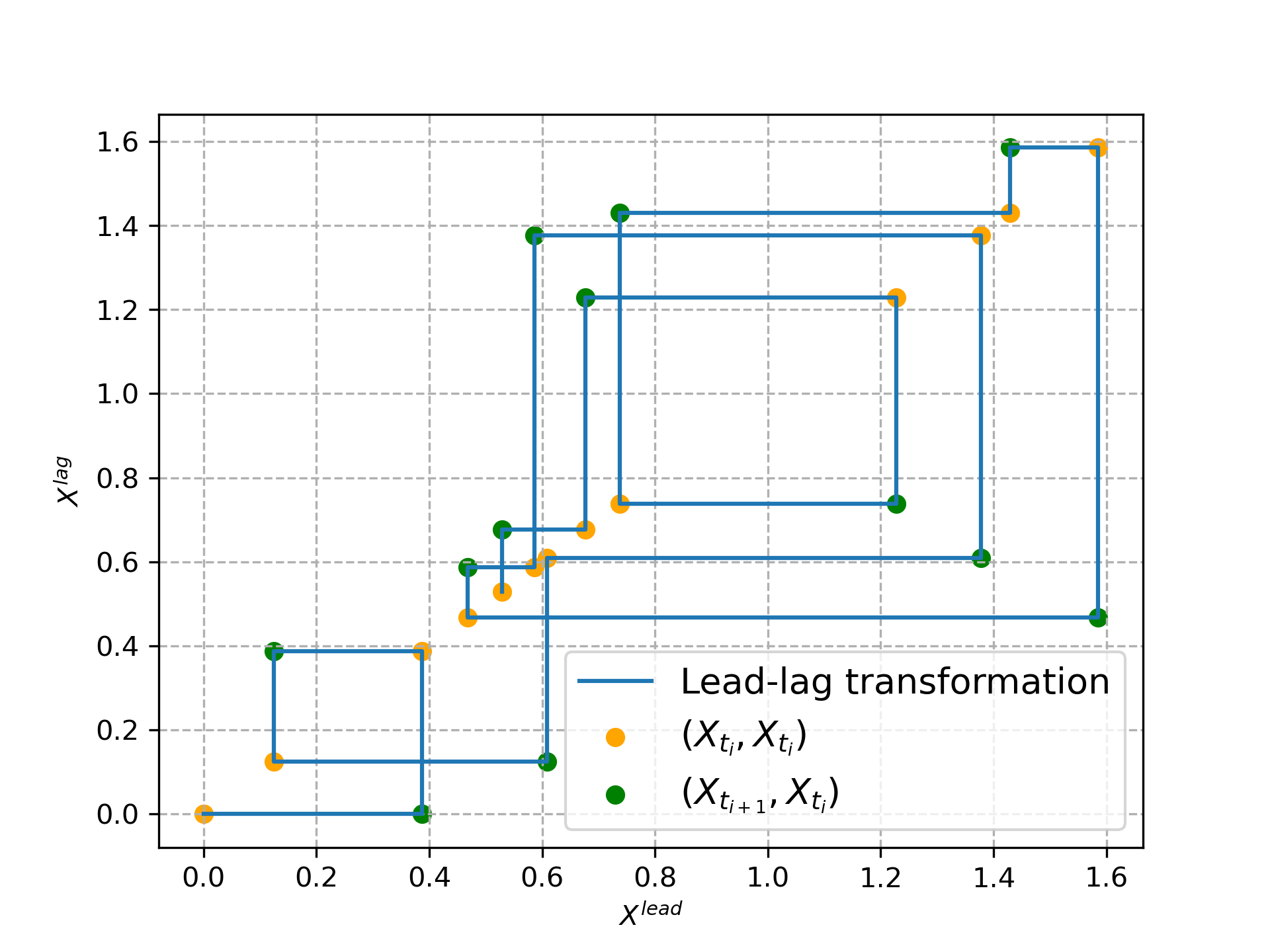}
  \caption{Lead-lag transformation}
  \label{fig:lead_lag_2d}
\end{subfigure}
\caption{Lead-lag illustrations}
\label{fig:lead-lag}
\end{figure}
 
A third transformation can be constructed from the time and the lead-lag transformations. Indeed, given a finite set of observations $(X_{t_i})_{i=0,\dots,N}$, one can consider the three-dimensional path $t\mapsto (t,X_t^{lead},X_t^{lag})$. We call this transformation the time lead-lag transformation. Finally, the cumulative lead-lag transformation is the two-dimensional path $t\mapsto (\tilde{X}_t^{lead},\tilde{X}_t^{lag})$ where $\tilde{X}_t^{lead}$ (resp. $\tilde{X}_t^{lag}$) is the lead (resp. lag) transformation of the points $(\tilde{X}_{t_{i}})_{i=0,\dots,N+1}$ with:
\begin{equation}\label{eq:cum_lead_lag}
\tilde{X}_{t_{i}} = \left\{
\begin{array}{ll}
0 & \text{for $i=0$}\\
\sum_{k=0}^{i-1} X_{t_{k}} & \text{for $i=1,\dots,N+1$}.
\end{array}
\right.
\end{equation}
This transformation has been introduced by \citet{chevyrev2016primer} because its signature is related to the statistical moments of the initial path $X$. More details on this point are provided in Remark \ref{rk:cum_lead_lag}. 

\subsubsection{Numerical computation of the signature and the MMD}
The numerical computation of the signature is performed using the \texttt{iisignature} Python package (version 0.24) of \cite{reizenstein2020}. The \texttt{signatory} Python package of \cite{kidger2020} could also be used for faster computations. Because the signature is an infinite object, we compute in practice only the truncated signature up to some specified order $R$. The influence of the truncation order on the statistical power of the test will be discussed in Section \ref{sec:sim_data_results}. Note that we will focus on truncation orders below 8 as there is not much information beyond this order given that we work with a limited number of observations of each path which implies that the approximation of high order iterated integrals will rely on very few points. 

\subsection{Analysis of the statistical power on synthetic data}\label{sec:sim_data_results}
In this subsection, we apply the signature-based validation on simulated data, i.e. the two samples of stochastic processes are numerically simulated. Keeping in mind insurance applications, the two-sample test is structured as follows:
\begin{itemize}
\item Each path is obtained by a linear interpolation from a set of 13 equally-spaced observations of the stochastic process under study. The first observation (i.e. the initial value of each path) is the same across all paths. These 13 observations represent monthly observations over a period of one year. In insurance practice, computational time constraints around the asset and liability models generally limit the simulation frequency to a monthly time step. The period of one year is justified by the fact that one needs to split the historical path under study into several shorter paths to get a test sample of size greater than 1. Because the number of historical data points is limited (about 30 years of data for the major equity indices), a split frequency of one year appears reasonable given the monthly observation frequency. 
\item The two samples are assumed to be of different sizes (i.e. $m\neq n$ with the notations of section \ref{sec:two_sample_test}). Several sizes $m$ of the first sample will be tested while the size $n$ of the second sample is always set to 1000. The first sample representing historical paths, we will mainly consider small values of $m$ as $m$ will in practice be equal to the number of years of available data (considering a split of the historical path in one-year length paths as discussed above). For the second sample which consists in simulated paths (for example by an Economic Scenario Generator), we take 1000 simulated paths as it corresponds to a lower bound of the number of scenarios typically used by insurers. Numerical tests (not presented here) have also been performed with a sample size of 5000 instead of 1000 but the results were essentially the same.
\item As we aim to explore the capability of the two-sample test to capture properties of the paths that cannot be captured by looking at their marginal distribution at some dates, we impose that the distributions of the increment over $[0,T]$ of the two compared stochastic processes are the same. In other words, we only compare stochastic processes $(X_t)_{0\le t \le T}$ and $(Y_t)_{0\le t\le T}$ satisfying $X_T-X_0\overset{\mathcal{L}}{=}Y_T-Y_0$ with $T=1$ year. This constraint is motivated by the fact that two models that do not have the same marginal one-year distribution are already discriminated by the current point-in-time validation methods. Moreover, it is a common practice in the insurance industry to calibrate the real-world models by minimizing the distance between model and historical moments so that the model marginal distribution is often close to the historical marginal distribution. Because of this constraint, we will remove the first order term of the signature in our estimation of the MMD because it is equal to the global increment $X_T-X_0$ and it does not provide useful statistical information. 
\end{itemize}

In order to measure the ability of the signature-based validation to distinguish two different samples of paths, we compute the statistical power of the underlying test, which is the probability to correctly reject the null hypothesis under $H_1$, by simulating 1000 times two samples of sizes $m$ and $n$ respectively and counting the number of times that the null hypothesis (the stochastic processes underlying the two samples are the same) is rejected. The rejection threshold is obtained using the empirical Gram matrix spectrum as described in Section \ref{sec:two_sample_test}. First, we generate a sample of size $m$ and a sample of size $n$ under $H_0$ to compute the eigenvalues of the matrix $\hat{A}$ in Theorem \ref{thm:eigenv_threshold}. Then, we keep the 20 first eigenvalues in decreasing order and we perform 10000 simulations of the random variable in Equation (\ref{eq:H0_MMD_distribution}) whose distribution approximates the  MMD asymptotic distribution under $H_0$. The rejection threshold is obtained as the empirical quantile of level $99$\% of these samples. For each experiment presented in the sequel, we also simulate 1000 times two samples of sizes $m$ and $n$ under $H_0$ and we count the number of times that the null hypothesis is rejected with this rejection threshold, which gives us the type I error. This step allows us to verify that the computed rejection threshold provides indeed a test of level $99$\% in all experiments. As we obtain a type I error around 1\% in all numerical experiments, we conclude to the accuracy of the computed rejection threshold.  \\

We will now present numerical results for three stochastic processes: the fractional Brownian motion, the Black-Scholes dynamics and the rough Heston model. Two time series models will also be considered: a regime-switching $AR(1)$ process and a random walk with i.i.d. Gamma noises. Finally, we study a two-dimensional process where one component evolves according to the rough Heston dynamics and the other one evolves according to a regime-switching $AR(1)$ process. The choice of this catalogue of models is both motivated by:
\begin{enumerate}
  \item the fact that the models are either widely used or realistic for the simulation of specific risk drivers (equity volatility, equity prices and inflation) and 
  \item the fact that the models allow to illustrate different pathwise properties such as Hölder regularity, volatility, autocorrelation or regime switches. 
\end{enumerate}  

\subsubsection{The fractional Brownian motion}\label{sec:fBm}
The fractional Brownian motion (fBm) is a generalization of the standard Brownian motion that, outside this standard case, is neither a semimartingale nor a Markov process and whose sample paths can be more or less regular than those of the standard Brownian motion. More precisely, it is the unique centred Gaussian process $(B_t^H)_{t\ge 0}$ whose covariance function is given by:
\begin{equation}
\mathbb{E}[B_s^HB_t^H]=\frac{1}{2}\left(s^{2H}+t^{2H}-(s-t)^{2H} \right)\quad \forall s,t\ge 0
\end{equation}
where $H\in (0,1)$ is called the Hurst parameter. Taking $H=1/2$, we recover the standard Brownian motion. The fBm exhibits two interesting pathwise properties.
\begin{enumerate}
\item The fBm sample paths are $H-\epsilon$ Hölder for all $\epsilon >0$. Thus, when $H<1/2$, the fractional Brownian motion sample paths are rougher than those of the standard Brownian motion and when $H>1/2$, they are smoother. 
\item The increments are correlated: if $s<t<u<v$, then $\mathbb{E}[(B_t^H-B_s^H)(B_v^H-B_u^H)]$ is positive if $H>1/2$ and negative if $H<1/2$ since $x\mapsto x^{2H}$ is convex if $H>1/2$ and concave otherwise. 
\end{enumerate}
One of the main motivations for studying this process is the work of \citet{gatheral2018volatility} which shows that the historical volatility of many financial indices essentially behaves as a fBm with a Hurst parameter around 10\%. \\

In the following numerical experiments, we will compare samples from a fBm with Hurst parameter $H$ and samples from a fBm with a different Hurst parameter $H'$. One can easily check that $B_1^H$ has the same distribution than $B_1^{H'}$ since $B_1^H$ and $B_1^{H'}$ are both standard normal variables. Thus, the constraint that both samples have the same one-year marginal distribution (see the introduction of Section \ref{sec:sim_data_results}) is satisfied. Note that a variance rescaling should be performed if one considers a horizon that is different from 1 year. We start with a comparison of fBm paths having a Hurst parameter $H=0.1$ with fBm paths having a Hurst parameter $H=0.2$ using the lead-lag transformation. In Figure \ref{fig:0.1vs0.2}, we plot the statistical power as a function of the truncation order $R$ for different values of the first sample size $m$ (we recall that the size of the second sample is fixed to 1000). 
\begin{figure}[ht]
\centering
\includegraphics[scale=0.6]{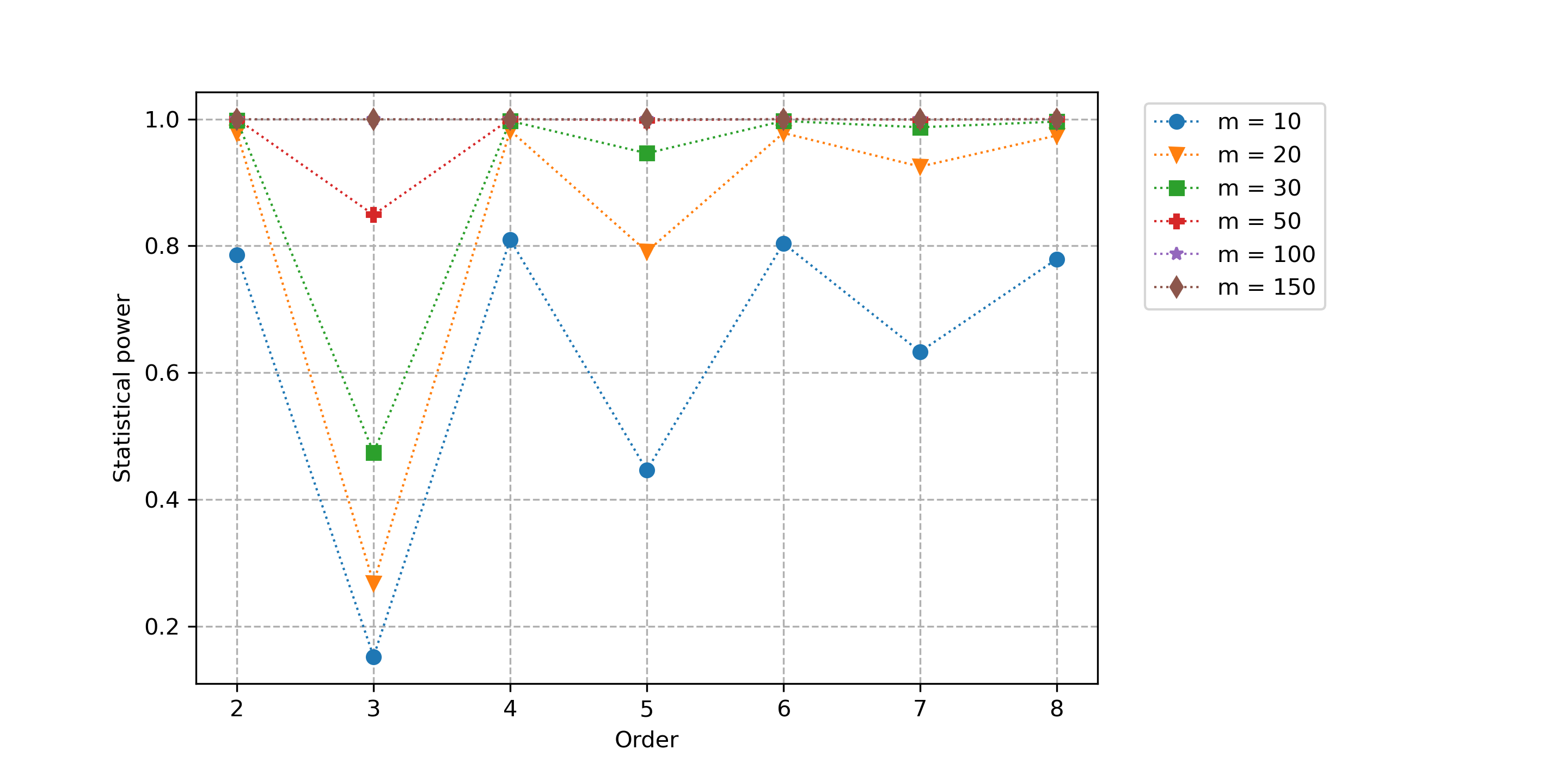}
\caption{Statistical power (as a function of the truncation order and the size of the first sample) of the signature-based validation test when comparing fBm paths with $H=0.1$ against fBm paths with $H'=0.2$. Note that the lead-lag transformation is applied.}
\label{fig:0.1vs0.2}
\end{figure}
We observe that even with small sample sizes, we already obtain a power close to 1 at order 2. Note that the power does not increase with the order but decreases at odd orders when the sample size is smaller than 50. This can be explained by the fact that the odd-order terms of the signature of the lead-lagged fBm are linear combinations of monomials in $B_{t_1}^H,\dots,B^H_{t_{N}}$ that are of odd degree. Since $(B_t^H)$ is a centred Gaussian process, the expectation of these terms are zero no matter the value of $H$. As a consequence, the contribution of odd-order terms of the signature to the MMD is the same under $H_0$ and under $H_1$. If we conduct the same experiment for $H=0.1$ versus $H'=0.5$ (corresponding to the standard Brownian motion), we obtain cumulated powers greater than 99\% for all tested orders and sample sizes (even $m=10$) even if the power of the odd orders is small (below 45\%). This is very promising as it shows that the signature-based validation allows distinguishing very accurately rough fBm paths (with a Hurst parameter in the range of those estimated by \citeauthor{gatheral2018volatility}, \citeyear{gatheral2018volatility}) from standard Brownian motion paths even with small sample sizes. \\

Note that in these numerical experiments, we have not used any tensor normalization while it is a key ingredient in Theorem \ref{thm:application_rkhs_sig}. This is motivated by the fact that the power is much worse when we use Chevyrev and Oberhauser's normalization (see Example 4 in \citeauthor{chevyrev2018signature}, \citeyear{chevyrev2018signature}) as one can see on Figure \ref{fig:c&0_normalization}. These lower powers can be understood as a consequence of the fact that the normalization is specific to each path. So the normalization can bring the distribution of $S(\hat{B}^{H'})$ closer to the one of $S(\hat{B}^H)$ than without normalization so that it is harder to distinguish them at fixed sample size. Moreover, if the normalization constant $\lambda$ is smaller than 1 (which we observe numerically), the high-order terms of the signature become close to zero and their contribution to the MMD is not material. For $H=0.1$ versus $H'=0.5$, we observed that the powers remain very close to 100\%. \\

\begin{figure}[ht]
  \centering
  \begin{subfigure}{.5\textwidth}
    \centering
    \includegraphics[width=0.9\linewidth]{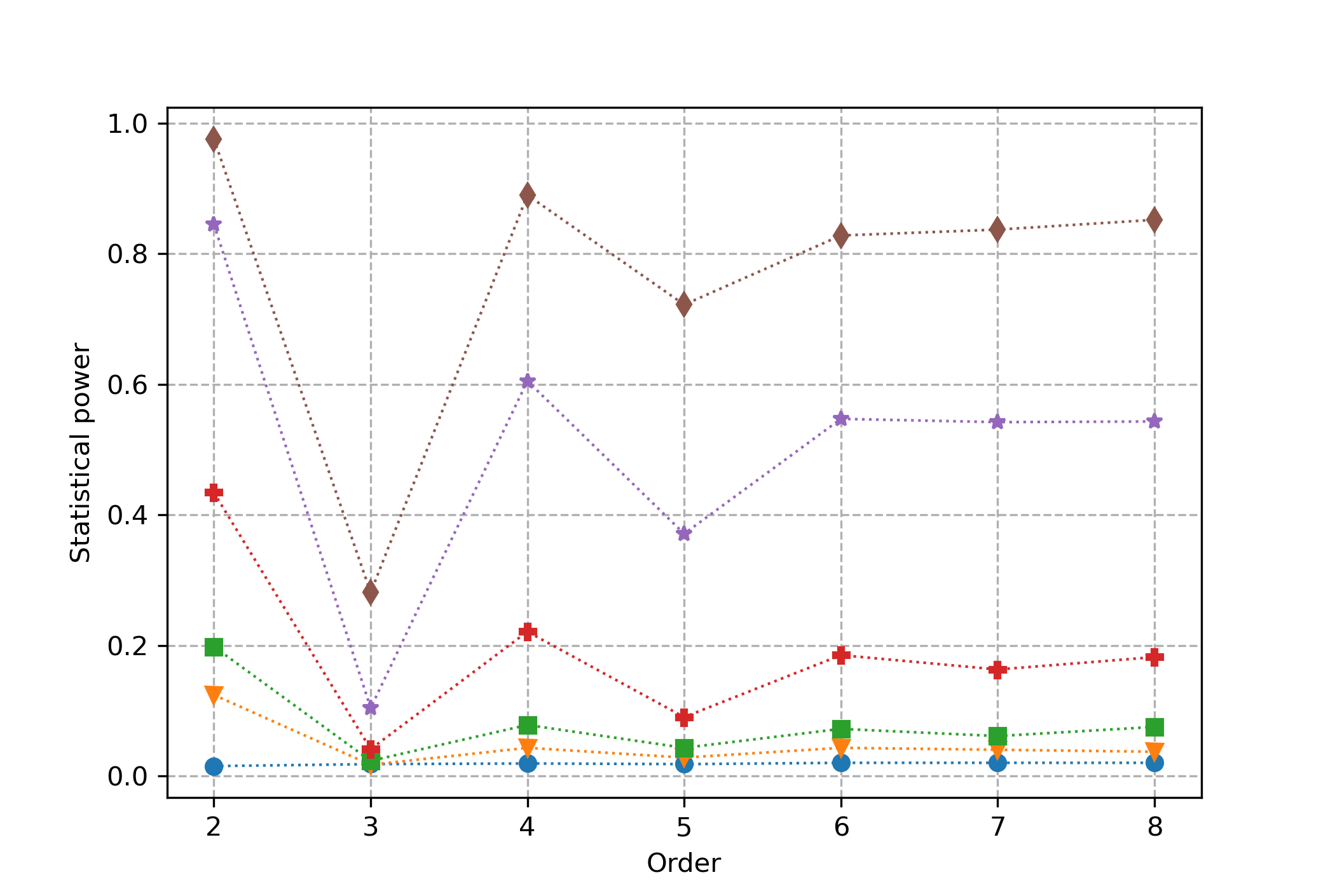}
    \caption{With Chevyrev and Oberhauser's normalization.}
    \label{fig:c&0_normalization}
  \end{subfigure}%
  \begin{subfigure}{.5\textwidth}
    \centering
    \includegraphics[width=10cm]{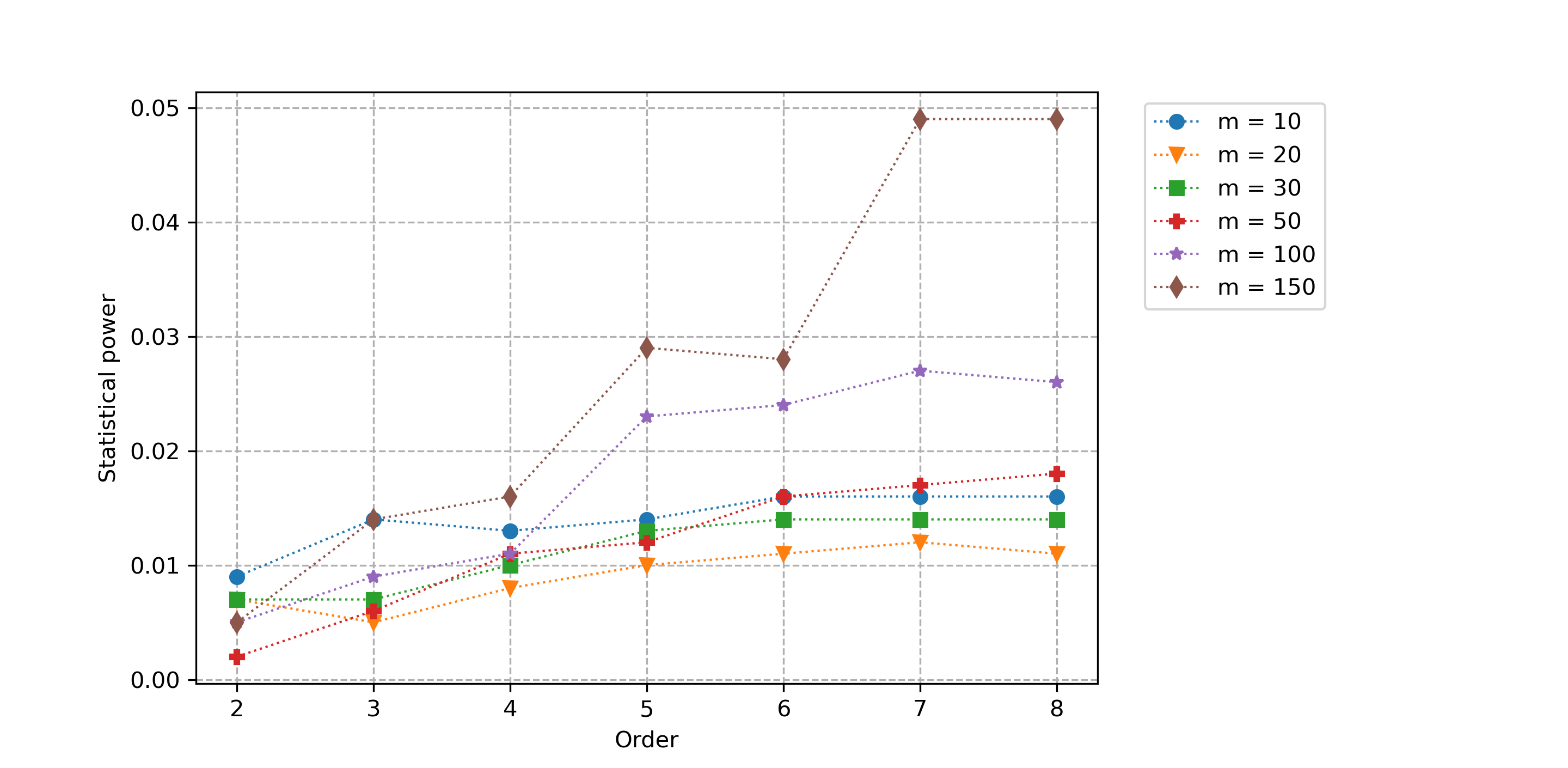}
    \caption{With the time transformation.}
    \label{fig:time}
  \end{subfigure}
  \caption{Statistical power (as a function of the truncation order and the size of the first sample) of the signature-based validation test when comparing fBm paths with $H=0.1$ against fBm paths with $H'=0.2$. }
  \end{figure}

Also note that the lead-lag transformation is key for this model as replacing it by the time transformation (see Section \ref{sec:transf}) results in much lower statistical powers, see Figure \ref{fig:time}. This observation is consistent with the previous study from \citet{fermanian2021embedding} which concluded that the lead-lag transformation is the best choice in a learning context. \\

\begin{figure}[ht]
  \centering
  \includegraphics[scale=0.6]{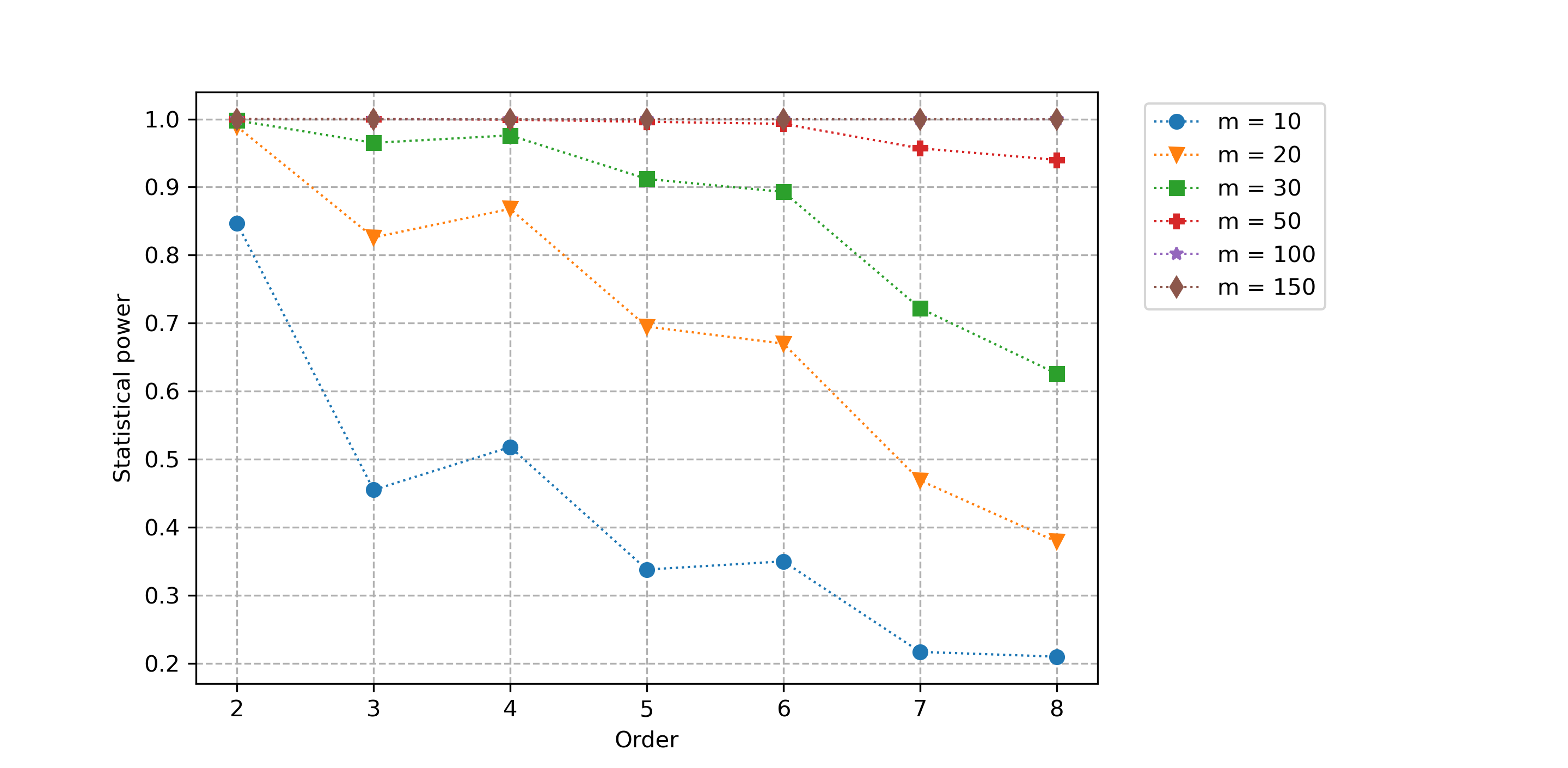}
  \caption{Statistical power (as a function of the truncation order and the size of the first sample) of the log-signature-based validation test when comparing fBm paths with $H=0.1$ against fBm paths with $H'=0.2$. Note that the lead-lag transformation is applied. } 
  \label{fig:log_signature}
  \end{figure}

Before moving to the Black-Scholes dynamics, we present results of the test when the signature is replaced by the log-signature. The log-signature is a more parsimonious - though equivalent - representation of paths than the signature as it contains more zeros. A formal definition of the log-signature and more insights can be found in the Appendix. Although no information is lost by the log-signature, it is not clear whether the MMD is still a metric when the signature is replaced by the log-signature in the kernel (see Remark \ref{rk:log-signature_caveat}). Numerically, the log-signature shows satisfying powers for $H=0.1$ versus $H'=0.2$, especially when the truncation order is 2 (see Figure \ref{fig:log_signature}). One can remark in particular that the power decreases with the order. This observation likely results from the $1/n$ factor appearing in the log-signature formula (Equation (\ref{eq:log})) which makes high-order terms of the log-signature small so that the even-order terms no longer compensate the odd-order terms.

\subsubsection{The Black-Scholes dynamics}\label{sec:bs_dyn}
In the well-known Black-Scholes model \citep{black1973pricing}, the evolution of the stock price $(S_t)_{t\ge 0}$ is modelled using the following dynamics: 
\begin{equation}
dS_t = \mu S_t dt + \gamma(t)S_t dW_t, \quad S_0=s_0
\end{equation}
where $\mu\in \mathbb{R}$, $\gamma$ is a deterministic function of time and $(W_t)_{t\ge 0}$ is a standard Brownian motion. Because of its simplicity, this model is still widespread in the insurance industry, in particular for the modelling of equity and real estate indices. \\

Like for the fractional Brownian motion, we want to compare two parametrizations of this model that share the same one-year marginal distribution. For this purpose, we consider a Black-Scholes dynamics (BSd) with drift $\mu$ and constant volatility $\sigma$ and a BSd with the same drift $\mu$ but with a deterministic volatility $\gamma(t)$ satisfying $\int_0^1 \gamma^2(s)ds = \sigma^2$ which guarantees that the one-year marginal distribution constraint is met. For the sake of simplicity, we take a piecewise constant volatility with $\gamma(t) = \gamma_1$ if $t\in [0,1/2)$ and $\gamma(t)=\gamma_2$ if $t\in [1/2,1]$. In this setting, the objective of the test is no longer to distinguish two stochastic processes with different regularity but two stochastic processes with different volatility which is \textit{a priori} more difficult since the volatility is not directly observable in practice. When $\mu=0$, the first two terms of the signature of the lead-lag transformation have the same asymptotic distribution in the two parametrizations as the time step converges to 0. This is explained in Example \ref{ex:time_reparametrization}. We conjecture that this result extends to the full signature so that the two models cannot be distinguished using the signature. \\

We start by comparing BSd paths with $\mu = 0.05$ and $\sigma = 0.2$ and BSd paths with $\mu=0.05$, $\gamma_1=0$ and $\gamma_2 = \sqrt{2}\sigma$ using the lead-lag transformation. We consider a zero volatility on half of the time interval in order to obtain very different paths in the two samples. Despite this extreme parametrization, we obtain very low powers (barely above 2\% even for a sample size of $m=500$). It seems that the constraint of same quadratic variation in both samples makes the signatures from sample 1 too close from those of sample 2. \\

In order to improve the power of the test, we can consider another data transformation which allows capturing information about the initial one-dimensional path in a different manner. We observed that the time lead-lag transformation (see Section \ref{sec:transf}) allowed to better distinguish the signatures from the two parametrizations given above. However, the order of magnitude of the differentiating coefficients of the signature (i.e. the coefficients of the signature that are materially different between the two samples) was significantly smaller than the one of non-differentiating coefficients so that the former were hidden by the latter when computing $MMD_{m,n}$. To address this issue, we applied a rescaling to all coefficients of the signature to make sure they are all of the same order of magnitude. Concretely, given two samples $\mathcal{S}_1 =\{X_1,\dots,X_m\}$ and $\mathcal{S}_2 = \{Y_1,\dots,Y_n\}$ of $d$-dimensional paths, the rescaling is performed as follows.
\begin{enumerate}
\item For all $i\in \{1,\dots, m\}$ and for all $j\in\{1,\dots,n\}$, compute $S(X_i)$ and $S(Y_j)$.
\item For all $\ell \in \{1,\dots, R\}$ and for all $I=(i_1,\dots,i_{\ell})\in \{1,\dots, d\}^{\ell}$, compute
\begin{equation}
M_I^{\ell} = \max \left( \max_{i=1,\dots,m} \left|\mathbf{X}^{\ell}_{i,I} \right|,\max_{j=1,\dots,n} \left|\mathbf{Y}^{\ell}_{j,I} \right|  \right)
\end{equation}
where $\mathbf{X}^{\ell}_{i,I}$ (resp. $\mathbf{Y}^{\ell}_{j,I}$) is the coefficient at position $I$ of the $\ell$-th term of the signature of $X_i$ (resp. $Y_j$).  
\item For all $i\in \{1,\dots, m\}$, for all $j\in \{1,\dots, n\}$, for all $\ell \in \{1,\dots, R\}$ and for all $I=(i_1,\dots,i_{\ell})\in \{1,\dots, d\}^{\ell}$, compute the rescaled signature as:
\begin{equation}
\mathbf{\hat{X}}^{\ell}_{i,I} = \frac{\mathbf{X}^{\ell}_{i,I}}{M_I^{\ell}} \quad \text{and} \quad \mathbf{\hat{Y}}^{\ell}_{j,I} = \frac{\mathbf{Y}^{\ell}_{j,I}}{M_I^{\ell}}.
\end{equation}
\end{enumerate}
This procedure guarantees that all coefficients of the signature lie within $[-1,1]$. Using this normalization for the time lead-lag rescaling, the power of the test is significantly better than with the plain lead-lag transformation, as shown in Figure \ref{fig:gbm_timeleadlag_sig}. In Figure \ref{fig:gbm_timeleadlag_logsig}, we show that the power can be further improved by considering the log-signature instead of the signature. Note that the increase of the power starts at order 3 which makes sense since order 2 only allows to capture the quadratic variation over $[0,T]$ (which is the same in the two parametrizations) while order 3 allows to capture the evolution of the quadratic variation over time. Alternatively, one can consider, instead of the time lead-lag transformation, the cumulative lead-lag transformation (see Equation (\ref{eq:cum_lead_lag})) on the log-paths (i.e. on $\log S_t$) which provides even better statistical powers as shown in Figure \ref{fig:gbm_cumsumleadlag}. \\
\begin{figure}[ht]
\centering
\begin{subfigure}{.5\textwidth}
  \centering
  \includegraphics[width=0.9\linewidth]{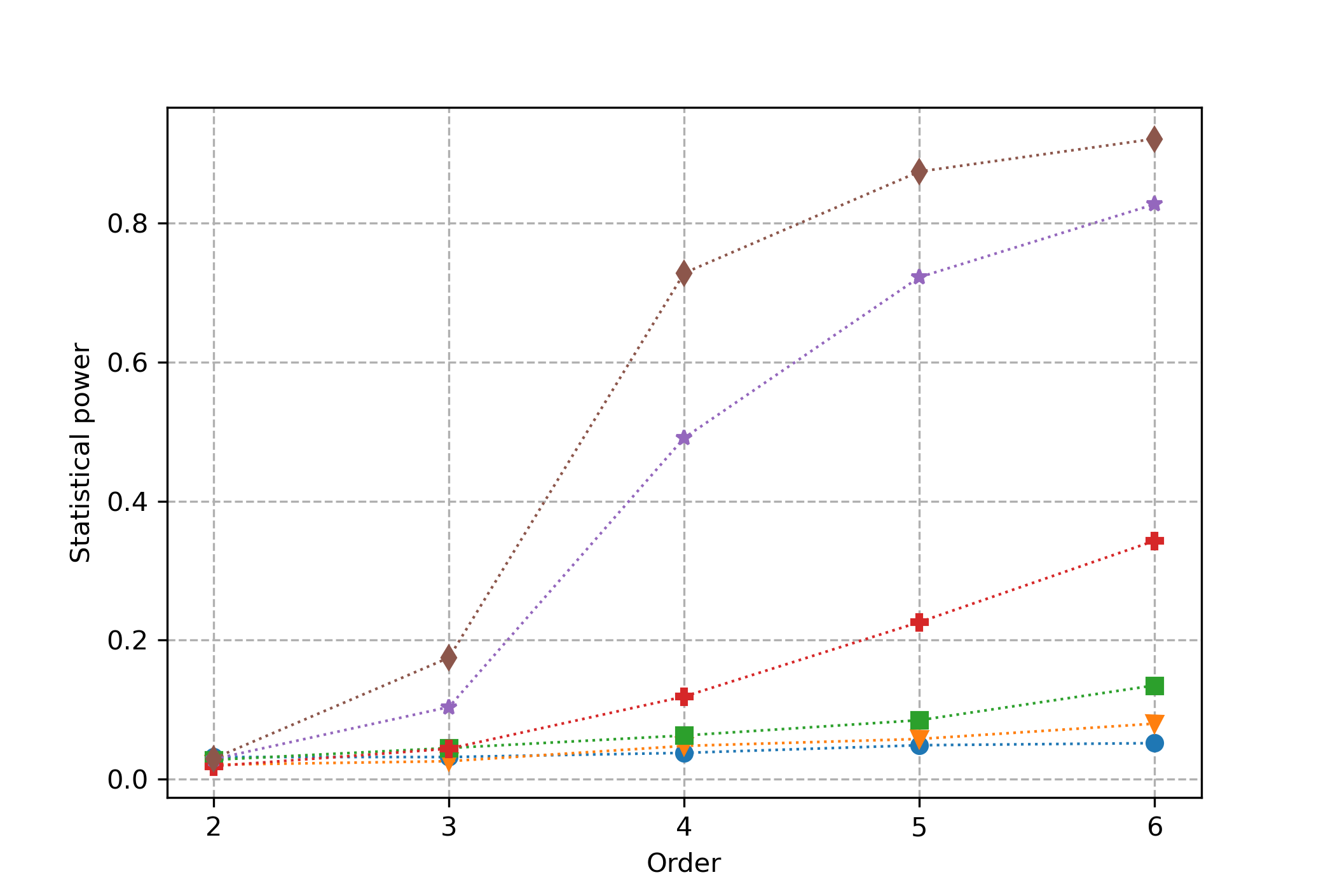}
  \caption{With the signature}
  \label{fig:gbm_timeleadlag_sig}
\end{subfigure}%
\begin{subfigure}{.5\textwidth}
  \centering
  \includegraphics[width=10cm]{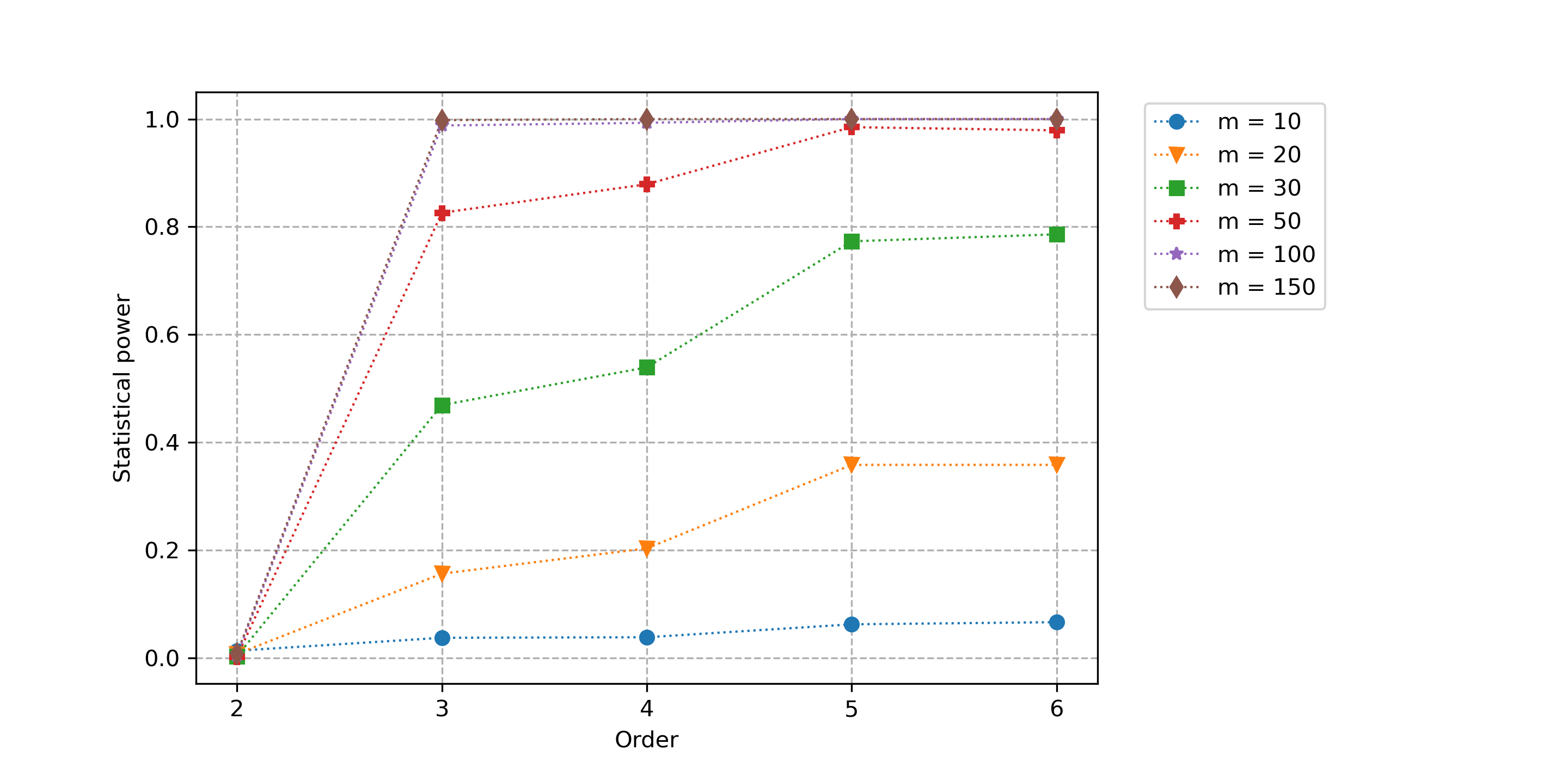}
  \caption{With the log-signature}
  \label{fig:gbm_timeleadlag_logsig}
\end{subfigure}
\caption{Statistical power (as a function of the truncation order and the size of the first sample) of the validation test when comparing constant volatility BSd paths against piecewise constant BSd paths using the time lead-lag transformation with the rescaling procedure.}
\end{figure}
\begin{figure}[ht]
\centering
\includegraphics[scale=0.6]{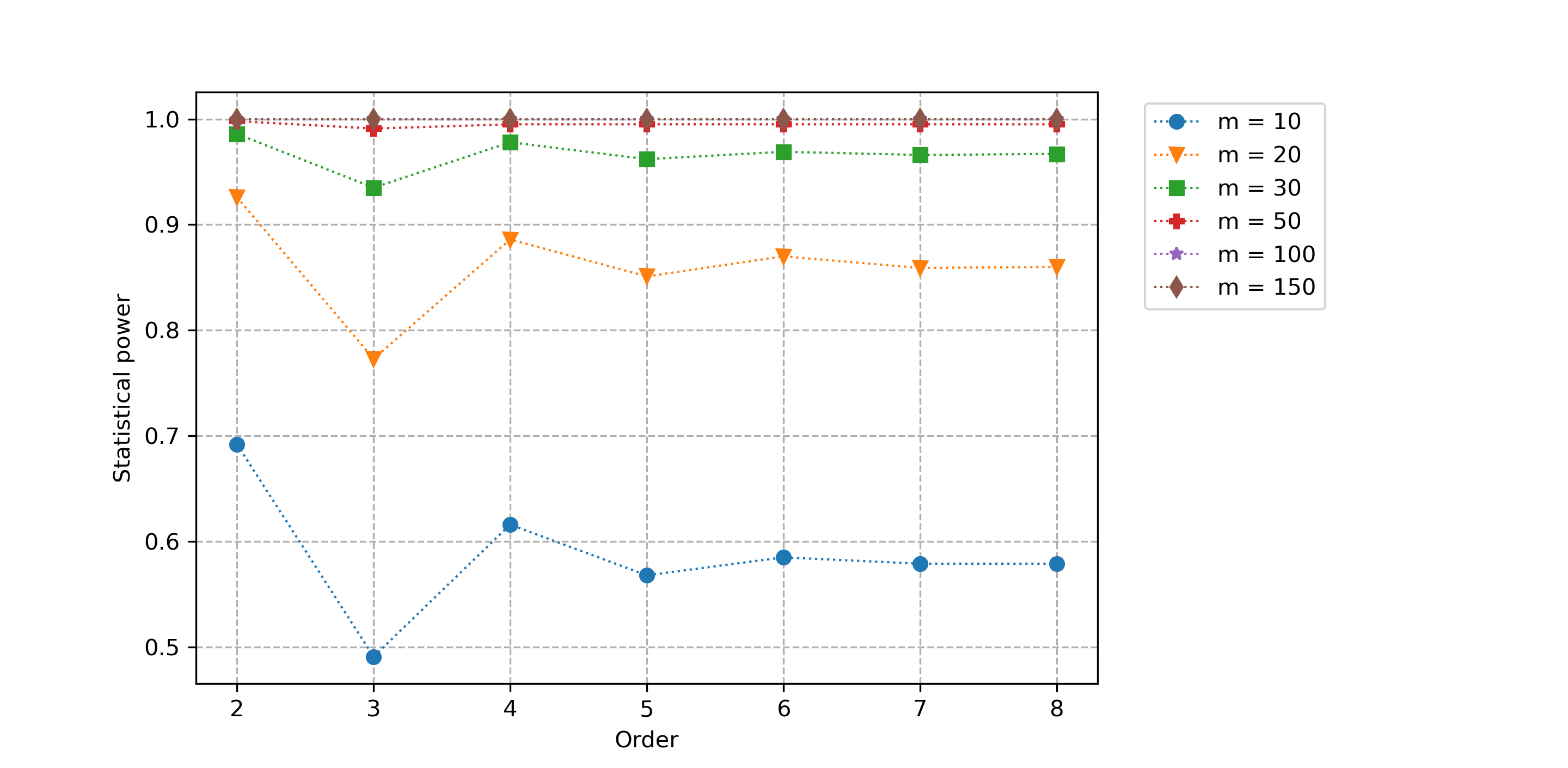}
\caption{Statistical power (as a function of the truncation order and the size of the first sample) of the signature-based validation test when comparing constant volatility BSd paths against piecewise constant BSd paths using the cumulative lead-lag transformation on the log-paths.}
\label{fig:gbm_cumsumleadlag}
\end{figure}

We also considered a slight variation of the BSd that has autocorrelation. Let $(t_i)_{0\le i \le N}$ be an equally-spaced partition of $[0,1]$ with $\Delta t = t_{i+1}-t_i = 1/N$. The autocorrelated discretized BSd $(S^C_{t_i})_{0\le i \le N}$ is defined as follows:
\begin{equation}
\left\{
\begin{array}{rcll}
S^C_{t_{i+1}} &=& S^C_{t_i}\exp\left((\mu^C-\frac{\gamma_{t_i}^2}{2})\Delta t + \gamma_{t_i}\sqrt{\Delta t}G_{i+1} \right),& i \in \{0,\dots, N-1\} \\
S^C_{t_0}&=&s_0&
\end{array}
\right.
\end{equation}
where $(G_i)_{1\le i \le N}$ is a sequence of standard normal random variables satisfying:
\begin{equation}
Cov(G_i,G_j) = 
\left\{
\begin{array}{ll}
1 & \text{if } i = j \\
\rho & \text{if } |i- j|=1  \\
0 & \text{otherwise.}
\end{array}
\right.
\end{equation}
The covariance matrix of $(G_1,\dots,G_N)$ is positive definite when $\rho \in \left(-\frac{1}{2\cos\left(\frac{\pi}{N+1}\right)}, \frac{1}{2\cos\left(\frac{\pi}{N+1}\right)} \right)$. Indeed, the covariance matrix is a tridiagonal Toepliz matrix so its eigenvalues are given by (page 59 in \citeauthor{smith1985numerical}, \citeyear{smith1985numerical}):
\begin{equation}
\lambda_k = 1+2\rho \cos \left(\frac{k\pi}{N+1} \right),\quad k=1,\dots,N.
\end{equation}
In our framework, we have $N=12$ and we can check that $[-0.5,0.5]\subset \left(-\frac{1}{2\cos\left(\frac{\pi}{N+1}\right)}, \frac{1}{2\cos\left(\frac{\pi}{N+1}\right)} \right)$. In Figure \ref{fig:autocorrelated_gBm}, we compare BSd paths with $\mu = 0.05$ and $\sigma = 0.2$ and autocorrelated BSd paths with correlation $\rho \in \{-0.5,-0.4,\dots, 0.5\}$ and a piecewise constant volatility like in the previous setting but with $\gamma_1 = \sigma/\sqrt{2}$ and $(\gamma_2,\mu^C)$ chosen such that $S^C_1$ has the same distribution as $s_0e^{\mu-\sigma^2/2+\sigma W_1}$. Here, the first sample size is fixed to $m=30$. While it was not possible to distinguish BSd paths with different volatility functions using the lead-lag transformation, we observe that the introduction of autocorrelation makes the distinction again possible even with a small sample size. More precisely, except if $\rho \in \{ -0.1, 0, 0.1,0.2\}$, we obtain a power greater than 90\% at order 2. We note however a decrease of the power with the truncation order as it appears that apart from the term of order 2, all the other terms of the signature are very close between the two samples. 
\begin{figure}[ht]
  \centering
  \includegraphics[scale=0.6]{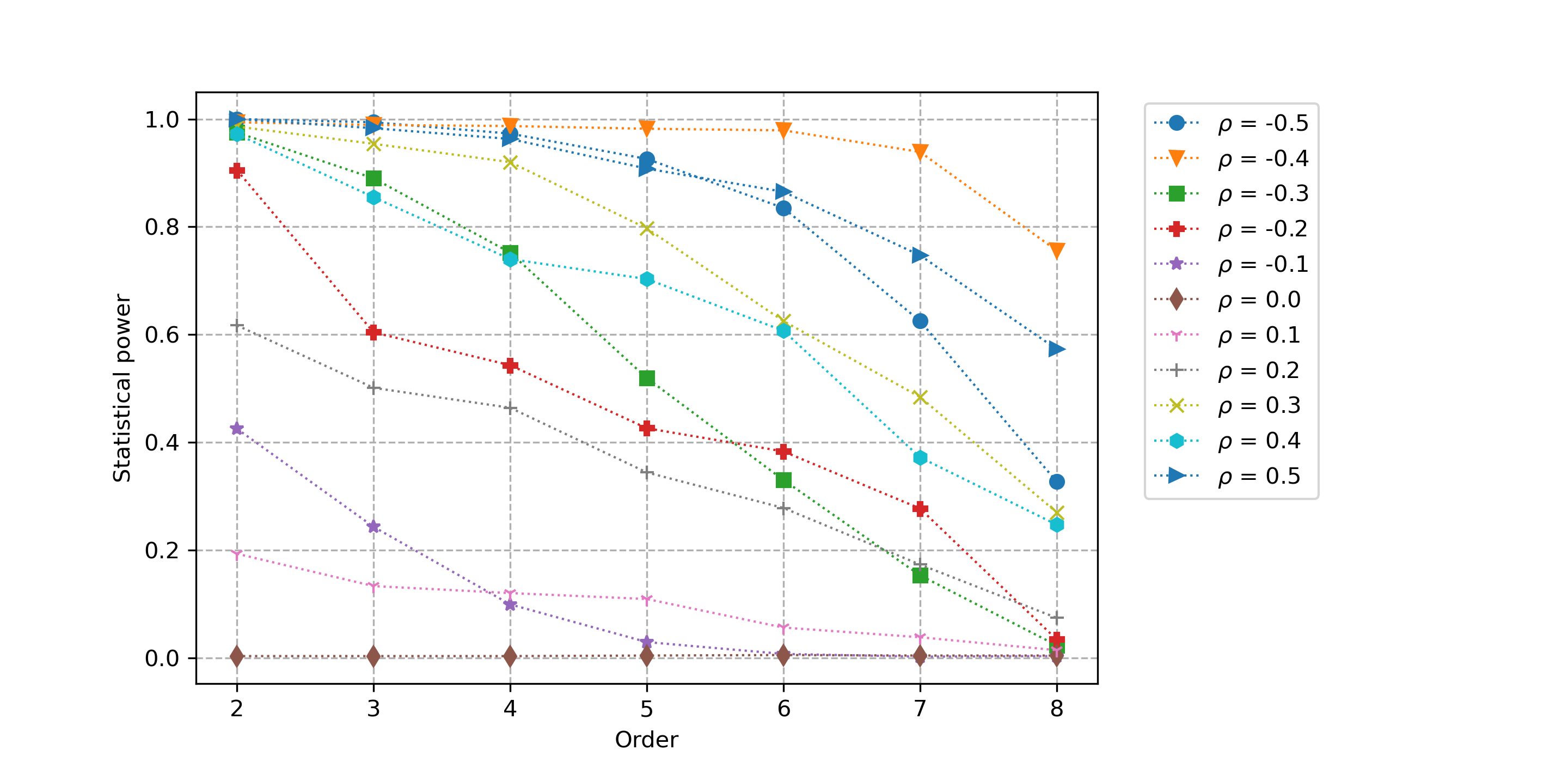}
  \caption{Statistical power (as a function of the truncation order and the correlation of the autocorrelated BSd paths) of the log-signature-based validation test when comparing constant volatility BSd paths against autocorrelated BSd paths using the lead-lag transformation with the rescaling procedure. The size of the first sample is fixed to $m=30$. }
  \label{fig:autocorrelated_gBm}
  \end{figure}

\subsubsection{The rough Heston model}\label{sec:rh}
First introduced as the limit of a microscopic model of prices on financial markets by \cite{eleuch2018}, the rough Heston model can be seen as a fractional extension of the classic Heston model. Indeed, its dynamics can be written as follows:
\begin{equation}\label{eq:rough_heston}
  \begin{array}{rcl}
    dS_t &=& S_t \sqrt{V_t} dW_t \\
    V_t &=& V_0 + \displaystyle\int_0^t \frac{(t-s)^{H-1/2}}{\Gamma(H+1/2)}\left((\theta -\lambda V_s)ds + \sigma\sqrt{V_s}dBs \right)
  \end{array}
\end{equation}
where $\Gamma$ is the gamma function, $S_0,V_0,\theta,\lambda,\sigma >0$, $H\in (0,1/2]$ and $\rho \in [-1,1]$ is the correlation between the Brownian motions $W$ and $B$. Note that for $H=1/2$, $V$ is a Cox-Ingersoll-Ross (CIR) process so that we recover the dynamics of the classic Heston model. Although mainly studied for applications under the risk-neutral probability (pricing and hedging options), the rough Heston model has also been used under the real-world probability in the context of portfolio selection (\citeauthor{han2021}, \citeyear{han2021} and \citeauthor{abijaber2021}, \citeyear{abijaber2021}) and optimal reinsurance (\citeauthor{ma2023}, \citeyear{ma2023}). \\

The purpose of this section is to compare price paths (i.e. paths of $S$ with the above notation) from the rough Heston model to price paths from the classic Heston model. In particular, let us insist on the fact that we do not include the simulations of the volatility process $V$ in the two samples on which the signature-based validation test is applied because this process is not observable in practice. The spirit of this comparison is quite similar to the one described in the previous section: we evaluate the ability of the test to distinguish two stochastic processes having different volatilities. However, the present comparison is more challenging than the one involving the Black-Scholes dynamics since the difference between the two volatilities is more subtle: they differ essentially in their Hölder regularity but the overall evolution is close. We illustrate this point in Figure \ref{fig:rh_illustrations}. Note in particular that the two price paths are very close. These simulations as well as the following ones rely on the approximate discretization scheme of \cite{alfonsi2024} for the rough Heston model and the explicit scheme $E(0)$ of \cite{alfonsi2005} for the CIR in the classic Heston model.\\
\begin{figure}[ht]
  \begin{subfigure}{0.5\linewidth}
    \centering
    \includegraphics[width=\linewidth]{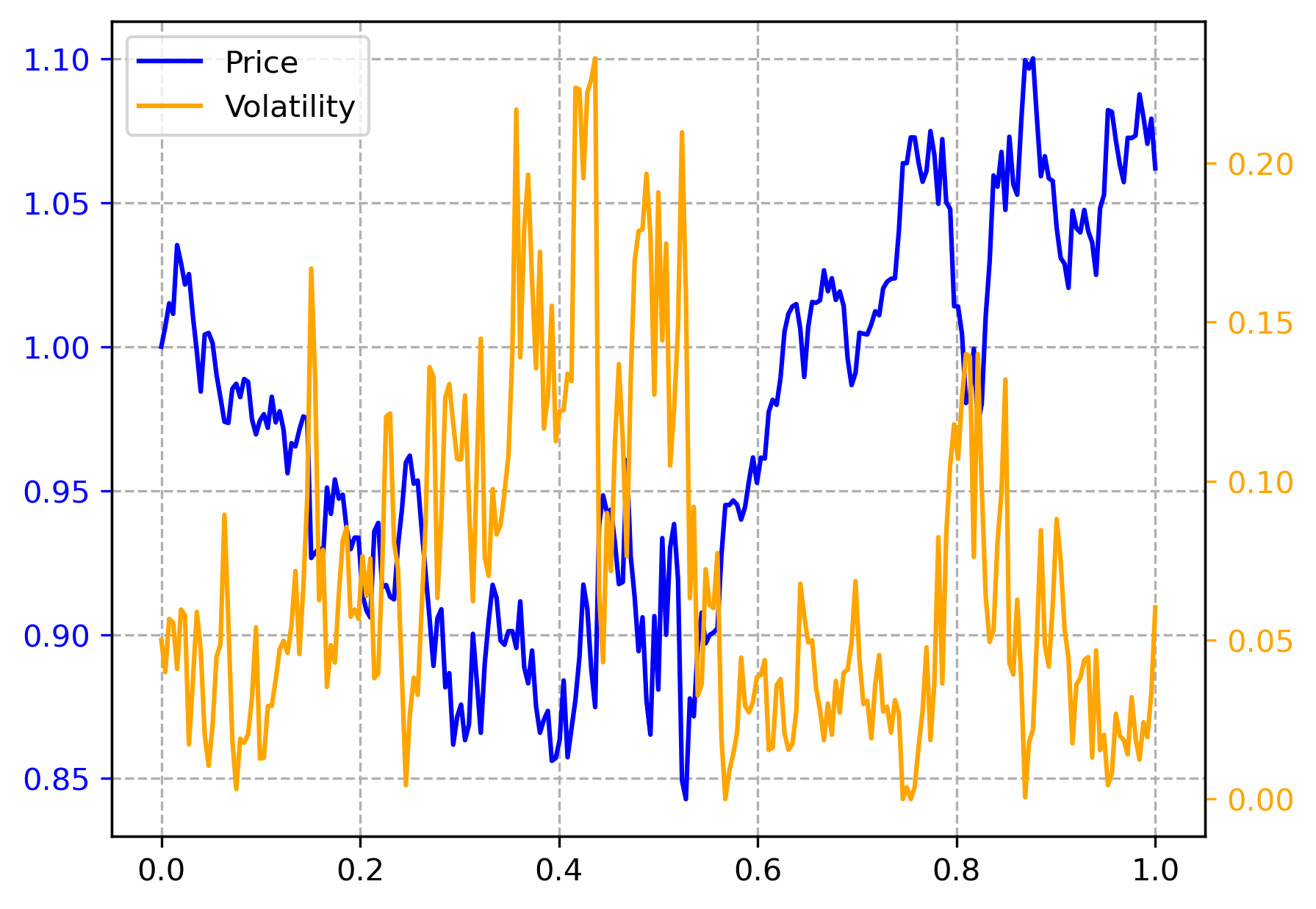}
    \caption{Rough Heston}
  \end{subfigure}
  \begin{subfigure}{0.5\textwidth}
    \centering
    \includegraphics[width=\linewidth]{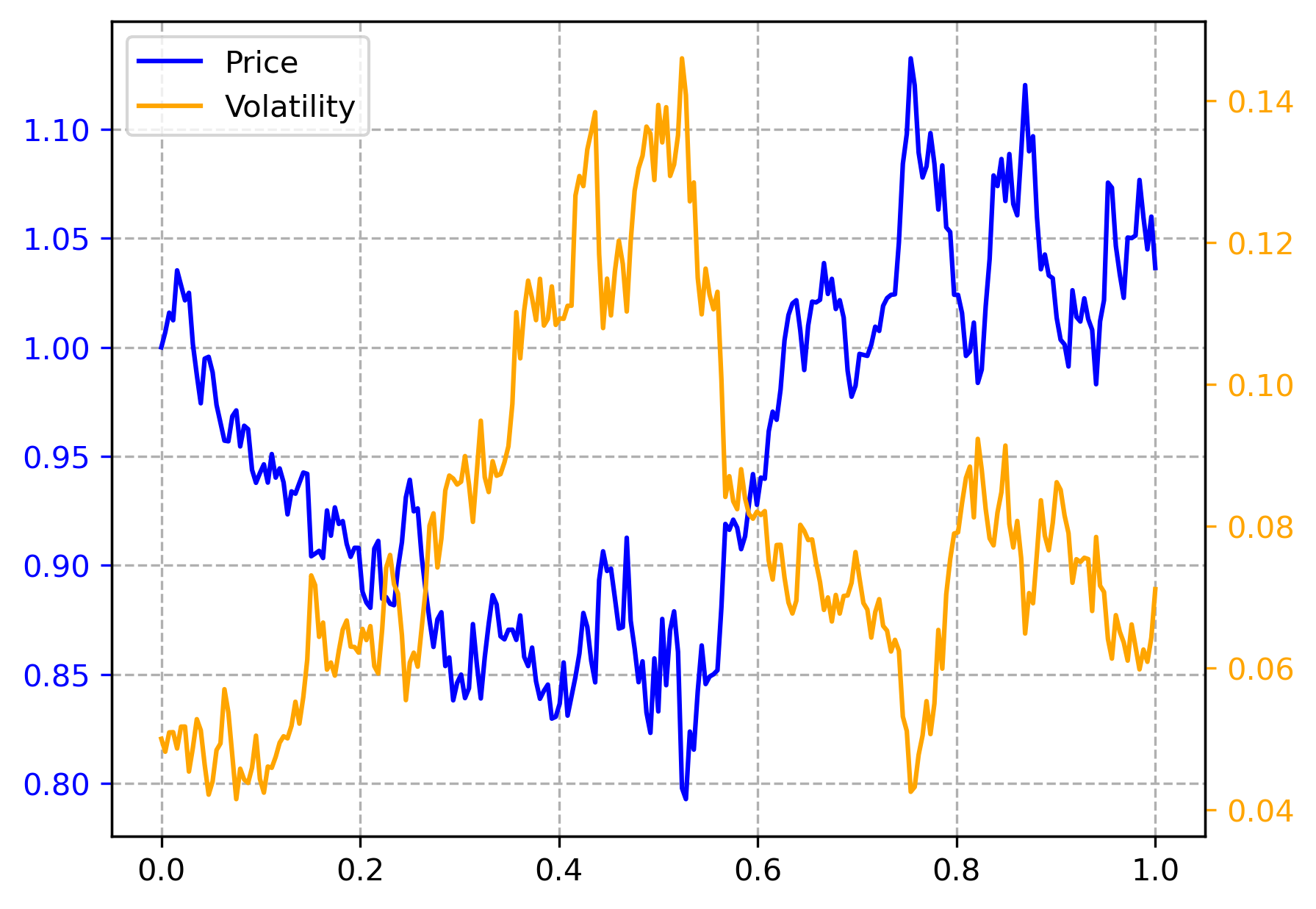}
    \caption{Classic Heston}
  \end{subfigure}
  \caption{Comparison between a sample path of $(S_t,V_t)$ from the rough Heston model with Hurst parameter $H=0.1$ and a sample path from the classic Heston model. We took $S_0=1, V_0=0.05, \theta=0.05, \lambda=0.3, \sigma=0.3$ and $\rho=-0.7$ in both cases. Moreover, both paths are sampled using the same Gaussian random variables so that they look similar. }
  \label{fig:rh_illustrations}
\end{figure}

We start with a comparison between rough Heston paths having a Hurst parameter $H=0.1$ and classic Heston paths. The parameters $S_0, V_0, \theta, \lambda, \sigma$ and $\rho$ are identical between the two models and the retained values are those presented in the caption of Figure \ref{fig:rh_illustrations}. Although this configuration does not imply that both models have exactly the same marginal one-year distributions, the two distributions are very close as shown in Figure \ref{fig:1Y_distr_comparison_rh}. Besides, a two-sample Kolmogorov-Smirnov test (applied on 1000 annual log-returns from each model) yields a $p$-value of 0.31, hence the hypothesis that the two distributions are the same cannot be rejected at standard levels. As a first step, we simulate paths with a time step $\Delta = 1/120$ over a one-year horizon and then we extract every tenth values to obtain monthly trajectories. The use of this finer discretization for the simulation aims at limiting the discretization error. In this setting, we obtain very low powers (below 2\%) be it with the lead-lag transformation, the time lead-lag transformation or the cumulative lead-lag transformations. Besides, neither the use of the log-signature instead of the signature, nor the rescaling procedure described in the previous section, lead to any improvement. Given that the price evolution is extremely close in both models (as shown in Figure \ref{fig:rh_illustrations}), the obtained powers are no surprise. In order to be able to distinguish between the two models, one has to extract somehow the volatility from the price path. However, this is illusory with monthly observations of the price over a period of one year. Therefore, as a second step, we propose to consider daily simulations of the price over one year and to compute a monthly realized volatility as 
\begin{equation}
  RV_{k\Delta} = \sqrt{\sum_{\ell = 0}^{p-1} \left(\log \frac{S_{(k-1)\Delta+(\ell+1)\Delta'}}{S_{(k-1)\Delta+\ell\Delta'}}\right)^2}
\end{equation}
where $k\in \{1,\dots,12\}$, $\Delta = 1/12$, $\Delta'=1/252$ and $p = 21$ ($252=21\times 12$ is the average number of business days in a year). Now, if we apply the lead-lag transformation to the obtained monthly realized volatility paths, we obtain statistical powers that are close to 1 at all truncation orders and for all sample sizes $m\in \{10,20,30,50,100,150\}$ with the signature-based validation test. Note that these results are consistent with those obtained when comparing fBm paths with Hurst parameter $H=0.1$ and standard Brownian motion paths. To complete this numerical study, we also test rough Heston paths with $H\in \{0.2, 0.3, 0.4\}$ against classic Heston paths. The resulting statistical powers are reported in Figure \ref{fig:rh_powers} for $m=30$ and using the log-signature (which slightly improves the powers compared to the signature as for the fBm). As expected, the powers decrease when $H$ gets closer to 0.5 (corresponding to the classic Heston) but we still have very good powers for $H=0.2$. For $H=0.3$ and $H=0.4$, one needs to work with samples of greater size to achieve a statistical power above 95\% ($m\ge 100$ for $H=0.3$ and $m\ge 600$ for $H=0.4$).

\begin{figure}[ht]
  \centering
  \includegraphics[scale=0.6]{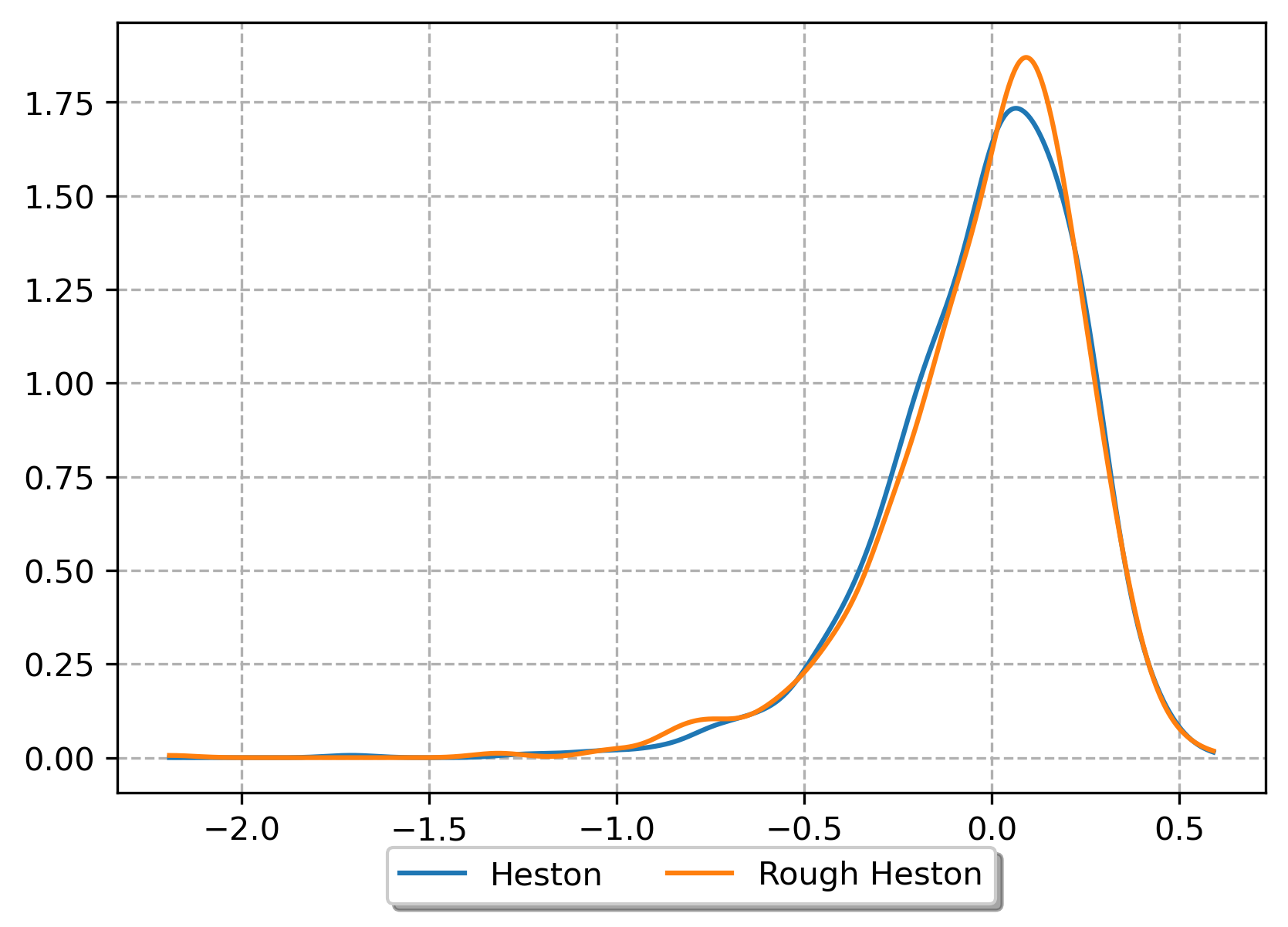}
  \caption{Kernel-density estimates of annual log-returns using Gaussian kernels. The annual log-returns are obtained by simulating 1000 one-year paths with a monthly frequency of each model. }
  \label{fig:1Y_distr_comparison_rh}
\end{figure}
\begin{figure}[ht]
  \centering
  \includegraphics[scale=0.6]{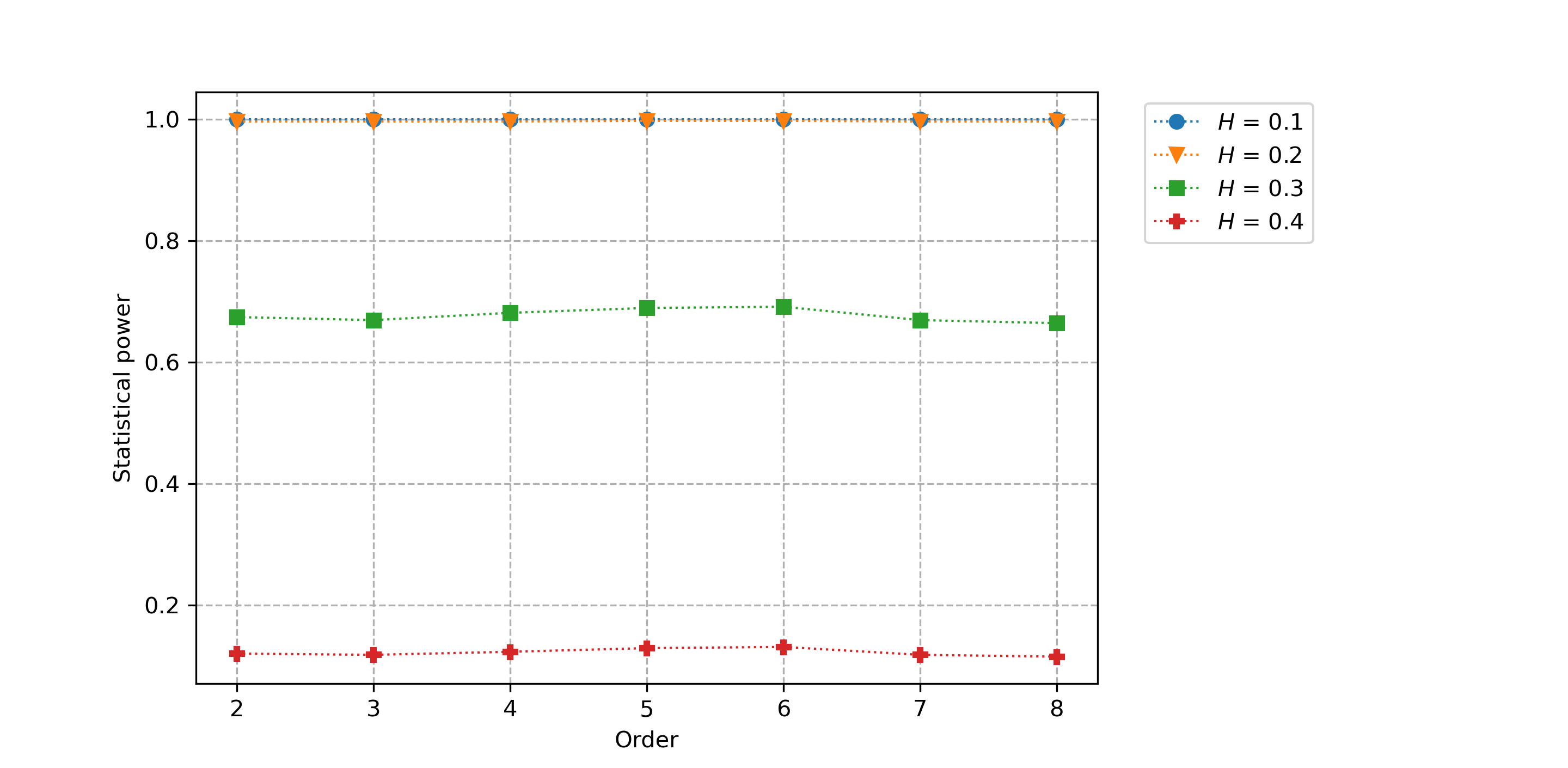}
  \caption{Statistical power of the log-signature-based validation test when comparing rough Heston paths to classic Heston paths with the lead-lag transformation on the monthly realized volatility paths extracted from the daily prices paths. The size of the first sample is fixed to $m=30$.}
  \label{fig:rh_powers}
\end{figure}

\subsubsection{Time series models}\label{sec:time_series}
Let us denote by $I_t$ the value of an index (e.g. the Consumer Price Index) at time $t$ and by $X_t = \log \frac{I_t}{I_{t-\Delta}}$ the log-return of this index over a time interval of size $\Delta$. The first model under study assumes that the log-returns evolve as a regime-switching $AR(1)$ process:
\begin{equation}\label{eq:regime_switching}
X_{t+\Delta} = \mu_{S_{t+\Delta}}+\phi_{S_{t+\Delta}}(X_t-\mu_{S_{t+\Delta}})+\sigma_{S_{t+\Delta}}\epsilon_{t+\Delta}
\end{equation}
where $X_0$ is fixed and $S$ is a time-homogeneous discrete Markov chain with $K\ge 2$ states and whose transition matrix is denoted by $P$. The noises $(\epsilon_t)_{t\ge0}$ are assumed to be i.i.d. standard normal variables. The second model under study assumes that the log-returns are i.i.d. non-centred Gamma noises whose shift, shape and scale parameters are respectively denoted by $\gamma$, $\alpha$ and $\beta$. In the following, we refer to this model as the Gamma random walk. The choice of these two models will be further motivated in Section \ref{sec:test_inflation}. \\

Note that for the regime-switching $AR(1)$ process, the annual log-return $\log \frac{I_1}{I_0}$ is distributed according to a Gaussian mixture while for the Gamma random walk, the annual log-return is Gamma distributed. Therefore, in order to still have close distributions for the annual log-return between the two models, we choose the parameters of the models so that the three first moments of the annual log-return are the same in both models. This is made possible by the fact that we can compute all moments in both cases and the fact that the parameters $\gamma$, $\alpha$ and $\beta$ of the Gamma noises $\epsilon_t$ can be explicitly written as a function of the three first moments of $\epsilon_t$ (see section 17.7 of \citeauthor{johnson1994}, \citeyear{johnson1994}). Therefore, given $X_0$, $(\mu_i)_{1\le i \le K}$,$(\phi_i)_{1\le i \le K}$, $(\sigma_i)_{1\le i \le K}$ and $P$, we can find $\gamma$, $\alpha$ and $\beta$ such that the three first moments of the annual log-return are matched. In Figure \ref{fig:1Y_distr_comparison_bis}, the distributions of the annual log-return are compared for the parameters reported in Table \ref{tab:rsar1_params}. We observe that they are close which is confirmed by a two-sample Kolmogorov-Smirnov test (applied on 1000 simulated one-year monthly paths of each model) yielding a $p$-value of 0.50. \\

\begin{table}[ht]
  \centering
  \caption{Parameters of the regime-switching $AR(1)$ process and the Gamma random walk. The parameters of the former are inspired from parameters calibrated on real inflation data that we present later while the parameters of the latter are obtained by moment-matching as described above.}
  \label{tab:rsar1_params}
  \resizebox{\textwidth}{!}{%
  \begin{tabular}{@{}ccccccccc@{}}
  \midrule
                           & \textbf{$X_0$} & $\mu_1$  & $\mu_2$ & $ \phi_1$ & $\phi_2$ & $\sigma_1$ & $\sigma_2$ & $P$                                               \\ \midrule
  Regime-switching $AR(1)$ & 0              & 0.002    & 0.006   & 0.45      & 0.6      & 0.0025     & 0.004      & $\begin{pmatrix} 0.95   &   0.05 \\0.1   &   0.9 \end{pmatrix}$ \\ \midrule
                           &                &          &         &           &          &            &            &                                                   \\ \cmidrule(r){1-4}
                           & $\gamma$       & $\alpha$ & $\beta$ &           &          &            &            &                                                   \\ \cmidrule(r){1-4}
  Gamma random walk        & -0.6880        & 0.4734   & 1.4534  &           &          &            &            &                                                   \\ \cmidrule(r){1-4}
  \end{tabular}%
  }
  \end{table}
\begin{figure}[ht]
\centering
\includegraphics[scale=0.6]{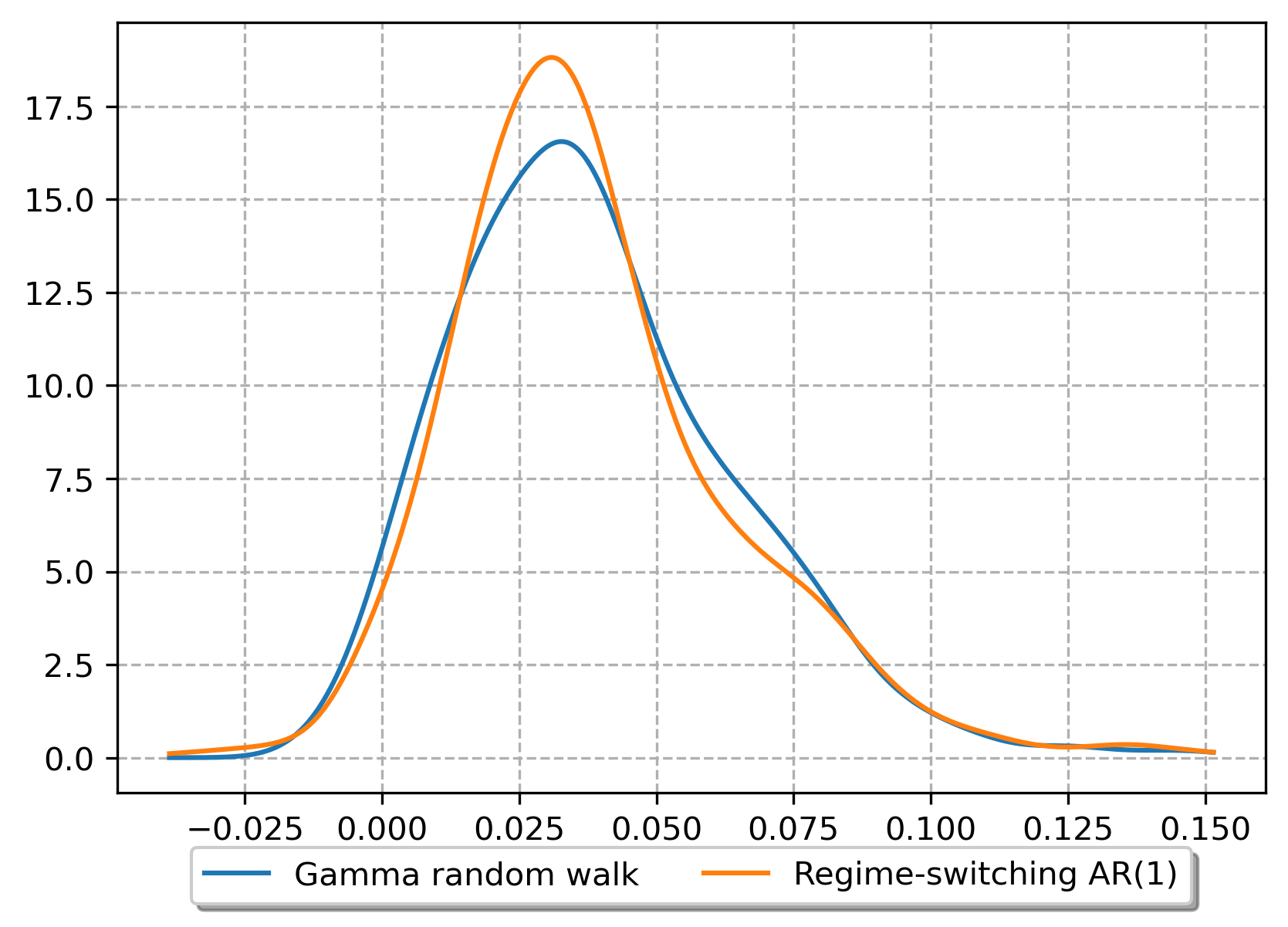}
\caption{Kernel-density estimates of annual log-returns using Gaussian kernels. The annual log-returns are obtained by simulating 1000 one-year monthly paths of each model. }
\label{fig:1Y_distr_comparison_bis}
\end{figure}

In Figure \ref{fig:rsar1_leadlag}, we plot the statistical power of the two-sample test using the lead-lag transformation and the log-signature for an $AR(1)$ process with two regimes and for a Gamma random walk. The parameters of both models are given in Table \ref{tab:rsar1_params}. Note that the regime at $t=0$ is sampled from the stationary distribution of the Markov chain and that we work again on the log-paths as this representation improves slightly the results. We obtain statistical powers that are very close to 1 at any order for a sample size greater than $m=30$. This third case shows that the signature-based validation test is still powerful when working with time series models and allows to distinguish between paths exhibiting changes of regimes over time and first-order autocorrelation from paths with i.i.d. log-returns. Note that we have performed the same experiment with a regime-switching random walk (i.e. $\phi_1=\phi_2=0$) instead of a regime-switching $AR(1)$ process and we obtained statistical powers above 88\% for a sample size greater than $m=50$ which shows that the first-order autocorrelation component is not necessary to distinguish the two models. 

\begin{figure}[ht]
\centering
\includegraphics[scale=0.6]{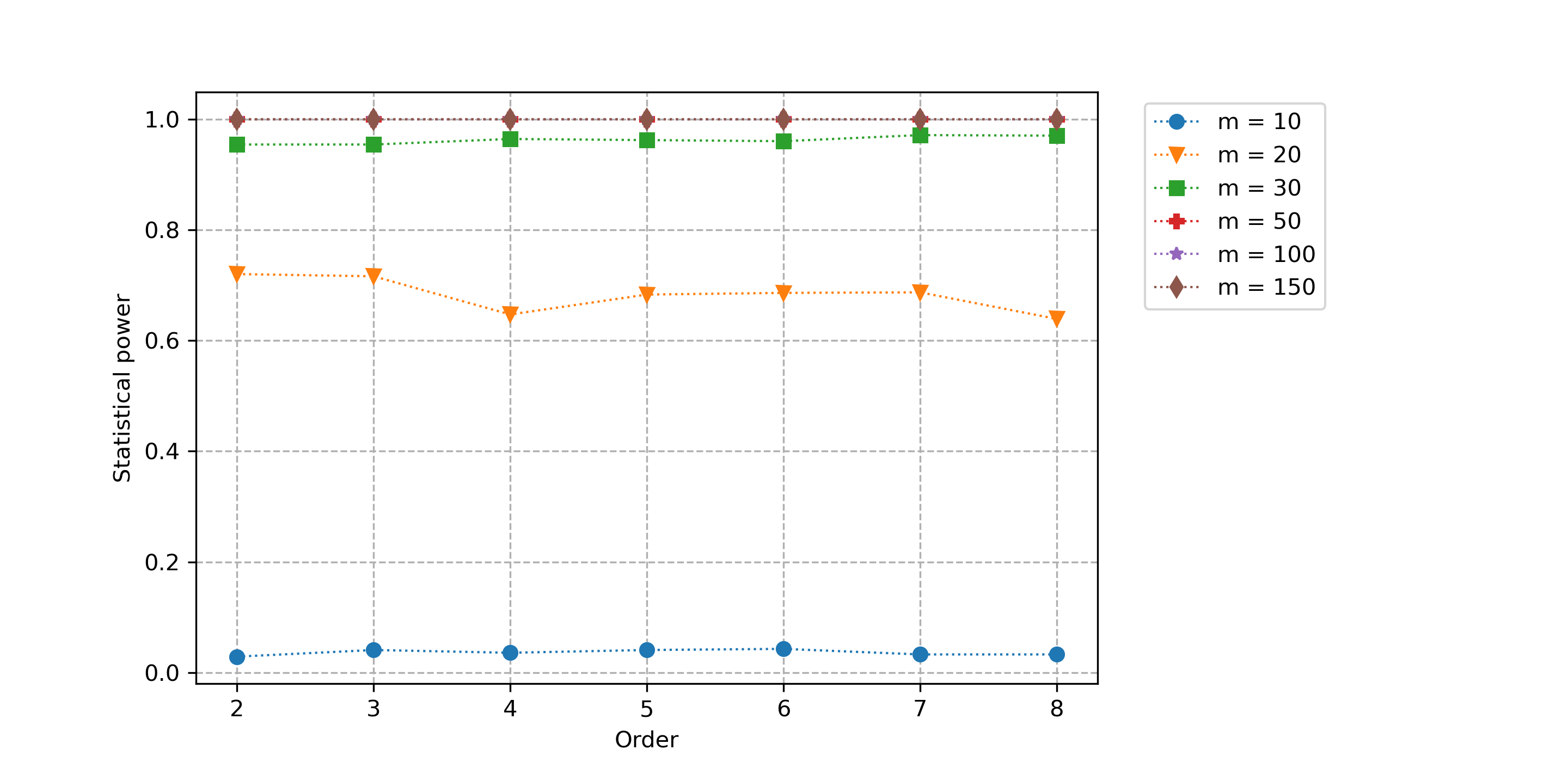}
\caption{Statistical power (as a function of the truncation order and the size of the first sample) of the log-signature-based validation test when comparing regime-switching $AR(1)$ log-paths against Gamma random walk log-paths using the lead-lag transformation.}
\label{fig:rsar1_leadlag}
\end{figure}

\subsubsection{A two-dimensional process}\label{sec:multi}
In the four preceding examples, we compare the paths of one-dimensional processes but in practice economic scenarios consist in joint simulations of several correlated risk drivers. In this subsection, we propose to study whether the signature-based validation test also allows to capture the dependence structure between two models. To this end, we consider the following two samples.
\begin{enumerate}
  \item The first sample consists of paths of the two-dimensional process $Z=(S,I)$ where $(S,V)$ is solution of the rough Heston SDE (\ref{eq:rough_heston}) and the log-returns of $I$ are modelled by a regime-switching $AR(1)$ process. The Brownian motions $W$ and $B$ in Equation (\ref{eq:rough_heston}) are assumed to be independent from the noises $\epsilon$ in Equation (\ref{eq:regime_switching}). Besides, the parameters of each model are those presented in the caption of Figure \ref{fig:rh_illustrations} and in Table \ref{tab:rsar1_params} respectively. 
  \item The second sample consists of paths of the exact same two-dimensional process except that $W$ is correlated with $\epsilon$ (Gaussian dependence). In this case, the process is denoted by $\tilde{Z}$. 
\end{enumerate}
Note that the marginal distributions of the vectors $Z$ and $\tilde{Z}$ are exactly the same making again the two samples very similar.\\

In this multidimensional setting, it is \textit{a priori} no longer necessary to perform a transformation on the simulated paths to extract information beyond the increments over $[0,T]$. However, starting with a correlation of 0.5, the measured statistical power of the plain signature-based validation test is below 2\%. The rescaling procedure helps to improve materially these results as we obtain powers above 60\% for $m=150$. In order to achieve higher powers, especially for lower sample sizes, we considered the signature of the sequence of the log-returns of $S$ and $I$ in the test. This is motivated by the fact that the log-returns are the quantities that are uncorrelated for $Z$ and correlatd for $\tilde{Z}$. This modification allowed to considerably increase the powers for all sample sizes. We also considered the lead-lag transformation of the log-returns of $S$ and $I$ and we further improved the numerical results. In particular, we obtain a statistical power above 98\% for a sample size of $m=30$. In Figure \ref{fig:rh_rsar1_powers}, we show the obtained powers in this configuration for values of the correlation between $W$ and $\epsilon$ in the range $\{-0.5,-0.4,-0.3,-0.2,-0.1,0.1,0.2,0.3,0.4,0.5\}$ when the size of the first sample is fixed to $m=30$. When the absolute value of the correlation is greater or equal to 0.4, the statistical power is above 88\%. Thus, we conclude that the signature-based validation test can also be powerful in a multidimensional setting. Note that the powers are decreasing with the truncation order. This seems to be mainly explained by the fact that the terms of odd order of the signature are essentially the same between the two samples (which can be seen as a consequence of the fact that correlation is captured through quadratic terms) so that the differentiating terms of even order are hidden.

\begin{figure}
  \centering
  \includegraphics[scale=0.6]{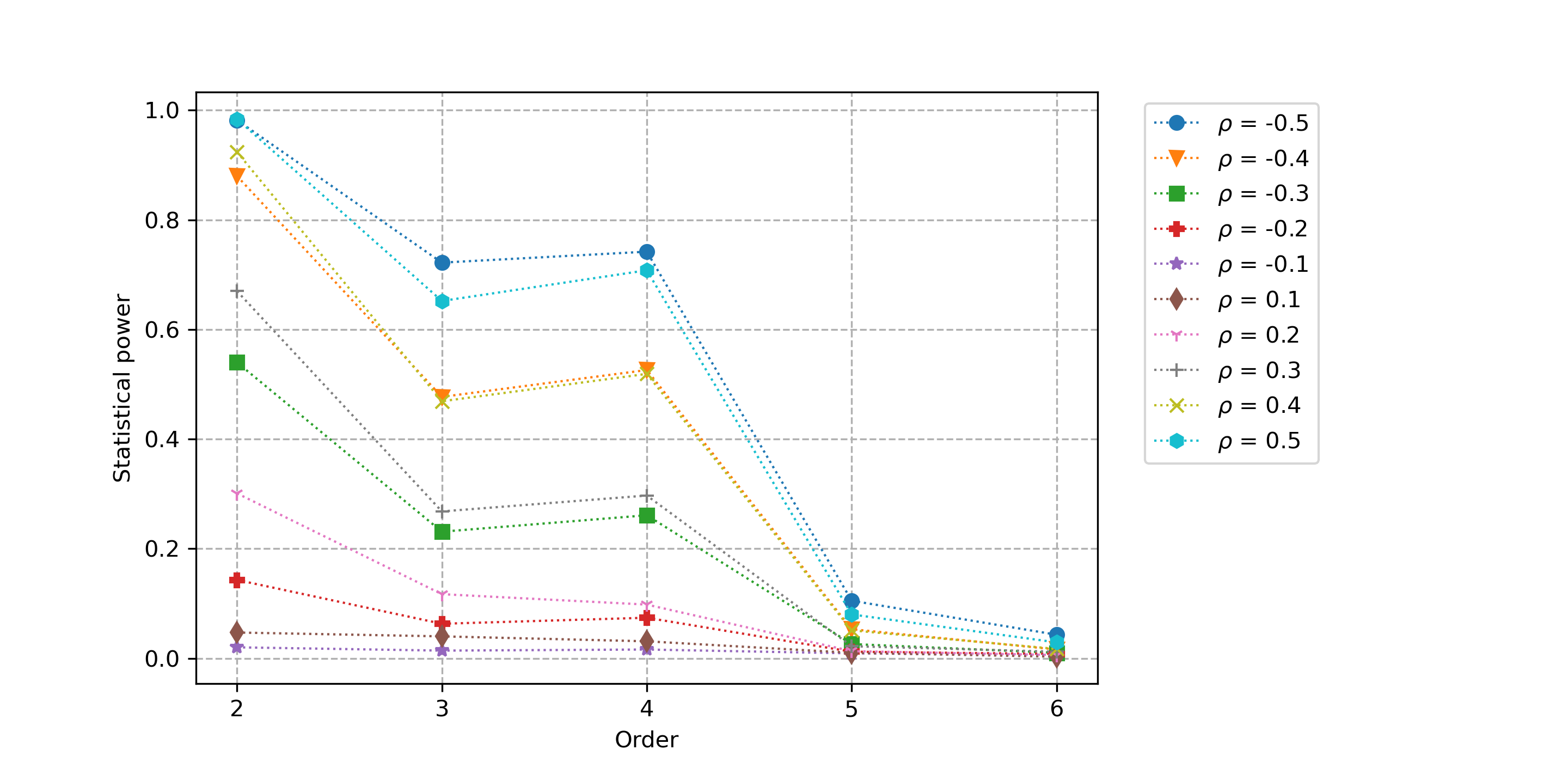}
  \caption{Statistical power (as a function of the truncation order and the correlation between the two components of $\tilde{Z}$) of the signature-based validation test when comparing paths of $Z$ against paths of $\tilde{Z}$. The lead-lag transformation with the rescaling procedure is applied on the log-returns of each component of $Z$ and $\tilde{Z}$. Note that the size of the first sample is fixed to $m=30$.}
  \label{fig:rh_rsar1_powers}
\end{figure}

\subsection{Application to historical data}\label{sec:hist_data}
The purpose of this subsection is to show that the signature-based validation test is able to discriminate between stochastic models calibrated on historical data. This is of practical interest for the validation task of real-world economic scenarios but more generally it is of interest for academics or practitioners that would like to compare a new model to existing ones based on a criteria of consistency with historical data. The following two subsections present numerical results on a realized volatility data set and on an inflation data set respectively. 

\subsubsection{Realized volatility data}
We consider the daily realized estimates of the S\&P 500 from January 2000 to January 2022 obtained from \citet{Realized_Vol_Data}. On the log-volatilities derived from this data set (illustrated in Figure \ref{fig:vol_hist_data}), we calibrate an ordinary Ornstein-Uhlenbeck (OU) process and a fractional Ornstein-Uhlenbeck (FOU) process whose dynamics is recalled below:
\begin{equation}\label{eq:fOu}
dY_t = \alpha(\theta-Y_t)dt + \sigma dW_t^H
\end{equation}
where $(W_t^H)_{t\ge 0}$ is a fractional Brownian motion with Hurst parameter $H$. The OU process (corresponding to $H=1/2$ in Equation \ref{eq:fOu}) is calibrated using the maximum likelihood estimator of $(\alpha,\theta,\sigma)$ while the FOU process is calibrated using a two-step method. First, the Hurst parameter is estimated using the approach of \citet{gatheral2018volatility}. This approach relies on the following scaling property of the fractional Brownian motion:
\begin{equation}
  \mathbb{E}\left[\left|W^H_{t+\Delta}-W^H_t\right|^q \right] =  \Delta^{qH} \mathbb{E}[|G|^q]
\end{equation}
where $G$ is a standard normal random variable. Thus, using the daily log-volatilities $(y_{k\Delta})_{k=0,\dots,N}$, we estimate the parameters $\xi_q$ for several values of $q$ (we took $q\in \{0.5,1,1.5,2,3\}$ as \citeauthor{gatheral2018volatility}, \citeyear{gatheral2018volatility}) in the following linear regression:
\begin{equation}
  \log \left( \frac{1}{\lfloor \frac{N\Delta}{\Delta'} \rfloor}\sum_{k=1}^{\lfloor \frac{N\Delta}{\Delta'} \rfloor} \left|y_{k\Delta'}-y_{(k-1)\Delta'}\right|^q \right) = \beta_q + \xi_q \log(\Delta') + \epsilon_{\Delta'}
\end{equation}
where $\Delta'\in \{\Delta,2\Delta,\dots,p\Delta\}$ (we took $\Delta=1/252$ and $p=252$), $\beta_q$ is the intercept and $\epsilon_{\Delta'}$ is the noise term. The slopes $\xi_q$ of these regressions are then themselves regressed against $q$. The slope of this regression is our estimate $\hat{H}$ of the Hurst parameter. Alternatively, one could consider the procedures presented in \cite{cont2022rough} and \cite{cont2023quadratic}. Second, the mean-reversion speed $\alpha$, the mean-reversion level $\theta$ and the volatility of volatility $\sigma$ are estimated using the method-of-moments estimators from \citet{wang2023modeling}:
\begin{equation}
  \begin{array}{rcl}
    \hat{\theta} &=& \frac{1}{N}\displaystyle\sum_{k=1}^N y_{k\Delta},\\
    \hat{\alpha} &=& \displaystyle\left(\frac{N^2 \hat{\sigma}^2\hat{H}\Gamma(2\hat{H})}{N\sum_{k=1}^N y_{k\Delta}^2 - \left(\sum_{k=1}^N y_{k\Delta}\right)^2} \right)^{1/(2\hat{H})},\\
    \hat{\sigma} &=& \displaystyle\sqrt{\frac{\sum_{k=1}^{N-2}(y_{(k+2)\Delta}-2y_{(k+1)\Delta}+y_{k\Delta})^2}{N(4-2^{2\hat{H}})\Delta^{2\hat{H}}}}.
  \end{array}
\end{equation}
The calibrated parameters are reported in Table \ref{tab:vol_calib_params}. \\
\begin{figure}[t]
  \begin{subfigure}[t]{.5\textwidth}
    \centering
    \includegraphics[width=\linewidth]{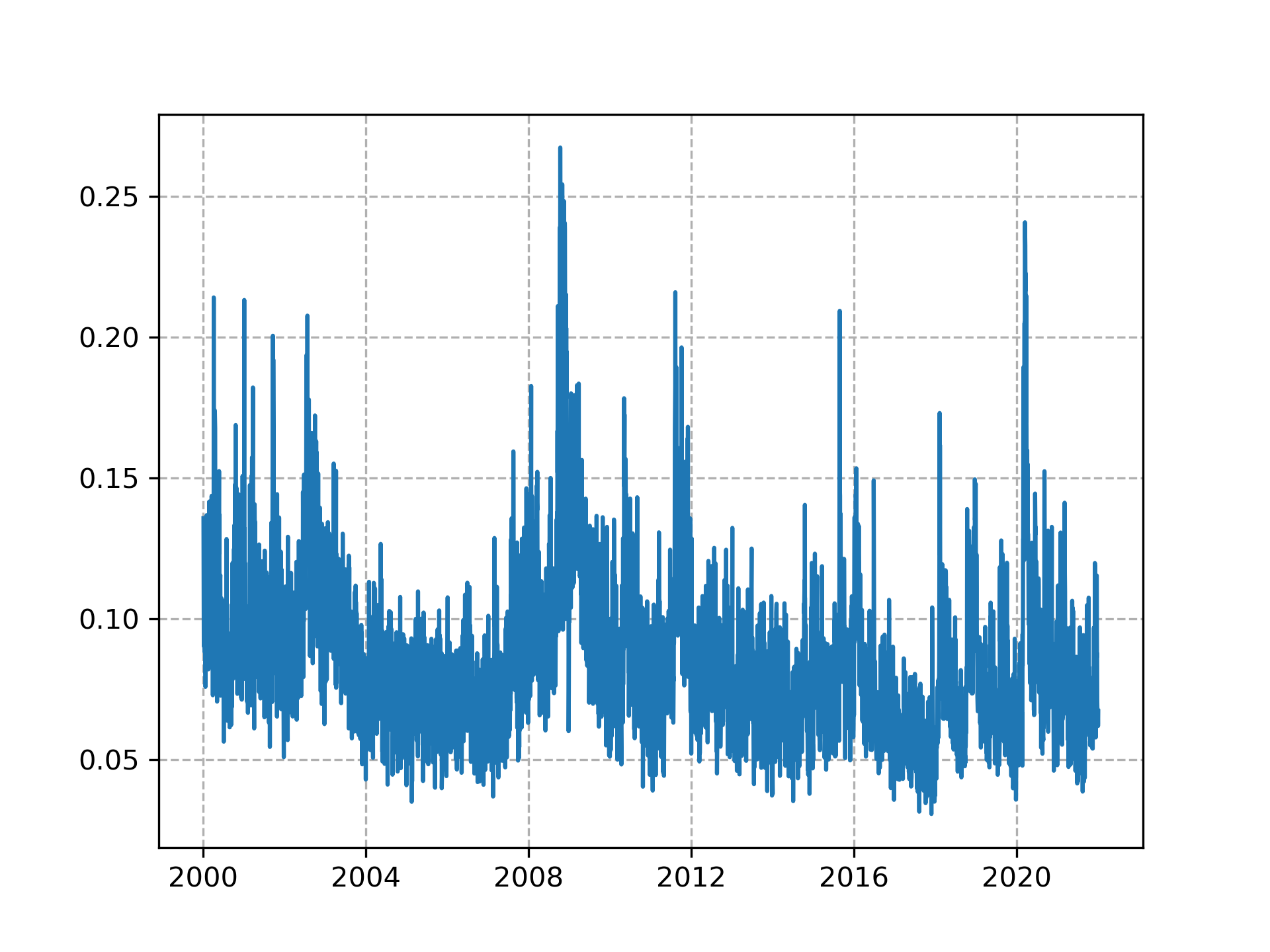}
    \caption{Daily log-volatility of the the S\&P 500 from January 2000 to January 2022.}
    \label{fig:vol_hist_data}
  \end{subfigure}%
  \begin{subfigure}[t]{.5\textwidth}
    \centering
    \includegraphics[width=\linewidth]{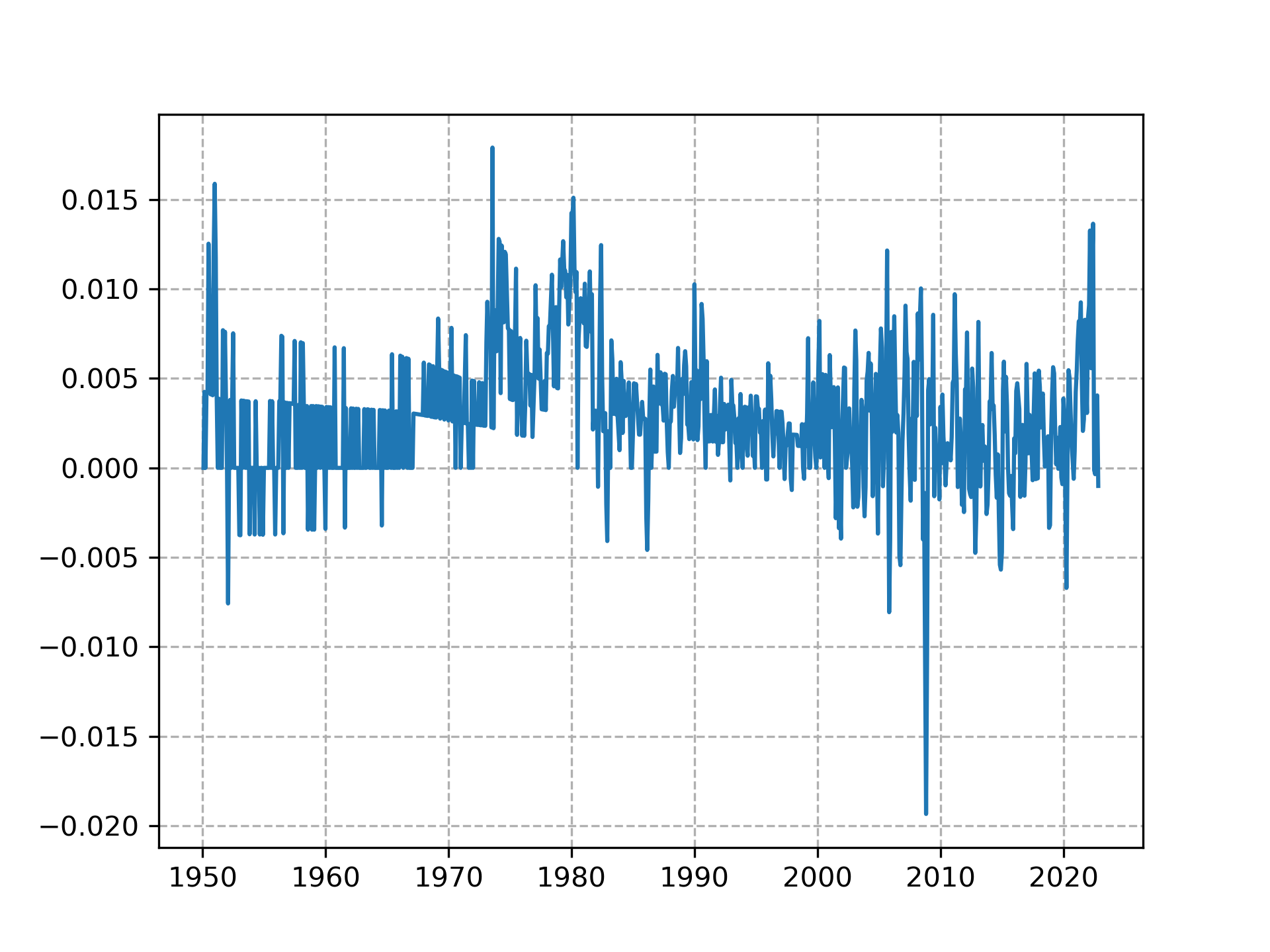}
    \caption{Monthly log-returns of the Consumer Price Index for All Urban Consumers (CPI-U) in the United States from January 1950 to November 2022.}
    \label{fig:inflation_hist_data}
  \end{subfigure}
  \caption{Data sets illustrations. Note on the right figure that between 1950 and the beginning of 70s, the smaller precision in the measurement of the inflation index leads to monthly log-returns that seem to oscillate between some fixed values.}
  \label{fig:hist_data}
  \end{figure}

  \begin{table}[]
    \caption{Parameters calibrated on daily realized log-volatilities of the S\&P 500. Note that the decrease in the $\alpha$ value between the ordinary and the fractional Ornstein-Uhlenbeck models results from the rough noise in latter model ($H<1/2$) that already captures a part of the negative autocorrelation. \\}
    \begin{subtable}{0.45\textwidth}
      \centering
      \caption{Ornstein-Uhlenbeck parameters}
      \begin{tabular}{@{}ccc@{}}
        \toprule
        $\theta$ & $\alpha$ & $\sigma$ \\ \midrule
        -5.0132  & 90.7993  & 8.2209   \\ \bottomrule
        \end{tabular}
    \end{subtable}
    \begin{subtable}{0.45\textwidth}
      \centering
      \caption{Fractional Ornstein-Uhlenbeck parameters}
      \begin{tabular}{@{}cccc@{}}
        \toprule
        $H$    & $\theta$ & $\alpha$ & $\sigma$ \\ \midrule
        0.0916 & -5.0131  & 0.2383   & 0.7876   \\ \bottomrule
        \end{tabular}
    \end{subtable}
    \label{tab:vol_calib_params}
\end{table}

Using the calibrated parameters and considering $Y_0=\theta$, we simulate 10000 one-year paths with a monthly frequency for both models and we check that the simulated annual log-returns are close to the historical ones both graphically and using the two-sample Kolmogorov-Smirnov test. This preliminary verification allows to determine whether it would be possible to reject one of the models based on the marginal distributions. The plotted densities in Figure \ref{fig:1Y_distr_comparison_vol} appear reasonably close and the hypothesis of same distribution is not rejected by the Kolmogorov-Smirnov test (see Table \ref{tab:vol_ks_test}) at standard levels (note however that the $p$-value of the OU paths is below 10\%). \\
\begin{figure}[ht]
  \centering
  \begin{subfigure}{.5\textwidth}
    \centering
    \includegraphics[width=\linewidth]{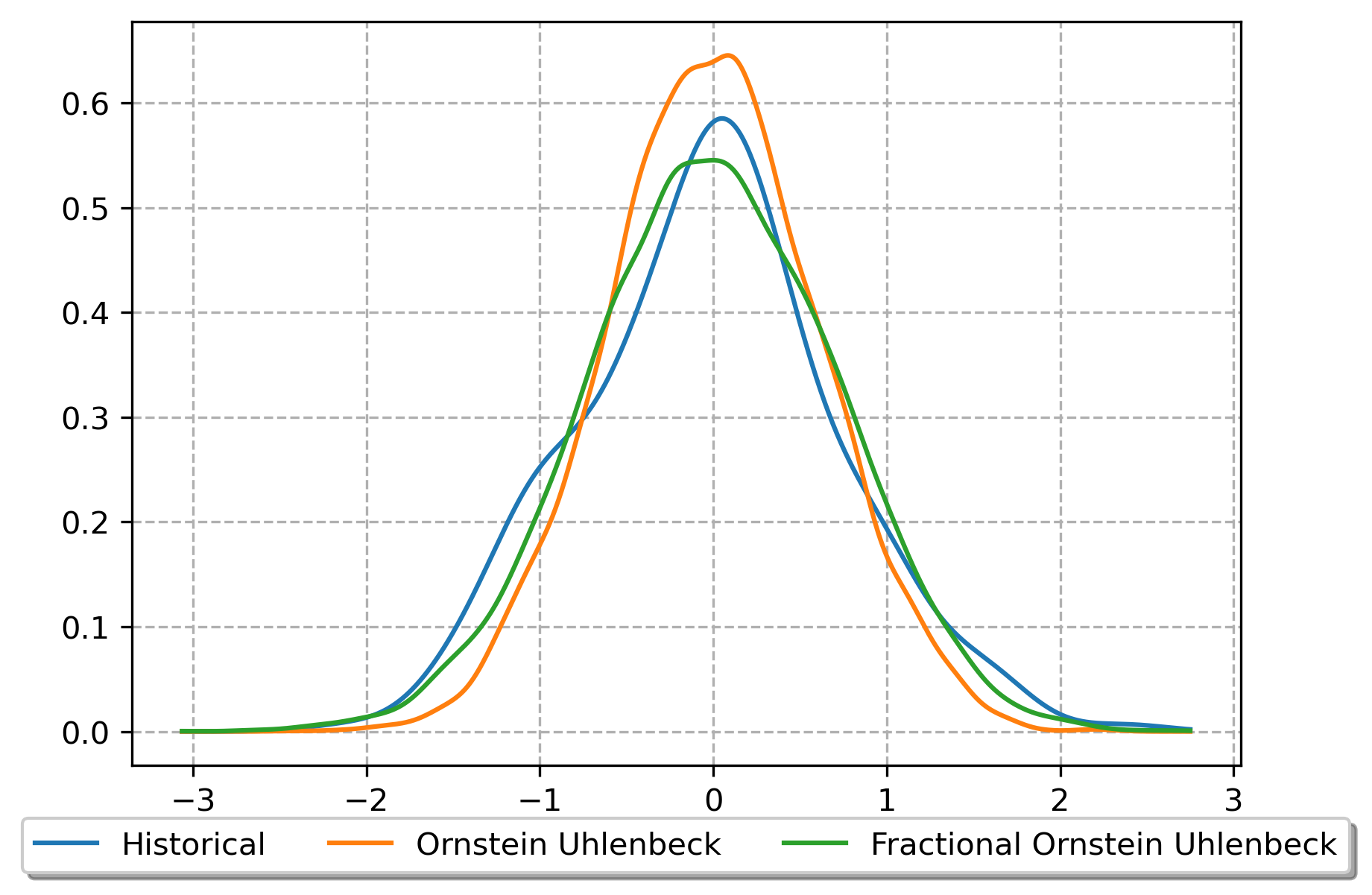}
    \caption{Realized volatility data set.}
    \label{fig:1Y_distr_comparison_vol}
  \end{subfigure}%
  \begin{subfigure}{.5\textwidth}
    \centering
    \includegraphics[width=\linewidth]{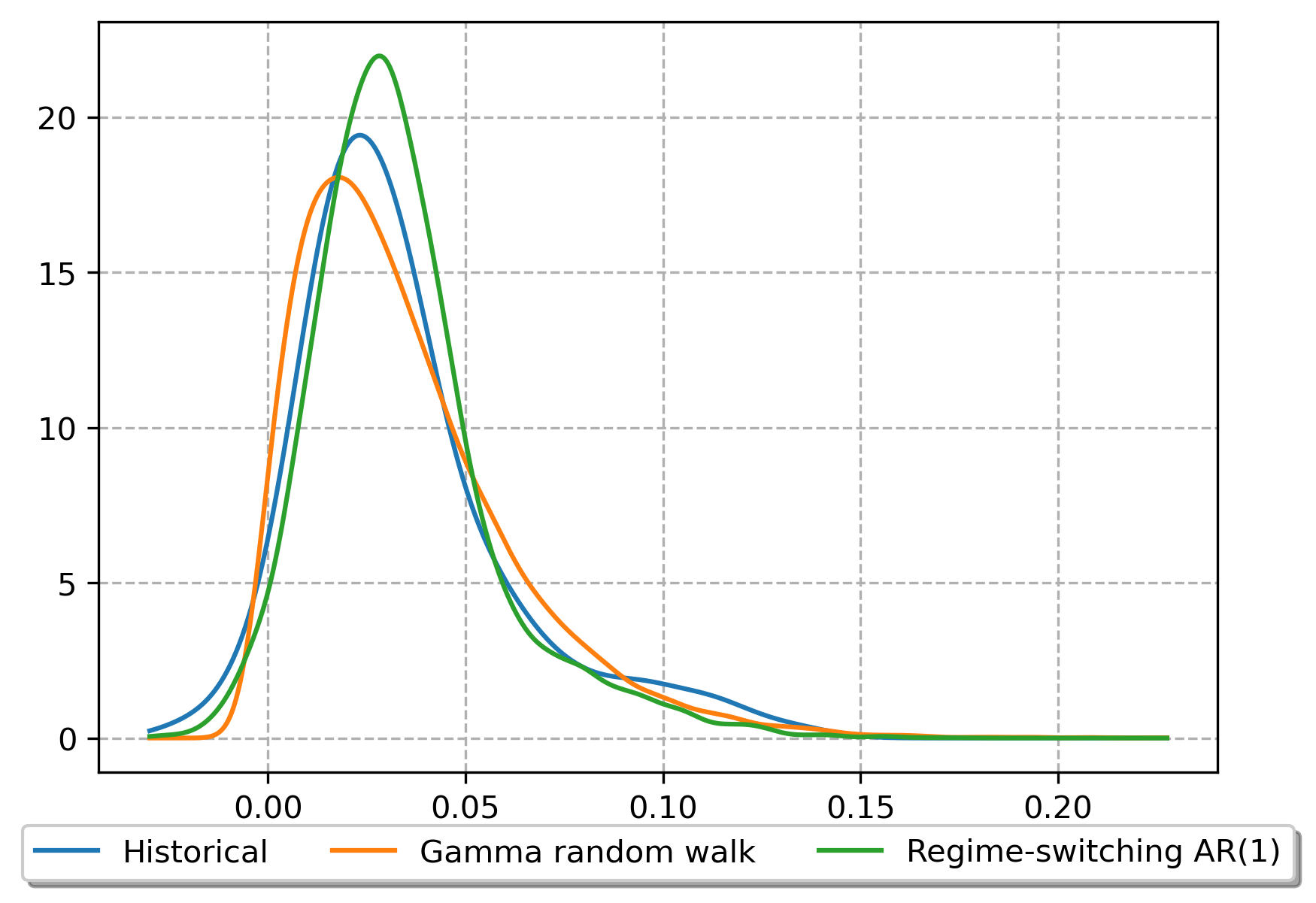}
    \caption{Inflation data set.}
    \label{fig:1Y_distr_comparison_inflation}
  \end{subfigure}
  \caption{Kernel-density estimates of annual log-returns using Gaussian kernels. The historical annual log-returns are computed using a one-year window which is moved with a monthly step for the realized volatility data set and with a quarterly step for the inflation data set.}
  \end{figure}
  \begin{table}[ht]
    \caption{$p$-value of the two-sample Kolmogorov-Smirnov test applied to historical annual log-returns and simulated annual log-returns for each model. The historical annual log-returns are computed using a one-year window which is moved with a monthly step for the realized volatility data set and with a quarterly step for the inflation data set.}
    \begin{subtable}{0.45\textwidth}
      \centering
      \caption{Realized volatility data set}
      \begin{tabular}{@{}cc@{}}
        \toprule
        OU & FOU\\ \midrule
        0.0852             & 0.6042                        \\ \bottomrule
        \end{tabular}%
        \label{tab:vol_ks_test}
      \end{subtable}
      \hfill
      \begin{subtable}{0.45\textwidth}
        \centering
        \caption{Inflation data set}
        \begin{tabular}{@{}cc@{}}
          \toprule
           GRW    & RSAR(1) \\ \midrule
          0.1227 & 0.1302  \\ \bottomrule
          \end{tabular}%
          \label{tab:inflation_ks_test}
      \end{subtable}
  \end{table}

Now, we want to compare the paths simulated by each model to historical data using the signature-based validation test. To this end, we implement the following steps.
\begin{enumerate}
  \item In order to be able to apply the signature-based validation test, we construct one-year "historical paths" with a monthly frequency from the monthly observations $(y_{k\tilde{\Delta}})_{k=0,\dots,\tilde{N}}$ of the realized log-volatility (we keep the last value of each month) as follows: we split the observations $(y_{k\tilde{\Delta}})_{k=0,\dots,\tilde{N}}$ of the realized log-volatility into $m=\lfloor \tilde{N}/12 \rfloor$ one-year historical paths. The $i$-th path consists of $y_{12 i\tilde{\Delta}}$, $y_{(12i+1)\tilde{\Delta}}$, $\dots$, $y_{12(i+1)\tilde{\Delta}}$. Here, $m=22$ which lies in the range studied in the previous subsection. 
  \item The lead-lag transformation is applied to these historical paths and to 1000 simulated sample paths of each calibrated model.
  \item For both models, we test the transformed historical paths against 1000 transformed simulated sample paths using the signature-based validation test with the signature truncated at order 4. Note that the term of order 1 of the signature is removed because it is not very informative given that the increments in both models have close distributions.
  \item The $p$-value of each test, obtained by computing $1-\hat{F}_{H_0}(\hat{MMD}^2_{m,n})$ where $\hat{F}_{H_0}$ is the empirical cumulative distribution function of $MMD_{m,n}^2$ under $H_0$ and $\hat{MMD}^2_{m,n} $ is the test statistic value, is reported in Table \ref{tab:vol_p_values}. 
\end{enumerate}

 \begin{table}[ht]
  \caption{$p$-value of the signature-based validation test applied to historical paths and simulated paths for each model.}
  \begin{subtable}{0.45\textwidth}
    \centering
    \caption{Realized volatility data set}
    \begin{tabular}{@{}cc@{}}
      \toprule
      OU     & FOU    \\ \midrule
      0.0008 & 0.2214 \\ \bottomrule
    \end{tabular}%
    \label{tab:vol_p_values}
  \end{subtable}
  \begin{subtable}{0.45\textwidth}
    \centering
    \caption{Inflation data set}
    \begin{tabular}{@{}cc@{}}
      \toprule
      GRW    & RSAR(1) \\ \midrule
      0.0000 & 0.6919  \\ \bottomrule
      \end{tabular}%
      \label{tab:inflation_p_values}
  \end{subtable}
\end{table}
We observe that the OU model is rejected at any standard level while the FOU model is not. This result shows using a different method than \citet{gatheral2018volatility} that a rough volatility model is consistent with historical data. 

\subsubsection{Inflation data}\label{sec:test_inflation}
The second data set that we consider contains monthly observations of the Consumer Price Index for All Urban Consumers (CPI-U) in the United States from January 1950 to November 2022. It is obtained from the \citet{BLS_Inflation_Data}. As pointed out in Section 3.3.2. of the survey article of \cite{petropoulos2022forecasting}, the state-of-the-art inflation forecasting models are the Dynamic Stochastic General Equilibrium (DSGE) models which rely on the description of the behavior of the main economic agents (households, firms, governments and central banks) and their interactions through macro- and micro-economic concepts derived from economic theory. However, this class of models requires also GDP data, interest rates data, expert judgments, etc. which is beyond the scope of this study. The literature on inflation models requiring only inflation data is rather limited and we are not aware of a comprehensive survey paper comparing the performance of several models. We decide to calibrate a Gamma random walk (GRM) and a regime-switching $AR(1)$ (RSAR(1)) process (see specifications in Section \ref{sec:time_series}) on the CPI-U log-returns since:
\begin{enumerate}
  \item simple models such as random walks models are widely used in the insurance industry in particular for the computation of the Solvency Capital Requirements. The choice of a Gamma distribution allows to capture the positive asymmetry in the distribution of the inflation rate (see Figure \ref{fig:1Y_distr_comparison_inflation}).
  \item A regime-switching model is very natural given that the history of inflation is a succession of periods of low and high inflation of varying lengths. The idea to use such models is not new as it was first proposed for US inflation by \cite{evans1993inflation}. We follow \cite{amisano2013money} in the use of an regime-switching $AR(1)$ process except that we do not make the transition probabilities depend on money growth. 
\end{enumerate}

The GRM is calibrated by matching the three first moments of the historical annual log-returns while the RSAR(1) process is calibrated by log-likelihood maximization. The calibrated parameters are reported in Table \ref{tab:inflation_calib_params}.\\
\begin{table}
 \caption{Parameters calibrated on the log-returns of the CPI-U. We refer to Section \ref{sec:time_series} for the definitions of these parameters.}
  \begin{subtable}{0.4\textwidth}
    \centering
    \caption{Gamma random walk parameters}
    \begin{tabular}{@{}ccc@{}}
      \toprule
      $\gamma$ & $\alpha$ & $\beta$ \\ \midrule
      -0.00047 & 0.18208  & 0.01832 \\ \bottomrule
      \end{tabular}
  \end{subtable}
  \begin{subtable}{0.6\textwidth}
    \centering
    \caption{Regime-switching $AR(1)$ parameters}
    \resizebox{\textwidth}{!}{
    \begin{tabular}{@{}ccccccc@{}}
      \toprule
      $\mu_1$ & $\mu_2$ & $\phi_1$ & $\phi_2$ & $\sigma_1$ & $\sigma_2$ & $P$                                                    \\ \midrule
      0.0022  & 0.0062  & 0.4622   & 0.5774   & 0.0025     & 0.0042     & $\begin{pmatrix} 0.9870    &     0.0130\\0.0637    &    0.9363 \end{pmatrix}$ \\ \bottomrule
      \end{tabular}
    }%
  \end{subtable}
  \label{tab:inflation_calib_params}
\end{table}

Similarly to the previous section, we simulate 10000 one-year paths with a monthly frequency for both calibrated models and we check that the distributions of the annual log-returns (i.e. the annual inflation rates) are close to the historical ones. The comparison of the empirical densities (see \ref{fig:1Y_distr_comparison_inflation}) does not reveal significant deviations from the historical density and the hypothesis of same distribution is not rejected by the Kolmogorov-Smirnov test (see Table \ref{tab:inflation_ks_test}) for both models. Note that the simulated paths all start from 0 and the initial regime for the RSAR(1) process is sampled from the stationary distribution of the Markov chain.\\

Again, we complete this preliminary verification with the signature-based validation test. We implement the same steps described in the previous section (here, we obtain $m=72$ historical paths) except that we work with the log-signature which lead to higher statistical powers on synthetic data. The obtained $p$-values (see Table \ref{tab:inflation_p_values}) show that  the GRM model is rejected at any standard level while the RSAR(1) process is not. This is particularly interesting given that the $p$-values of the two-sample Kolmogorov-Smirnov test reported in Table \ref{tab:inflation_ks_test} are very close. Moreover, this result is in line with the empirical observation that the inflation dynamics exhibits roughly two regimes in the inflation data set (see Figure \ref{fig:hist_data}): a regime of low inflation (e.g. between 1982 and 2021) and a regime of high inflation (e.g. between 1972 and 1982). 

\subsection{Numerical results summary}
In this section, we have first presented the statistical power of the signature-based validation test in five different settings. In each of these settings, we have shown that high statistical powers can be achieved even in a small sample configuration and with a constraint on the closeness of the compared paths by using the following levers:
\begin{enumerate}
\item several representations of the original paths can be considered (log-paths, realized volatility, log-returns);
\item a transformation (lead-lag, time lead-lag, cumulative lead-lag) lifting the chosen representation of the paths to higher dimensional paths can be applied before taking the signature ;
\item several truncation orders can be tested;
\item the signature and the log-signature can be compared;
\item a rescaling of the terms of the signature can be applied.
\end{enumerate}
The combinations of these levers resulting in the highest statistical power over the tested sample sizes in the five settings are presented in Table \ref{tab:best_configs_summary}. Note that items 2, 4 and 5 are part of the \textit{generalized signature method} introduced by \citet{morrill2020generalised} that aims at providing a unifying framework for the use of the signature as a feature in machine learning. A natural question at this stage is how to choose the path representation, transformation, truncation order, whether to use the signature or log-signature and whether to use a rescaling or not in a new setting that is not studied here. Unfortunately, we have not been able to find a general rule especially because it does not seem possible to relate these choices to the properties of the models under study except in some cases that we have exhibited above (e.g. even truncation order should be used for centred Gaussian processes). Despite this, our numerous experiments suggest some heuristics:
\begin{itemize}
  \item the path representation should be chosen to "isolate" a property of the paths allowing to discriminate between models (e.g. extracting the realized volatility is key to distinguish rough Heston paths from classic Heston paths).
  \item Truncating the signature at order 2 or 3 is generally the best choice to maximize the statistical power. 
  \item The rescaling procedure is recommended when the dimension of the paths is greater or equal to 3 since the risk of a discriminating coefficient of the signature being "overwhelmed" by other coefficients is higher. 
\end{itemize}

When no prior knowledge on the data is available, a possible strategy for the practitioners is the following: use the test for every combination of the levers above and reject the null hypothesis if a majority of the tests is rejected. Applying this strategy for the two historical data sets leads to the same conclusions about the four studied models: the ordinary Ornstein-Uhlenbeck and the Gamma random walk are rejected while the fractional Ornstein-Uhlenbeck and the regime-switching $AR(1)$ process are not. \\
 
\begin{table}[ht]
  \centering
  \caption{Configurations of the two-sample test leading to the best statistical power.}
  \label{tab:best_configs_summary}
  \resizebox{\textwidth}{!}{%
  \begin{tabular}{@{}cccccc@{}}
  \toprule
  Setting                                             &  Representation &Transformation      & Order                 & Signature type & Rescaling \\ \midrule
  Fractional Brownian motion (Section \ref{sec:fBm}) &  Original paths & Lead-lag            & 2                     & Log-signature  & No        \\
  Black-Scholes dynamics (Section \ref{sec:bs_dyn}) &  Log-paths    & Cumulative lead-lag & 2                     & Signature      & No        \\
 Rough Heston (Section \ref{sec:rh})  &  Realized volatility &  Lead-lag &  No material influence & Log-signature &  No    \\
  Time series models  (Section \ref{sec:time_series}) &  Log-paths & Lead-lag            & No material influence & Log-signature  & No        \\ 
  Two-dimensional process (Section \ref{sec:multi})&  Log-returns &  Lead-lag & 2  & Signature &Yes \\\bottomrule
  \end{tabular}%
  }
\end{table}

Second, we have shown that the signature-based validation test allows to reject some models while others are not rejected despite the fact that all produce annual log-returns that are reasonably close to the historical ones. In particular, a two-sample Kolmogorov-Smirnov test is not able to distinguish the distribution of the annual log-returns in each model from the historical distribution. As such, the current point-in-time validation approach would not have rejected the ordinary Ornstein-Uhlenbeck model and the Gamma random walk. This demonstrates that the signature-based validation test is a promising tool to validate real-world models.

\section{Concluding remarks}
We propose a new approach for the validation of real-world economic scenarios motivated by insurance applications. This approach relies on the formulation of the problem of validating real-world economic scenarios as a two-sample hypothesis testing problem where the first sample consists of historical paths, the second sample consists of simulated paths of a given real-world stochastic model and the null hypothesis is that the two samples come from the same distribution. For this purpose, we use the statistical test developed by \citet{chevyrev2018signature} which precisely allows to check whether two samples of stochastic processes paths come from the same distribution. It relies on the notions of signature and maximum mean distance which are presented in this article. Our contribution is to study this test from a numerical point of view in settings that are relevant for applications. More specifically, we start by measuring the statistical power of the test on synthetic data under two practical constraints: first, the marginal one-year distributions of the compared samples are equal or very close so that point-in-time validation methods are unable to distinguish the two samples and second, one sample is assumed to be of small size (below 50) while the other is of larger size (1000). To this end, we apply the test to three stochastic processes in continuous time, namely the fractional Brownian motion (fBm), the Black-Scholes dynamics (BSd) and the rough Heston model, and two time series models, namely a regime-switching $AR(1)$ process and a random walk with i.i.d. Gamma increments. The test is also applied to a two-dimensional process combining the price in the rough Heston model and a regime-switching $AR(1)$ process. The numerical experiments have highlighted the need to configure the test specifically for each stochastic process to achieve a good statistical power. In particular, the path representation (original paths, log-paths, realized volatility or log-returns), the path transformation (lead-lag, time lead-lag or cumulative lead-lag), the truncation order, the signature type (signature or log-signature) and the rescaling are key ingredients to be adjusted for each model. For example, the test achieves statistical powers that are close to one in the following settings which illustrate three different risk factors (stock volatility, stock price and inflation respectively): 
\begin{itemize}
\item  fBm paths with Hurst parameter $H=0.1$ against fBm paths with Hurst parameter $H'=0.2$ using the lead-lag transformation and the log-signature;
\item BSd paths with constant volatility against BSd paths with piecewise constant volatility using the time lead-lag transformation and the log-signature with a proper rescaling or using the cumulative lead-lag transformation on the log-paths along with the signature;
\item paths of the rough Heston model with Hurst parameter $H\le0.2$ against paths of the classic Heston model using the lead-lag transformation on the realized volatility extracted from the price paths along with the signature;
\item paths of a regime-switching $AR(1)$ process against paths of a random walk with i.i.d. Gamma increments using the lead-lag transformation on the log-paths along with the log-signature;
\item paths of a two-dimensional process where the two coordinates are independent against paths of the same two-dimensional process where the two coordinates are correlated with correlation (in absolute value) above 40\% using the lead-lag transformation on the log-returns along with the signature and a proper rescaling.
\end{itemize}
In addition to these numerical experiments on synthetic data, we show that the test also performs well on historical data since it rejects some models whereas others are not rejected even if the distributions of the annual log-increments are very close in all the models. For example, we show that the fractional Ornstein-Uhlenbeck model with Hurst parameter around 0.1 is consistent with historical log-volatility of the S\&P 500 while the ordinary Ornstein-Uhlenbeck model is not, which is another piece of evidence that volatility is rough \citep{gatheral2018volatility}. These results indicate that this test represents a promising validation tool for real-world scenarios in a practical framework motivated by insurance applications. More broadly, the test appears as a universal tool for academics and practitioners that would like to challenge a new model against historical data.

\paragraph{Acknowledgments.} The authors are grateful to the anonymous referees for their valuable comments. 
\paragraph{Competing interests.} The authors declare none.

\appendix
\makeatletter  
\renewcommand{\@seccntformat}[1]{Appendix \csname the#1\endcsname .\quad}
\makeatother
\renewcommand{\thesection}{\Alph{section}} 
\makeatother
\section{Some examples}\label{sec:sig_examples}
The signature being already defined in the paper, the purpose of this Appendix is to provide more insights on the signature thanks to examples and a brief overview of its main properties. First, we present several examples that allow to better understand the signature and the log-signature. 

\begin{ex}\label{ex:sig_linear_path}
If $X:[0,T]\rightarrow E$ is a linear path , i.e. $X_t = X_0 + (X_T-X_0)\frac{t}{T}$, then for any $n\ge 0$:
\begin{equation}
\mathbf{X}^n = \frac{1}{n!}(X_T-X_0)^{\otimes n}.
\end{equation}
\end{ex}
\begin{ex}
If $E$ is a vector space of dimension 2, the second order term of the signature is given by:
\begin{equation}
\mathbf{X}^2 = \int_0^T\int_0^t dX_s\otimes dX_t = \begin{pmatrix}
\int_0^T\int_0^t dX_s^{(1)} dX_t^{(1)} & \int_0^T\int_0^t dX_s^{(1)} dX_t^{(2)}\\
\int_0^T\int_0^t dX_s^{(2)} dX_t^{(1)} & \int_0^T\int_0^t dX_s^{(2)} dX_t^{(2)}
\end{pmatrix}.
\end{equation}
Note that the difference of the anti-diagonal coefficients of $\mathbf{X}^2$ corresponds, up to a factor $1/2$, to the Lévy area of the curve $t\mapsto (X_t^1,X_t^2)$ which is defined as:
\begin{equation}
\mathcal{A}^{Levy} = \frac{1}{2}\left( \int_0^T (X_t^1-X_0^1)dX_t^2 - \int_0^T (X_t^2-X_0^2)dX_t^1 \right).
\end{equation}
It is the signed area between the curve and the chord connecting the two endpoints (see Figure \ref{fig:levy_area}). 
\end{ex}
\begin{figure}[h]
\centering
\includegraphics[scale=0.6]{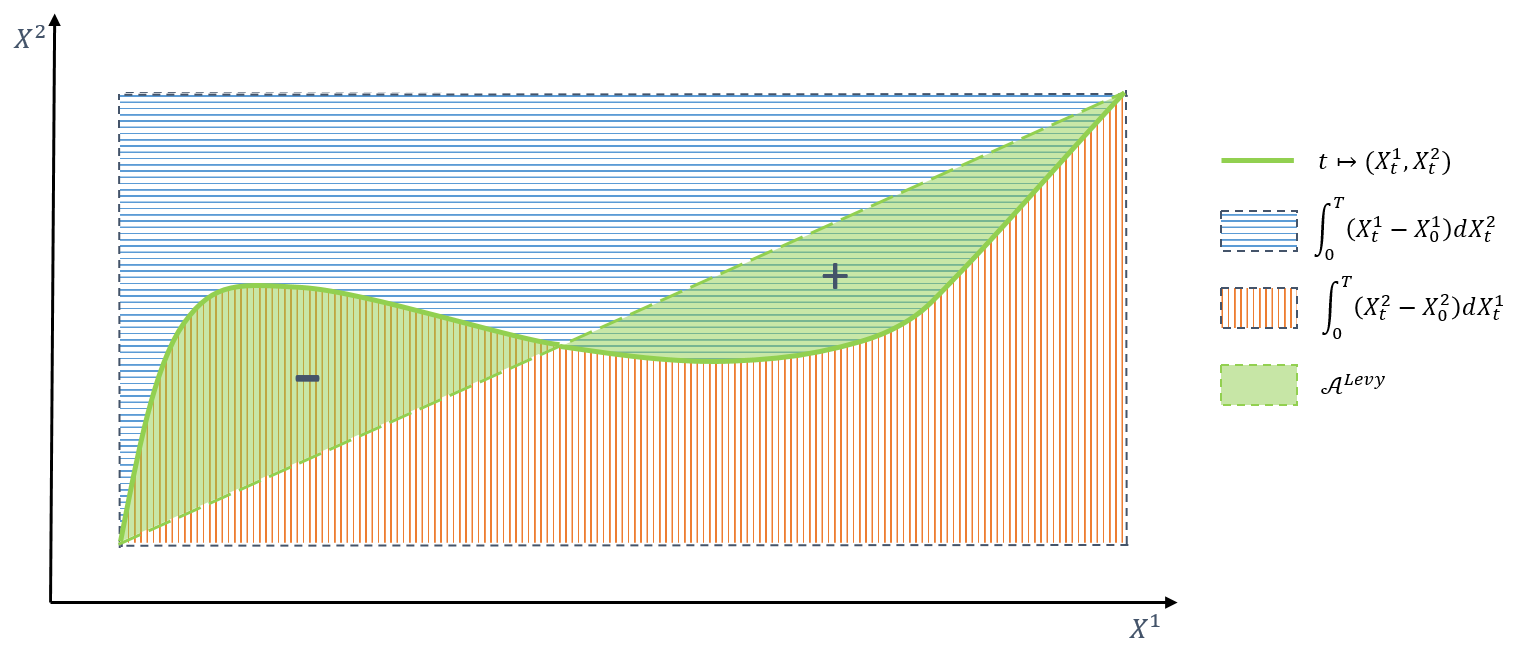}
\caption{Illustration of the Lévy area. The blue dashed area corresponds to the integral $\int_0^T(X_t^1-X_0^1)dX_t^2$ while the red dashed area corresponds to the integral $\int_0^T(X_t^2-X_0^2)dX_t^1$. Taking the difference between these two areas yields the Levy area (represented in green transparent) up to a factor 2 because of a double counting. The '$+$' (resp. '$-$') sign indicates that the surrounding area is counted positively (resp. negatively).  }
\label{fig:levy_area}
\end{figure}

In Section 3.1.2., we mentioned that the lead-lag transformation allows to capture the quadratic variation of a path in the signature. More precisely, the Levy area of the lead-lag transformation is the quadratic variation up to a factor $1/2$ as stated by the following proposition which is a direct consequence of the definition of the lead-lag transformation.  
\begin{prop}\label{prop:levy_area_ll}
Let $t_0=0< t_1 < \dots < t_N=T$ be a partition of $[0,T]$ and $(X_{t_i})_{i=0,\dots,N}$ be the vector of observations of a real-valued process $X$ on this partition. The Levy area of the lead-lag transformation of $(X_{t_i})_{i=0,\dots,N}$ is equal to the quadratic variation of $X$ on the partition $(t_i)_{i=0,\dots,N}$ up to a factor $1/2$, i.e.
\begin{equation}
\frac{1}{2}\left(\int_0^T (X_t^{lead}-X_0^{lead})dX_t^{lag} - \int_0^T (X_t^{lag}-X_0^{lag})dX_t^{lead}\right)  =\frac{1}{2} \sum_{i=0}^{N-1}(X_{t_{i+1}}-X_{t_i})^2.
\end{equation}
\end{prop}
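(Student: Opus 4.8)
The plan is to exploit the piecewise-affine staircase structure of the lead-lag path. Both $t\mapsto X_t^{lead}$ and $t\mapsto X_t^{lag}$ are linear interpolations on the refined partition $(t_{i/2})_{i=0,\dots,2N}$, so each Riemann-Stieltjes integral splits into a sum over the subintervals $[t_j,t_{j+1/2}]$ and $[t_{j+1/2},t_{j+1}]$ for $j=0,\dots,N-1$. The crucial structural fact, read directly off the definitions, is that on each of these subintervals exactly one coordinate varies while the other is frozen: on $[t_j,t_{j+1/2}]$ the lead coordinate increases from $X_{t_j}$ to $X_{t_{j+1}}$ while the lag coordinate stays equal to $X_{t_j}$, and on $[t_{j+1/2},t_{j+1}]$ the lag coordinate increases from $X_{t_j}$ to $X_{t_{j+1}}$ while the lead coordinate stays equal to $X_{t_{j+1}}$. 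Hence $dX^{lag}=0$ on every interval of the first type and $dX^{lead}=0$ on every interval of the second type. Note also that evaluating at $t=0=t_0$ gives $X_0^{lead}=X_0^{lag}=X_{t_0}$.

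First I would evaluate $\int_0^T(X_t^{lead}-X_0^{lead})\,dX_t^{lag}$. Only the intervals $[t_{j+1/2},t_{j+1}]$ contribute, and there $X_t^{lead}-X_0^{lead}$ is constant equal to $X_{t_{j+1}}-X_{t_0}$ while $X^{lag}$ increases by $X_{t_{j+1}}-X_{t_j}$, giving $\sum_{j=0}^{N-1}(X_{t_{j+1}}-X_{t_0})(X_{t_{j+1}}-X_{t_j})$. Symmetrically, only the intervals $[t_j,t_{j+1/2}]$ contribute to $\int_0^T(X_t^{lag}-X_0^{lag})\,dX_t^{lead}$, where the lag coordinate is frozen at $X_{t_j}$, yielding $\sum_{j=0}^{N-1}(X_{t_j}-X_{t_0})(X_{t_{j+1}}-X_{t_j})$.

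Finally I would subtract the two expressions. Setting $\Delta_j:=X_{t_{j+1}}-X_{t_j}$, the coefficient of $\Delta_j$ in the difference is $(X_{t_{j+1}}-X_{t_0})-(X_{t_j}-X_{t_0})=\Delta_j$, so the base-point shift $X_{t_0}$ cancels and the Lévy area reduces to $\tfrac12\sum_{j=0}^{N-1}\Delta_j^2=\tfrac12\sum_{j=0}^{N-1}(X_{t_{j+1}}-X_{t_j})^2$, which is the asserted identity. There is no genuine analytic difficulty here: each integral is elementary precisely because the integrating coordinate is affine while the integrand coordinate is constant on every relevant subinterval. The only thing demanding care is the bookkeeping---pairing each subinterval of the refined partition with its step $j$ and tracking which coordinate is frozen---so I would first tabulate the two node sequences explicitly to make the telescoping cancellation fully transparent.
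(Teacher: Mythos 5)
Your proof is correct and is essentially the paper's own argument: the paper states the proposition as a direct consequence of the definition of the lead-lag transformation, and the underlying computation (visible in the second-order signature terms displayed in Example \ref{ex:time_reparametrization}) is exactly your decomposition over the refined partition, where on each subinterval one coordinate is frozen so each Riemann--Stieltjes integral reduces to a finite sum, and the difference telescopes to $\sum_{j}(X_{t_{j+1}}-X_{t_j})^2$. Your bookkeeping of which coordinate moves on $[t_j,t_{j+1/2}]$ versus $[t_{j+1/2},t_{j+1}]$ matches the definitions, and the cancellation of the base point $X_{t_0}$ is handled correctly.
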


\begin{rk}\label{rk:cum_lead_lag}
We also mentioned in Section 3.1.2. that the cumulative lead-lag transformation $\tilde{X}$ of a sequence of observations $(X_{t_i})_{i=0,\dots,N}$ on $[0,T]$ can be related to the statistical moments of $X$. Indeed, the term of order 1 of the signature of $\tilde{X}$ is given by:
\begin{equation}
\tilde{\mathbf{X}}^1 = \begin{pmatrix}
\tilde{X}_T - \tilde{X}_0 \\
\tilde{X}_T - \tilde{X}_0 \\
\end{pmatrix}
= \begin{pmatrix}
\sum_{i=0}^{N} X_{t_i}  \\
\sum_{i=0}^{N} X_{t_i}  \\
\end{pmatrix}
\end{equation}
which is the empirical mean of $X$ up to a factor $1/(N+1)$. From Proposition \ref{prop:levy_area_ll}, we also deduce that the Levy area of the cumulative lead-lag transformation is given by $\frac{1}{2}\sum_{i=0}^{N} (\tilde{X}_{t_{i+1}}-\tilde{X}_{t_i})^2 =\frac{1}{2}\sum_{i=0}^{N} X_{t_i}^2 $ which is the empirical second order (non-central) moment of $X$ up to a factor $1/(N+1)$. More generally, the $n$-th (non-central) moment of $X$ can be obtained from the term of order $n$ of the signature of the cumulative lead-lag transformation. 
\end{rk}

We have seen in our numerical experiments in Section 3.2.2. that the lead-lag transformation is not always sufficient to distinguish models that are too close from a statistical perspective. In the following example, we show that, as the time step converges to 0, the first two terms of the signature of the lead-lag transformation of a driftless Black-Scholes dynamics with constant volatility have the same distributions as the first two terms of the signature of the lead-lag transformation of a driftless Black-Scholes dynamics with a time-dependent deterministic volatility if the total variances at time $T$ of both models are the same. 
\begin{ex}\label{ex:time_reparametrization}
Consider $X$ and $Y$ the solutions of the following SDE's
\begin{equation}
    dX_t = \sigma X_t dW_t \text{ and } dY_t = \gamma(t) Y_t dW_t
\end{equation}
with $X_0=Y_0=1$ and where $(W_t)_{t\ge 0}$ is a Brownian motion and $\gamma$ is a deterministic function satisfying $\int_0^T\gamma(t)^2dt = \sigma^2T$. The explicit formulas of $X$ and $Y$ write:
\begin{equation}
X_t = \exp\left(\sigma W_t -\frac{1}{2}\sigma^2t \right),\quad Y_t = \exp\left(\int_0^t \gamma(s)dW_s -\frac{1}{2}\int_0^t \gamma(s)^2ds \right).
\end{equation}
Let us denote by $\hat{X}_N$ (resp. $\hat{Y}_N$) the lead-lag transformation of $X$ (resp. $Y$) on a partition $(t_i)_{i=0,\dots,N}$ of $[0,T]$ such that $t_i=iT/N$. The constraint $\int_0^T\gamma(t)^2dt = \sigma^2T$ implies that $X_T \overset{d}{=} Y_T$ so the first order terms of the signatures of $\hat{X}_N$ and $\hat{Y}_N$ (which reduce to the increments of $X$ and $Y$ over $[0,T]$) have the same distribution for all $N\ge 1$. The second order term of the signature of $\hat{Y}_N$ is given by (see the proof of the above proposition):
\begin{equation}
\hat{\mathbf{Y}}_N^2 = \begin{pmatrix}
\frac{1}{2}(Y_T-Y_0)^2 & \sum_{i=0}^{N-1} \left[(Y_{t_{i+1}}-Y_{t_i})^2 + (Y_{t_i}-Y_0)(Y_{t_{i+1}}-Y_{t_i}) \right] \\
\sum_{i=0}^{N-1} (Y_{t_i}-Y_0)(Y_{t_{i+1}}-Y_{t_i}) & \frac{1}{2}(Y_T-Y_0)^2
\end{pmatrix}
\end{equation}
Now, given that $Y$ is a square-integrable continuous martingale, the coefficient at position $(1,2)$ of $\hat{\mathbf{Y}}_N$ converges in probability as $N\to +\infty$ to $\langle Y \rangle_T + \int_0^T(Y_t-Y_0) dY_t =  \frac{1}{2}\left[(Y_T-Y_0)^2 + \langle Y\rangle_T\right]$ where $\langle Y \rangle$ denotes the quadratic variation process of $Y$ and the equality is obtained using the integration by parts formula. Similarly, the coefficient at position $(2,1)$ of $\hat{\mathbf{Y}}_N$ converges in probability as $N\to +\infty$ to $\int_0^T(Y_t-Y_0) dY_t = \frac{1}{2}\left[(Y_T-Y_0)^2 - \langle Y\rangle_T\right]$. The same convergences hold for $\hat{\mathbf{X}}_N$. Now remark that the processes $\left(\int_0^t \gamma(s)dW_s \right)_{t\ge 0}$ and $\left(W_{\int_0^t \gamma(s)^2ds}\right)_{t\ge 0}$ are both Gaussian processes with the same mean and the same covariance function, we deduce that they have the same distribution. Analogously, $(\sigma W_t)_{t\ge 0}$ has the same distribution as $(W_{\sigma^2t})$. We deduce that:
\begin{equation}
(X_t)_{t\ge 0} \overset{d}{=} \left(\exp\left( W_{\sigma^2t} -\frac{1}{2}\sigma^2t \right)\right)_{t\ge 0} \text{ and }
(Y_t)_{t\ge 0} \overset{d}{=} \left(\exp \left(W_{\int_0^t \gamma(s)^2ds} -\frac{1}{2}\int_0^t \gamma(s)^2ds \right)\right)_{t\ge 0}
\end{equation}
Setting $\varphi(t)=\frac{1}{\sigma^2}\int_0^t \gamma(s)^2ds$, we deduce that $(Y_t)_{0\le t \le T} \overset{d}{=} (X_{\varphi(t)})_{0\le t \le T}$. As a consequence, $(Y_t,\langle Y \rangle_t)_{0\le t \le T}$ has the same distribution as $(X_{\varphi(t)},\langle X \rangle_{\varphi(t)} - \langle X \rangle_{\varphi(0)})_{0\le t \le T}$. Since $\varphi(0)=0$ and $\varphi(T)=T$, we conclude that the limit of $\hat{\mathbf{Y}}_N$ has the same distribution as the limit of $\hat{\mathbf{X}}_N$.
\end{ex}

\paragraph{The log-signature \\}
We now introduce more formally the log-signature. We recall that the space of formal series of tensors is defined as:
\begin{equation}
T(E) = \left\{(\mathbf{t}^n)_{n\ge 0} \mid \forall n\ge 0, \mathbf{t}^n \in E^{\otimes n} \right\}
\end{equation}
with the convention $E^{\otimes 0} = \mathbb{R}$. This space can be equipped with the following operations: for $\mathbf{t}$, $\mathbf{u}\in T(E)$, $\lambda \in \mathbb{R}$,
\begin{equation}\label{eq:tensor_operations}
\begin{array}{rcl}
\mathbf{t}+\mathbf{u}&=&(\mathbf{t}^n+\mathbf{u}^n)_{n\ge 0} \\
\lambda \mathbf{t} &=& (\lambda \mathbf{t}^n)_{n\ge 0} \\
\mathbf{t}\otimes \mathbf{u} &=& \left(\mathbf{v}^n=\sum_{k=0}^n \mathbf{t}^k\otimes \mathbf{u}^{n-k} \right)_{n \ge 0}.
\end{array}
\end{equation}
Since by convention the term of order 0 of the signature is set to 1, the signature takes its values in the following affine subspace of $T(E)$:
\begin{equation}
T_1(E) = \left\{\mathbf{t} \in T(E) \mid \mathbf{t}^0=1 \right\}.
\end{equation}
A closely related subspace of $T(E)$ is the following:
\begin{equation}
T_0(E) = \left\{\mathbf{t} \in T(E) \mid \mathbf{t}^0=0 \right\}.
\end{equation}
In fact, there is a bijection between $T_1(E)$ and $T_0(E)$ (Lemma 2.21 in \citeauthor{lyons2007differential}, \citeyear{lyons2007differential}):
\begin{prop}
Let us define respectively the exponential and logarithm mappings as:
\begin{equation}
\begin{array}{rrcl}
\exp: &T_0(E) & \rightarrow& T_1(E) \\
&\mathbf{t} & \mapsto & \exp(\mathbf{t}):=\displaystyle\sum_{n\ge 0} \frac{\mathbf{t}^{\otimes n}}{n!}
\end{array}
\text{ and }
\begin{array}{rrcl}
 \log: &   T_1(E) & \rightarrow& T_0(E) \\
    &\mathbf{t} & \mapsto & \log(\mathbf{t}):=\displaystyle\sum_{n\ge 1} \frac{(-1)^{n-1}}{n}(\mathbf{t}-\mathbf{1})^{\otimes n} 
    \end{array}
\label{eq:log}
\end{equation}
with the convention $t^{\otimes 0}=1$ and where $\mathbf{1}=(1,0,\dots,0,\dots)\in T_1(E)$. The exponential mapping is bijective from $T_0(E)$ to $T_1(E)$ and its inverse is the logarithm mapping. 
\end{prop}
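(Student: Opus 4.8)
The plan is to treat $\exp$ and $\log$ as formal power series evaluated at a single element of the graded algebra $T(E)$, and to reduce the two inverse identities to the classical scalar ones by exploiting that all tensor powers of a fixed element commute with one another.

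First I would check that both maps are well defined and have the claimed ranges. The crucial observation is that if $\mathbf{s} \in T_0(E)$, i.e. $\mathbf{s}^0 = 0$, then the tensor power $\mathbf{s}^{\otimes n}$ has vanishing components in every degree strictly below $n$; this follows from the formula for $\otimes$ in \eqref{eq:tensor_operations} by induction, since the convolution of two series having no degree-$0$ term raises the minimal degree by one. Consequently, for a fixed degree $k$, only the terms with $n \le k$ contribute to the degree-$k$ component of $\exp(\mathbf{t})$ (with $\mathbf{t} \in T_0(E)$) and of $\log(\mathbf{t})$ (with $\mathbf{t}-\mathbf{1} \in T_0(E)$ when $\mathbf{t} \in T_1(E)$), so each graded component is a finite sum and both series define genuine elements of $T(E)$. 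Reading off degree $0$ then settles the ranges: in $\exp(\mathbf{t})$ only the $n=0$ term $\mathbf{t}^{\otimes 0}=\mathbf{1}$ contributes, so $\exp(\mathbf{t})^0=1$; in $\log(\mathbf{t})$ every summand is a positive power of $\mathbf{t}-\mathbf{1}\in T_0(E)$, so $\log(\mathbf{t})^0=0$.

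The heart of the argument is the inverse identity. Here I would exploit that, for a fixed $\mathbf{t}$, all its tensor powers commute, so the unital subalgebra of $T(E)$ generated by $\mathbf{t}$ is commutative. Within this commutative filtered algebra, the manipulations proving the classical formal-power-series identities $\log(\exp(X))=X$ and $\exp(\log(1+Y))=1+Y$ in $\mathbb{R}[[X]]$ and $\mathbb{R}[[Y]]$ transfer verbatim after substituting $X=\mathbf{t}$ (for $\mathbf{t}\in T_0(E)$) and $1+Y=\mathbf{t}$ (for $\mathbf{t}\in T_1(E)$). The rearrangements of the resulting double sums are legitimate precisely because, by the degree-raising property above, each fixed graded component involves only finitely many terms. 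This yields $\log\circ\exp=\mathrm{id}_{T_0(E)}$ and $\exp\circ\log=\mathrm{id}_{T_1(E)}$, hence the asserted bijectivity.

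The main obstacle I anticipate is making the substitution into the scalar identities fully rigorous: one must verify that the combinatorial identities among the coefficients underlying $\log\circ\exp=\mathrm{id}$ hold degree by degree in the graded setting and that interchanging the summation order is justified. Because $T(E)$ is noncommutative in general, it is the reduction to the commutative subalgebra generated by a single element that makes the classical proof transferable, and isolating this point carefully is what the argument really hinges on.
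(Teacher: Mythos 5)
Your proof is correct. Note that the paper itself offers no proof of this proposition: it simply defers to Lemma 2.21 of Lyons, Caruana and L\'evy (2007), so your write-up supplies an argument the paper omits. What you give is precisely the standard proof behind that lemma: the two hinges you identify are exactly the right ones, namely (i) the degree-raising property of the product on $T_0(E)$ (so that each graded component of $\exp(\mathbf{t})$ and $\log(\mathbf{t})$ is a finite sum, making the series well defined and all rearrangements legitimate degree by degree), and (ii) the fact that all tensor powers of a single fixed element commute, which is what allows the scalar identities $\log(\exp(X))=X$ and $\exp(\log(1+Y))=1+Y$ in $\mathbb{R}[[X]]$ to be substituted term by term. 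Phrased slightly more abstractly, your argument shows that for fixed $\mathbf{t}\in T_0(E)$ the evaluation map $f\mapsto f(\mathbf{t})$ from $\mathbb{R}[[X]]$ to $T(E)$ is an algebra homomorphism compatible with composition of series having zero constant term; spelling that out would be the cleanest way to discharge the "combinatorial identities transfer degree by degree" step you flag as the remaining obstacle, and there is no gap beyond that routine verification.
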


\addtocounter{ex}{-3}
\begin{ex}[continued]\label{ex:logsig_linear_path}
Using the exponential and the logarithm mappings, we can rewrite the signature in Example \ref{ex:sig_linear_path} in the following way:
\begin{equation}
S(X) = \exp(X_T-X_0) 
\end{equation}
where $X_T-X_0$ should be interpreted as the element $(0,X_T-X_0,0,\dots,0,\dots)$ of $T_0(E)$. 
Moreover, 
\begin{equation}
\log(S(X)) = X_T-X_0. 
\end{equation}
\end{ex}
\addtocounter{ex}{2}

Using the logarithm, it is therefore possible to define the log-signature of a path $X$ as $\log(S(X))$. Although there is a one-to-one correspondence between the signature and the log-signature, the log-signature is a more parsimonious representation of the path than the signature in the sense that it removes the redundancies. This can be seen in Example \ref{ex:logsig_linear_path}: the only non-zero term of the log-signature of a linear path is the term of order 1 which contains the increments of the path. In comparison to the signature, all the powers of the increments have disappeared. However, no information is lost. More generally, it can be shown (see for example \citeauthor{liao2019learning}, \citeyear{liao2019learning}) that the log-signature has more zeros than the signature. As such, it represents a useful object for applications as it allows to avoid the exponential increase of the size of the truncated signature with the order. Indeed, if $E$ is a vector space of dimension $d$, the term of order $n$ of the signature has $d^n$ elements.

\begin{ex}\label{ex:log_sig_levy_area}
Let us consider $X\in \mathcal{C}^1([0,T],\mathbb{R}^2)$. The second order term of the log-signature writes:
\begin{equation}
\mathbf{lX}^2 = \mathbf{X}^2 - \frac{1}{2}\mathbf{X}^1\otimes \mathbf{X}^1
\end{equation} 
where $\mathbf{X}^2$ comes from the first term ($n=1$) of the log series in Equation (\ref{eq:log}) and $\mathbf{X}^1\otimes \mathbf{X}^1$ comes from the second term ($n=2$). We have:
\begin{equation}
\mathbf{X}^2 = \begin{pmatrix}
\frac{(X_T^1-X_0^1)^2}{2} & \int_0^T(X_t^1-X_0^1)dX_t^2 \\
\int_0^T(X_t^2-X_0^2)dX_t^1 & \frac{(X_T^2-X_0^2)^2}{2} 
\end{pmatrix}
\end{equation} 
and
\begin{equation}
\mathbf{X}^1\otimes \mathbf{X}^1 = \begin{pmatrix}
(X_T^1-X_0^1)^2 & (X_T^1-X_0^1)(X_T^2-X_0^2) \\
(X_T^1-X_0^1)(X_T^2-X_0^2) & (X_T^2-X_0^2)^2 
\end{pmatrix}.
\end{equation}
Using the integration by part formula, we obtain:
\begin{equation}
\mathbf{lX}^2 = \underbrace{\frac{1}{2}\left(\int_0^T(X_t^1-X_0^1)dX_t^2-\int_0^T(X_t^2-X_0^2)dX_t^1 \right)}_{\text{Lévy area of $X$}}\begin{pmatrix}
0& 1\\
-1 & 0 
\end{pmatrix}
\end{equation}
Hence, the second order term of the log-signature reduces to the Lévy area. 
\end{ex}
\begin{rk}
Note that only the $N$ first terms of the logarithm series (\ref{eq:log}) contribute to the $N$-th term of the log-signature. Indeed, for $n > N$, the contributions to $E^{\otimes N}$ of $(\mathbf{t}-\mathbf{1})^{\otimes n}$ always involve some product by $(\mathbf{t}-\mathbf{1})^0=0$.  
\end{rk}

\section{Properties}\label{sec:sig_properties}
We have seen in the first subsection that the signature allows to capture some information about the path. A natural question at this stage is how much information about $X$ does the signature of $X$ contain. This subsection aims at answering this question. 

\begin{prop}[Invariance under time reparametrization]\label{prop:time_reparam}
Let $X\in \mathcal{C}^1([0,T],E) $ and consider $\varphi:[0,T]\rightarrow [0,T]$ a non-decreasing surjection. If we set $\tilde{X}_t=X_{\varphi(t)}$, then:
\begin{equation}
S(\tilde{X})=S(X).
\end{equation}
\end{prop}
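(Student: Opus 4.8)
The plan is to prove the invariance level by level on the signature, i.e. to show that $\tilde{\mathbf{X}}^n = \mathbf{X}^n$ for every $n \ge 0$ by a direct change of variables in the iterated Riemann--Stieltjes integrals. First I would record the base cases: $\tilde{\mathbf{X}}^0 = 1 = \mathbf{X}^0$ by convention, and for $n=1$ one has $\tilde{\mathbf{X}}^1 = \tilde{X}_T - \tilde{X}_0 = X_{\varphi(T)} - X_{\varphi(0)} = X_T - X_0 = \mathbf{X}^1$ since $\varphi$ is a non-decreasing surjection of $[0,T]$ onto itself, forcing $\varphi(0)=0$ and $\varphi(T)=T$.

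For the general term I would write out the defining integral
\begin{equation}
\tilde{\mathbf{X}}^n = \int_{0 \le u_1 < \dots < u_n \le T} d\tilde{X}_{u_1} \otimes \dots \otimes d\tilde{X}_{u_n}
\end{equation}
and substitute $\tilde{X}_{u_k} = X_{\varphi(u_k)}$. The key step is the change of variables $s_k = \varphi(u_k)$ in each coordinate of the simplex. Since $\varphi$ is non-decreasing and surjective, it maps the ordered simplex $\{0 \le u_1 < \dots < u_n \le T\}$ onto $\{0 \le s_1 \le \dots \le s_n \le T\}$, and because the Riemann--Stieltjes integral of a continuous integrand against $X$ is insensitive both to the parametrization and to the (measure-zero) diagonal where consecutive $s_k$ coincide, the integral transforms into
\begin{equation}
\tilde{\mathbf{X}}^n = \int_{0 \le s_1 < \dots < s_n \le T} dX_{s_1} \otimes \dots \otimes dX_{s_n} = \mathbf{X}^n.
\end{equation}
The cleanest way to make this rigorous is by induction on $n$, using the iterated structure of the signature: writing $\tilde{\mathbf{X}}^n_t$ for the signature of $\tilde{X}$ restricted to $[0,t]$, one has $d\tilde{\mathbf{X}}^n_t = \tilde{\mathbf{X}}^{n-1}_t \otimes d\tilde{X}_t$, and the inductive hypothesis together with $d\tilde{X}_t = dX_{\varphi(t)}$ lets one reduce the identity on $[0,T]$ to the same identity one level down.

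The main obstacle is the analytic justification of the change of variables when $\varphi$ is merely a non-decreasing continuous surjection rather than a diffeomorphism: $\varphi$ need not be injective, so it may be constant on subintervals, and it need not be differentiable. The point to argue carefully is that on any interval where $\varphi$ is constant the corresponding Stieltjes increment $d\tilde{X} = dX_{\varphi(\cdot)}$ vanishes, so those flat pieces contribute nothing, while on the remaining set the substitution $s=\varphi(u)$ is legitimate because the Riemann--Stieltjes integral $\int f(X_s)\,dX_s$ depends only on the image path and not on its parametrization. I would phrase this as the invariance of the Riemann--Stieltjes integral under reparametrization for bounded variation integrators, which is the genuine content of the proposition; the extension from the first-level integral to all iterated integrals is then a routine induction.
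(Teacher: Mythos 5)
Your proposal is correct, but note that the paper itself does not prove this proposition: it simply cites Proposition 7.10 of Friz and Victoir (2010), whose argument is essentially the one you outline, so your plan reconstructs the standard textbook proof rather than anything specific to this paper. Two points in your sketch deserve to be made explicit when writing it up. First, for the induction to close you must carry the pointwise-in-$t$ statement $\tilde{\mathbf{X}}^n_t = \mathbf{X}^n_{\varphi(t)}$ for all $t\in[0,T]$ (not merely the endpoint identity at $t=T$), since the level-$n$ integrand is the level-$(n-1)$ signature evaluated at the running time; the whole induction then rests on the single one-dimensional lemma
\begin{equation}
\int_0^t g(\varphi(s))\, dX_{\varphi(s)} = \int_0^{\varphi(t)} g(u)\, dX_u
\end{equation}
for continuous $g$ and bounded variation $X$, which you correctly identify as the genuine content. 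Second, you do not need to assume continuity of $\varphi$: a non-decreasing surjection of $[0,T]$ onto itself is automatically continuous (a jump would leave a gap in the image), and this is precisely what guarantees that the image partition $(\varphi(u_i))$ has mesh tending to zero when the mesh of $(u_i)$ does, which is the fact that legitimizes the Riemann--Stieltjes sum comparison; repeated points $\varphi(u_i)=\varphi(u_{i+1})$ coming from flat stretches of $\varphi$ contribute zero increments, exactly as you say. With these two clarifications, and the (trivial but needed) remark that $X\circ\varphi$ is again of bounded variation so its signature is well defined, your argument is complete and rigorous.
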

This first property (see Proposition 7.10 in \citeauthor{friz2010multidimensional}, \citeyear{friz2010multidimensional} for a proof) means that the speed at which the path is traversed is not captured by the signature. The signature is also invariant by translation. Indeed, if we define $\bar{X}_t = x + X_t$, then $d\bar{X}_t = dX_t$ and by definition of the signature we have $S(\bar{X})=S(X)$. The next property we will outline is Chen's identity. Before introducing it, we need the following definition.

\begin{defi}[Concatenation]
Let $X\in \mathcal{C}^1([0,t],E)$ and $Y \in \mathcal{C}^1([t,T],E)$. The concatenation of $X$ and $Y$ is the path in $\mathcal{C}^1([0,T],E)$ defined as:
\begin{equation}
(X*Y)_s = \left\{
\begin{array}{ll}
X_s & \text{if } s\in[0,t]\\
X_{t}+Y_s-Y_{t}& \text{if } s\in[t,T].
\end{array}
\right.
\end{equation}
\end{defi}

\begin{thm}[Chen's identity]\label{thm:chen}
Let $X\in \mathcal{C}^1([0,t],E)$ and $Y\in \mathcal{C}^1([t,T],E)$. Then,
\begin{equation}
S_{[0,T]}(X*Y) = S_{[0,t]}(X)\otimes S_{[t,T]}(Y).
\end{equation}
\end{thm}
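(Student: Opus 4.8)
The plan is to verify the identity componentwise: I compute the order-$n$ term $\mathbf{Z}^n$ of $S_{[0,T]}(X*Y)$ directly from Definition \ref{def:signature} and show it equals $\sum_{k=0}^n \mathbf{X}^k \otimes \mathbf{Y}^{n-k}$, which is exactly the order-$n$ term of $S_{[0,t]}(X)\otimes S_{[t,T]}(Y)$ under the tensor product on $T(E)$ defined in \eqref{eq:tensor_operations}. Writing $Z=X*Y$, the first step is the elementary observation that $Z$ is continuous and its Riemann--Stieltjes differential satisfies $dZ_s=dX_s$ for $s\in[0,t]$ and $dZ_s=dY_s$ for $s\in[t,T]$: on $[t,T]$ the path is $X_t+Y_s-Y_t$, so the constant shift drops out of the increments and the $Y$-part integrals coincide with those of $S_{[t,T]}(Y)$.

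Next I would split the integration domain of
\begin{equation}
\mathbf{Z}^n=\int_{0\le u_1<\dots<u_n\le T} dZ_{u_1}\otimes\dots\otimes dZ_{u_n}
\end{equation}
according to the number $k$ of integration variables lying in $[0,t]$. Since the $u_i$ are strictly ordered, these are necessarily $u_1,\dots,u_k$, and the open simplex $\{0\le u_1<\dots<u_n\le T\}$ is the disjoint union over $k=0,\dots,n$ of the regions $\{0\le u_1<\dots<u_k\le t\le u_{k+1}<\dots<u_n\le T\}$, up to the configurations where some $u_i=t$. Because $Z$ is continuous, the associated Stieltjes measures have no atom at $t$, so these boundary configurations are null and the decomposition is exact for the integral.

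On the $k$-th region the integrand becomes $dX_{u_1}\otimes\dots\otimes dX_{u_k}\otimes dY_{u_{k+1}}\otimes\dots\otimes dY_{u_n}$, and since the domain is the product of the two sub-simplices $\{0\le u_1<\dots<u_k\le t\}$ and $\{t\le u_{k+1}<\dots<u_n\le T\}$, Fubini's theorem factorizes the integral as $\mathbf{X}^k\otimes\mathbf{Y}^{n-k}$, where $\mathbf{X}^k=(S_{[0,t]}(X))^k$ and $\mathbf{Y}^{n-k}=(S_{[t,T]}(Y))^{n-k}$. Summing over $k$ yields $\mathbf{Z}^n=\sum_{k=0}^n \mathbf{X}^k\otimes\mathbf{Y}^{n-k}$ for every $n\ge 0$ (the case $n=0$ reducing to $1=1\cdot 1$), which is precisely the $n$-th component of $S_{[0,t]}(X)\otimes S_{[t,T]}(Y)$.

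The main point requiring care is the factorization of the iterated integral over the product domain, i.e. the validity of Fubini for Riemann--Stieltjes integrals against bounded variation (hence finite-variation measure) integrators, together with the verification that the hyperplanes $\{u_i=t\}$ carry no mass so that the simplex decomposition is exact; continuity of $Z$ is exactly what supplies the latter. The remainder is bookkeeping. An alternative route is an induction on $n$ using the recursive form $\mathbf{Z}^n=\int_0^T \bigl(\int_{0\le u_1<\dots<u_{n-1}<s} dZ_{u_1}\otimes\dots\otimes dZ_{u_{n-1}}\bigr)\otimes dZ_s$, but the explicit simplex-splitting argument is more transparent.
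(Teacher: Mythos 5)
The paper itself does not prove this theorem; it only points to Theorem 2.9 of \citet{lyons2007differential}, so there is no internal proof to compare against. Your argument is correct and is essentially the standard proof found in that reference: since the $u_i$ are ordered, the coordinates lying in $[0,t]$ form an initial segment, so the simplex splits into the regions indexed by $k$, the overlap configurations $\{u_i=t\}$ are null because the Lebesgue--Stieltjes measures of the continuous bounded variation path $Z=X*Y$ are non-atomic, and Fubini factorizes the $k$-th piece into $\mathbf{X}^k\otimes\mathbf{Y}^{n-k}$, whose sum over $k$ is exactly the $n$-th component of the tensor product defined in (\ref{eq:tensor_operations}). You also correctly note the two points that genuinely require care (Fubini for finite-variation integrators and the nullity of the hyperplanes), and that the constant shift $X_t-Y_t$ in the definition of concatenation drops out of the increments, so the proof is complete in all essentials.
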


A proof can be found in Theorem 2.9 of \citet{lyons2007differential}. A useful application of Chen's identity is the computation of the signature of a piecewise linear path. Let $(t_i)_{0\le i\le n}$ be a subdivision of $[0,T]$ and $X:[0,T]\rightarrow E$ be a path such that for $t\in [t_i,t_{i+1}]$ with $0 \le i \le n-1$,
\begin{equation}
X_t = X_{t_i} + \frac{X_{t_{i+1}}-X_{t_i}}{t_{i+1}-t_i}(t-t_i).
\end{equation}
Then by Chen's identity and by using that $S_{[t_i,t_{i+1}]}(X) = \exp(X_{t_{i+1}}-X_{t_i})$ (since $X$ is linear on each $[t_i,t_{i+1}]$),
\begin{equation}
S_{[0,T]}(X) = \bigotimes_{i=0}^{n-1} S_{[t_i,t_{i+1}]}(X) =  \bigotimes_{i=0}^{n-1} \exp(X_{t_{i+1}}-X_{t_i}).
\end{equation}
In general, the right hand side cannot be simplified to $\exp(X_{T}-X_{0})$ because the tensor product $\otimes$ is not commutative. Another consequence of Chen's identity is the following proposition (Proposition 2.14 in \citeauthor{lyons2007differential}, \citeyear{lyons2007differential}).
\begin{prop}[Time-reversal]
Let $X\in \mathcal{C}^1([0,T],E)$. Define $\overleftarrow{X}$ as $\overleftarrow{X}_t = X_{2T-t}$ for $t\in[T,2T]$. Then,
\begin{equation}
S_{[0,2T]}(X*\overleftarrow{X}) = S_{[0,T]}(X)\otimes S_{[T,2T]}(\overleftarrow{X}) = \mathbf{1}
\end{equation}
where we recall that $\mathbf{1}=(1,0,\dots,0,\dots)\in T_1(E)$. 
\end{prop}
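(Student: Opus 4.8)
The plan is to read off the first equality directly from Chen's identity and to devote the real work to the second one. By construction $X*\overleftarrow{X}$ is the concatenation of $X\in\mathcal{C}^1([0,T],E)$ with $\overleftarrow{X}\in\mathcal{C}^1([T,2T],E)$, so Theorem~\ref{thm:chen} immediately gives $S_{[0,2T]}(X*\overleftarrow{X})=S_{[0,T]}(X)\otimes S_{[T,2T]}(\overleftarrow{X})$. The whole content of the proposition is therefore the second equality: that $S_{[T,2T]}(\overleftarrow{X})$ is the tensor inverse of $S_{[0,T]}(X)$ in $T_1(E)$, i.e. $S_{[0,T]}(X)\otimes S_{[T,2T]}(\overleftarrow{X})=\mathbf{1}$.

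To prove this I would interpolate between the two endpoints and show constancy. Writing $Z:=X*\overleftarrow{X}$, note first that $Z_{2T-u}=X_u$ for $u\in[0,T]$, so the backward piece on $[T,2T]$ retraces $X$. For $t\in[0,T]$ set $B_t:=S_{[0,t]}(X)$, $C_t:=S_{[2T-t,2T]}(Z)$, and $A_t:=B_t\otimes C_t$. Then $A_0=\mathbf{1}\otimes\mathbf{1}=\mathbf{1}$ (both are signatures over a degenerate interval), whereas $A_T=S_{[0,T]}(X)\otimes S_{[T,2T]}(\overleftarrow{X})$ is exactly the quantity we want, so it suffices to show $t\mapsto A_t$ is constant. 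Each factor obeys the integral equation of the signature, which one obtains from Chen's identity applied to a short interval increment: growing the \emph{right} endpoint of the forward piece gives $B_t=\mathbf{1}+\int_0^t B_u\otimes dX_u$, while growing the \emph{left} endpoint of the backward piece prepends the increment $Z_{2T-t}-Z_{2T-t-h}=X_t-X_{t+h}$, so the sign flips and $C_t=\mathbf{1}-\int_0^t dX_u\otimes C_u$. Applying the (non-commutative) integration-by-parts formula in $T^*(E)$ to $A_t=B_t\otimes C_t$ yields
\[
A_t=\mathbf{1}+\int_0^t (B_u\otimes dX_u)\otimes C_u-\int_0^t B_u\otimes(dX_u\otimes C_u),
\]
and by associativity of $\otimes$ both integrands equal $B_u\otimes dX_u\otimes C_u$; they cancel, so $A_t\equiv\mathbf{1}$ and in particular $A_T=\mathbf{1}$.

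As a computational cross-check one can instead evaluate $S_{[T,2T]}(\overleftarrow{X})$ term by term via the substitution $s=2T-u$ in each iterated integral. Keeping track of the orientation reversal, the term of order $n$ equals $(-1)^n\int_{0\le w_1<\dots<w_n\le T} dX_{w_n}\otimes\dots\otimes dX_{w_1}$, i.e. the word of $\mathbf{X}^n$ read backwards carrying a sign $(-1)^n$. Denoting this term $\mathbf{R}^n$, the required identity $\sum_{k=0}^n \mathbf{X}^k\otimes \mathbf{R}^{n-k}=0$ for $n\ge1$ is precisely the antipode relation expressing that the group-like element $S_{[0,T]}(X)$ has $\mathbf{R}$ as its inverse; it can be verified by induction on $n$, using that products $\mathbf{X}^j\otimes\mathbf{X}^k$ reassemble full integration domains (e.g. $\mathbf{X}^1\otimes\mathbf{X}^1=\mathbf{X}^2+\mathbf{R}^2$, whence the $n=2$ cancellation).

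I expect the only real obstacle to be technical rather than conceptual: making the increment equations for $B_t,C_t$ and the integration-by-parts step rigorous for bounded variation paths, where $\dot X$ need not exist. This is handled by reading all the integrals in the Riemann--Stieltjes sense, using Chen's identity on small subintervals, and controlling remainders through $\|S_{[s,s']}(X)-\mathbf{1}\|\to 0$ as $s'\to s$. Continuity of the paths (hence no common jumps) guarantees that the integration-by-parts formula holds with no correction term, which is what makes the two integrals cancel exactly.
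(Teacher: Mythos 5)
Your proof is correct. Note first that the paper itself does not prove this proposition: it states it as ``another consequence of Chen's identity'' and cites Proposition 2.14 of \citet{lyons2007differential} for the result. Your first step (the first equality via Theorem \ref{thm:chen}) coincides with the paper's framing, but everything after that is content the paper delegates to the reference, and you supply it correctly. Your constancy argument is sound: $B_t=S_{[0,t]}(X)$ satisfies $dB_t=B_t\otimes dX_t$ by the very definition of the iterated integrals, the left-endpoint Chen expansion indeed gives $dC_t=-dX_t\otimes C_t$ (the sign coming from $Z_{2T-t}-Z_{2T-t-h}=X_t-X_{t+h}$), and the cancellation in the product rule is exactly associativity of $\otimes$. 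The one point that deserves the care you flag at the end is the integration by parts in $T(E)$; it is legitimately reduced to elementary Riemann--Stieltjes calculus because the identity can be checked level by level, where the $n$-th level of $B_t\otimes C_t$ is the finite sum $\sum_{k=0}^n B_t^k\otimes C_t^{n-k}$ of products of continuous bounded-variation functions, and continuity rules out any correction term. Your second, combinatorial route --- computing $S_{[T,2T]}(\overleftarrow{X})^n=(-1)^n$ times the reversed word of $\mathbf{X}^n$ and verifying the antipode relation $\sum_{k=0}^n\mathbf{X}^k\otimes\mathbf{R}^{n-k}=0$ --- is in fact closer in spirit to the proof in the cited reference, which works directly with the iterated integrals; the differential-equation argument you lead with is the more structural one, and it generalizes immediately to showing that any signature-type development has the reversed path as its group inverse. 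Either route alone would suffice; together they make a complete and self-contained proof of a statement the paper only quotes.
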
  
Because constant paths also have $\mathbf{1}$ as signature, the above proposition implies that $X*\overleftarrow{X}$ has the same signature as constant paths.\\

Due to the invariance by reparametrisation and by translation and the time-reversal property, it is clear that if two paths have the same signature, then they are not necessarily equal. In other words, the signature mapping is not injective. Fortunately, the presented invariances and the time-reversal property are essentially the only cases when paths can differ but have the same signature. To make this precise, we need the notion of tree-like paths.

\begin{defi}[Tree-like path]\label{def:tree_like_path}
A path $X:[0,T]\rightarrow E$ is tree-like if there exists a continuous function $h:[0,T]\rightarrow [0,+\infty[$ such that $h(0)=h(T)=0$ and for all $s,t\in[0,T]$ with $s\le t$:
\begin{equation}\label{eq:tree_like}
\|X_t-X_s\|_E \le h(s) + h(t) -2\inf_{u \in [s,t]} h(u).
\end{equation}
This function is called a height function for the path $X$. 
\end{defi}
\begin{rk}
Note that a tree-like path necessarily satisfies $X_0 = X_T$. Indeed, by Definition \ref{def:tree_like_path}:
\begin{equation}
\|X_T-X_0\|_E \le h(0)+h(T)-2\inf_{u\in[0,T]} h(u) = 0
\end{equation}
because $h(0)=h(T)=0$ and $h$ is non-negative. Therefore, one way to turn a tree-like path into a path that is not tree-like is to consider the path $t\mapsto (t,X_t)$ obtained as the time transformation of $X$.
\end{rk}
As suggested by their name, tree-like paths are paths whose graph looks like a tree (see Figure \ref{fig:tree_like_path}), i.e. an acyclic and connected graph in graph theory and the height function $h$ corresponds to the depth of each node of the tree in a depth-first search. Another equivalent way to see tree-like paths is to see them as paths that can be reduced to a constant path by removing pieces of the form $W*\overleftarrow{W}$. For example, if $X$ and $Y$ are non-constant paths, $X*Y*\overleftarrow{Y}*\overleftarrow{X}$ is an example of tree-like path. 
\begin{figure}
\centering
\includegraphics[scale=0.8]{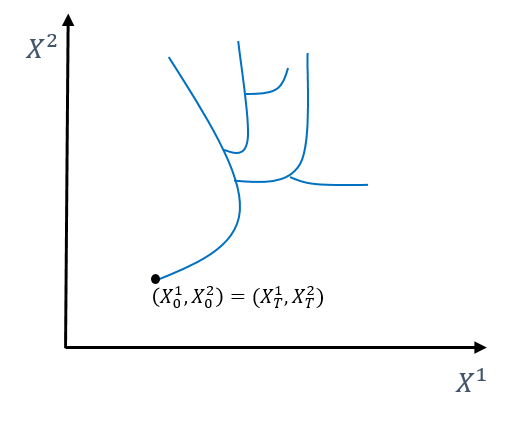}
\caption{Example of a tree like path.}
\label{fig:tree_like_path}
\end{figure}
This notion of tree-like paths is crucial to understand the information that is not captured by the signature as \citet{hambly2010uniqueness} showed that the signature determines the path up to tree-like equivalence, which we will now define.
\begin{defi}[Tree-like equivalence]
For $X$ and $Y$ two paths, we say that $X$ and $Y$ are tree-like equivalent if $X*\overleftarrow{Y}$ is a tree-like path. This relation is denoted by $X\sim_t Y$.
\end{defi}

We can now state Hambly and Lyons's theorem. 
\begin{thm}\label{thm:sig_injectivity}
Let $X \in \mathcal{C}^1([0,T],E)$. Then $S(X)=\mathbf{1}$ if and only if $X$ is a tree-like path. Moreover, if $Y\in\mathcal{C}^1([0,T],E) $ is another bounded variation path, then $S(X)=S(Y)$ if and only if $X\sim_t Y$. 
\end{thm}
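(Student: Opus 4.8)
The plan is to deduce the second, more general assertion from the first one, so the real work is to establish the equivalence $S(X)=\mathbf{1}\iff X\text{ tree-like}$, and then to bootstrap from it. The reduction rests on the algebraic structure of signatures: the signatures of bounded variation paths are group-like elements of $T^*(E)$ and form a group under the tensor product $\otimes$, whose identity is $\mathbf{1}$ and whose inverse is realized by time-reversal. Indeed, the time-reversal property quoted above shows that $S(\overleftarrow{Y})=S(Y)^{-1}$. Combining this with Chen's identity (Theorem \ref{thm:chen}) gives
\begin{equation}
S(X*\overleftarrow{Y}) = S(X)\otimes S(\overleftarrow{Y}) = S(X)\otimes S(Y)^{-1}.
\end{equation}
Hence $S(X)=S(Y)$ holds if and only if $S(X*\overleftarrow{Y})=\mathbf{1}$, and by the first assertion this is equivalent to $X*\overleftarrow{Y}$ being tree-like, i.e.\ to $X\sim_t Y$. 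Thus the whole theorem follows once the equivalence for the trivial signature is proved, and I would organize the argument around that single equivalence.

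For the easy implication (tree-like $\Rightarrow$ trivial signature) I would first dispatch the elementary case of an out-and-back concatenation $W*\overleftarrow{W}$: by Chen's identity and time-reversal its signature is $S(W)\otimes S(W)^{-1}=\mathbf{1}$, and the same holds for any path obtained by nested insertions of such pieces. A general tree-like path, whose graph is (up to reparametrization) a traversal of a tree encoded by the height function $h$, can then be approximated by out-and-back piecewise-linear paths with uniformly controlled total variation, and one concludes by continuity of the signature on bounded variation paths in a topology controlling the variation; alternatively, one can argue directly from the inequality in the definition of the height function that the iterated integrals telescope and cancel. The care needed here lies in the approximation and in the term-by-term passage to the limit in the iterated integrals.

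The genuine obstacle is the converse, $S(X)=\mathbf{1}\Rightarrow X$ tree-like, which is the analytic heart of Hambly and Lyons's theorem and cannot be reached by algebra alone. The strategy I would follow is theirs: associate to $X$ a canonical \emph{reduced path} obtained by erasing all maximal backtracking excursions of the form $W*\overleftarrow{W}$, and show that (i) the reduction is well defined and leaves the signature unchanged, so the reduced path again has trivial signature, and (ii) the reduced path is irreducible, containing no further cancellable excursion. Since a path is tree-like precisely when its reduction is constant, it then suffices to prove that an irreducible bounded variation path with trivial signature is constant — equivalently, that a \emph{non-constant} irreducible path has a non-trivial signature.

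This last statement is where the difficulty concentrates. Hambly and Lyons establish it by a delicate geometric and analytic argument: one localizes near an extremal point of the reduced trajectory where the path genuinely progresses without immediately returning, and one builds a test functional — obtained by developing the path into a matrix Lie group, i.e.\ as the solution of a linear controlled differential equation driven by $X$ — whose value is forced to detect that genuine progression. Controlling this development globally and ruling out cancellation against the rest of the path, which is exactly where the full strength of the bounded variation hypothesis and of the reduced path group structure enters, is the main hurdle; at the level of this sketch I would invoke their result rather than reconstruct it.
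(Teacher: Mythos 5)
The paper does not actually prove Theorem \ref{thm:sig_injectivity}: it states it as Hambly and Lyons's theorem and refers to \citet{hambly2010uniqueness} (with the later extension to weakly geometric rough paths by \citet{boedihardjo2016signature}), so there is no internal proof to compare yours against. Your outline is nevertheless accurate and consistent with how the result is established in the literature. The bootstrap you give — signatures of bounded variation paths form a group under $\otimes$, with $S(\overleftarrow{Y})=S(Y)^{-1}$ by the time-reversal proposition, so that Chen's identity (Theorem \ref{thm:chen}) yields $S(X)=S(Y)$ iff $S(X*\overleftarrow{Y})=\mathbf{1}$, which by the first assertion is equivalent to $X\sim_t Y$ — is correct and standard, and it is the one piece of genuine mathematical content your write-up adds beyond what the paper records. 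Your sketch of the implication (tree-like $\Rightarrow$ trivial signature) points at the right mechanism (cancellation of $W*\overleftarrow{W}$ pieces plus an approximation and continuity argument), though the passage to the limit is considerably more delicate than the sketch suggests, since the height function in Definition \ref{def:tree_like_path} is only continuous. For the converse, which you rightly identify as the analytic heart (reduced paths, development into a Lie group via a linear controlled differential equation), you invoke Hambly and Lyons's argument rather than reconstruct it; since the statement under discussion \emph{is} their theorem and the paper itself handles it by citation, this is a reasonable treatment, but be aware that your proposal is therefore a correct roadmap rather than a self-contained proof: the crux is imported, not proved.
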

This theorem can be understood as follows: two paths will have the same signature if and only if one can be obtained from the second by using translations, by changing the traversal speeds and by removing parts of the form $W*\overleftarrow{W}$. This uniqueness result has then been extended to a more general class of paths (namely weakly geometric rough paths) by \citet{boedihardjo2016signature}. 

\begin{rk}
The conclusion of Theorem \ref{thm:sig_injectivity} still holds if the signature is replaced by the log-signature since the log mapping is a bijection. Note however that the first statement of the theorem should be modified as follows: $\log(S(X)) = \mathbf{0}$ if and only if $X$ is a tree-like path where $\mathbf{0}=(0,\dots,0,\dots)\in T_0(E)$. 
\end{rk}

We have seen that in dimension 1, the signature only captures the path increment between 0 and $T$ (see Example 2.1. in Section 2.2.) so that the signature will only allow to distinguish paths $X$ and $Y$ such that $X_T-X_0\neq Y_T-Y_0$. This result is actually a consequence of the following proposition and of Theorem \ref{thm:sig_injectivity}. 
\begin{prop}
If $E$ is a one-dimensional real vector space and $X$, $Y$ are $E$-valued paths such that $X_T-X_0=Y_T-Y_0$, then $X$ and $Y$ are tree-like equivalent.
\end{prop}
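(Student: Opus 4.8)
The plan is to prove the proposition directly from the definitions, by exhibiting an explicit height function for the concatenated path $Z := X*\overleftarrow{Y}$ and thereby showing that $Z$ is tree-like, which by definition is exactly the statement $X\sim_t Y$. First I would identify $E$ with $\mathbb{R}$, which is legitimate since $E$ is one-dimensional and the notions of total variation and of tree-likeness are insensitive to the choice of linear isomorphism. Writing out the concatenation on $[0,2T]$ gives $Z_s=X_s$ for $s\in[0,T]$ and $Z_s=X_T+Y_{2T-s}-Y_T$ for $s\in[T,2T]$, and I would check that the hypothesis $X_T-X_0=Y_T-Y_0$ forces $Z_{2T}=X_T+Y_0-Y_T=X_0=Z_0$, so that $Z$ is a continuous loop.

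Next I would propose the candidate height function $h(s):=|Z_s-Z_0|$ for $s\in[0,2T]$. It is continuous and non-negative, and since $Z_0=Z_{2T}$ it satisfies $h(0)=h(2T)=0$, as required by Definition \ref{def:tree_like_path}. The heart of the argument is then to verify the defining inequality $|Z_t-Z_s|\le h(s)+h(t)-2\inf_{u\in[s,t]}h(u)$ for all $0\le s\le t\le 2T$.

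I would establish this by a short case analysis according to the position of $Z$ relative to the level $Z_0$ on $[s,t]$. If the restriction of $Z$ to $[s,t]$ attains the value $Z_0$ (in particular whenever $Z_s$ and $Z_t$ lie on opposite sides of $Z_0$, by the intermediate value theorem), then $\inf_{u\in[s,t]}h(u)=0$ and the inequality reduces to $|Z_t-Z_s|\le |Z_s-Z_0|+|Z_t-Z_0|$, which is the triangle inequality. If instead $Z$ stays strictly on one side of $Z_0$ throughout $[s,t]$, then $h(u)=\varepsilon(Z_u-Z_0)$ for a fixed sign $\varepsilon\in\{\pm1\}$; setting $a:=\varepsilon(Z_s-Z_0)\ge0$, $b:=\varepsilon(Z_t-Z_0)\ge0$ and $m:=\inf_{u\in[s,t]}\varepsilon(Z_u-Z_0)\ge0$, the claim becomes $|b-a|\le a+b-2m$, which follows at once from $m\le\min(a,b)$ (true because $s,t\in[s,t]$). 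With the inequality verified in every case, $h$ is a valid height function, $Z$ is tree-like, and hence $X\sim_t Y$.

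I expect the case analysis to be the only delicate point: one must handle the infimum term carefully, distinguishing whether the path crosses the level $Z_0$ between $s$ and $t$ (which forces the infimum to vanish and leaves only the triangle inequality) from the case where it remains on one side (where the bound reduces to a one-line estimate). I would deliberately avoid deducing the proposition from Example \ref{ex:sig_1d} together with Theorem \ref{thm:sig_injectivity}, since the surrounding text derives the one-dimensional reconstruction statement from this very proposition, and invoking Example \ref{ex:sig_1d} would make that argument circular.
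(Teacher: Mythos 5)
Your proof is correct and takes essentially the same route as the paper: you form $Z = X*\overleftarrow{Y}$, use the identical height function $h(s)=|Z_s-Z_0|$, and verify the defining inequality by a case analysis that invokes the intermediate value theorem. The only difference is organizational—you split on whether $Z$ attains the level $Z_0$ on $[s,t]$, while the paper splits on the ordering of $Z_0$, $Z_s$, $Z_t$—but the substance is the same.
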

\begin{proof}
Since any one-dimensional real vector space is isometrically isomorph to $\mathbb{R}$, we can assume that $E=\mathbb{R}$. Let $X$ and $Y$ be two paths from $[0,T]$ to $\mathbb{R}$ such that $X_T-X_0=Y_T-Y_0$. Let us set $Z=X*\overleftarrow{Y}$ and $h(t) = |Z_t-Z_0|$ for $t\in [0,2T]$. Using the definition of concatenation operator and the fact that $X_T-X_0=Y_T-Y_0$, we have $Z_0=X_0$ and $Z_{2T}=X_T+Y_0-Y_T=X_0$ so that $h(0)=h(2T)=0$. The non-negativity of $h$ results from the non-negativity of the absolute value. Moreover, the continuity of $X$ and $Y$ imply the continuity of $Z$ by definition of the concatenation operator, so $h$ is continuous as well. The only remaining property to show is inequality (\ref{eq:tree_like}). Let $s,t \in[0,2T]$ with $s\le t$. Let us assume that $Z_s \le Z_t$ (the proof in the case $Z_t \le Z_s$ is similar) so that $|Z_t-Z_s| = Z_t-Z_s = Z_t-Z_0-(Z_s-Z_0)$. We distinguish three cases:
\begin{itemize}
\item If $Z_0 \le Z_s \le Z_t$, then $h(t) = Z_t-Z_0$ and $h(s) = Z_s-Z_0$. Thus,
\begin{equation}
|Z_t-Z_s| = h(t)-h(s)\le h(t) -\inf_{u\in[s,t]}h(u)\le h(t)+h(s) -2 \inf_{u\in[s,t]}h(u).
\end{equation}
\item If $Z_s \le Z_0 \le Z_t$, then $h(t)=Z_t-Z_0$ and $h(s) = Z_0-Z_s$. Thus,
\begin{equation}
|Z_t-Z_s| = h(t)+h(s) =h(t)+h(s) -2 \inf_{u\in[s,t]}h(u)
\end{equation}
because by the intermediate value theorem, there exists $v\in[s,t]$ such that $Z_v = Z_0$ which implies $\inf_{u\in[s,t]}h(u) = 0$.  
\item If $Z_s \le Z_t \le Z_0$, then $h(t)=Z_0-Z_t$ and $h(s)=Z_0-Z_s$. Thus,
\begin{equation}
|Z_t-Z_s| = h(s)-h(t) \le h(s)-\inf_{u\in[s,t]}h(u)\le h(t)+h(s) -2\inf_{u\in[s,t]}h(u).
\end{equation}
\end{itemize}
Hence, $h$ is a height function of $Z$ and $Z$ is tree-like. 
\end{proof}

\section{Signature and stochastic processes}
In the last two subsections, the signature has been presented in a deterministic setting. However, it is clear that the stated results in the previous subsection remain true for stochastic processes by defining the signature as a random variable. In view of the uniqueness theorem from Hambly and Lyons, a natural question at this stage is whether the signature allows to characterize the law of stochastic processes. A first positive answer has been provided by \citet{chevyrev2016characteristic}. They succeeded to construct a characteristic function for the signature of stochastic processes and they proved that it characterizes the law of stochastic processes in the same way as the traditional characteristic function does for random variables. However, this construction is quite abstract and as such is not suitable for applications so far. They also gave some technical conditions under which the expected signature (defined as $\mathbb{E}[S(X)]$ where $X$ is a stochastic process) characterizes the law. \\

These results have then been extended by \citet{chevyrev2018signature}. They showed that by considering a normalization of the signature,   
the expected normalized signature characterizes the law of stochastic processes under mild regularity assumptions. This result is stronger than the one from Chevyrev and Lyons as it requires less assumptions. We now provide a brief description of their main result.

Let us denote by $T_1^*(E)$ the subset of $T^*(E)$ (see Equation (2.10) in Section 2.2.) defined by:
\begin{equation}
T^*_1(E):= \left\{\mathbf{t} \in T^*(E) \mid \mathbf{t}^0=1 \right\}.
\end{equation}
We define a tensor normalization as follows:
\begin{defi}[Tensor normalization]
\label{def:tensor_normalisation}
A tensor normalization is a continuous injective map of the form
\begin{equation}
\begin{array}{rcl}
\Lambda: T_1^*(E)& \rightarrow & \{\mathbf{t}\in T^*_1(E) \mid \|\mathbf{t}\|\le K\} \\
\mathbf{t} &\mapsto &(\mathbf{t}^0,\lambda(\mathbf{t}) \mathbf{t}^1, \lambda(\mathbf{t})^2 \mathbf{t}^2,\dots, \lambda(\mathbf{t})^n \mathbf{t}^n, \dots).
\end{array}
\end{equation}
where $K>0$ is a constant and $\lambda: T^*_1(E) \rightarrow (0,+\infty)$ is a positive function. 
\end{defi} 
The existence of such object is discussed in Proposition 14 of \citet{chevyrev2018signature}. We can now state a simplified version of Chevyrev and Oberhauser's main theorem:
\begin{thm}\label{thm:sig_law_sto}
Let $X=(X_t)_{t\in[0,T]}$ and $Y=(Y_t)_{t\in[0,T]}$ be two stochastic processes defined on a probability space $(\Omega,\mathcal{A},\mathbb{P})$ such that $X$ and $Y$ are in $\mathcal{P}^1([0,T],E)$ almost surely where $\mathcal{P}^1([0,T],E)=\mathcal{C}^1([0,T],E)/\sim_t$ is the space of bounded variation paths quotiented by the tree-like equivalence relation. Let $\Lambda$ be a tensor normalization and define the normalized signature as $\Phi = \Lambda \circ S$. Then,
\begin{equation}
\mathbb{E}[\Phi(X)] = \mathbb{E}[\Phi(Y)] \text{ iff } X \overset{d}{=} Y.
\end{equation}
\end{thm}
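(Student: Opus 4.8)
The plan is to prove the two implications separately, the forward one being essentially immediate and the reverse one carrying all the difficulty.

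For the direction $X \overset{d}{=} Y \Rightarrow \mathbb{E}[\Phi(X)] = \mathbb{E}[\Phi(Y)]$, I would simply note that $\Phi = \Lambda \circ S$ is a Borel-measurable map from $\mathcal{P}^1([0,T],E)$ into the ball $\{\mathbf{t}\in T_1^*(E) \mid \|\mathbf{t}\|\le K\}$ by Definition \ref{def:tensor_normalisation}. Since $\Phi$ is bounded in norm by $K$ it is Bochner-integrable, and two random variables with the same law have the same expectation for any fixed bounded measurable map; applying this componentwise in each $E^{\otimes n}$ yields $\mathbb{E}[\Phi(X)] = \mathbb{E}[\Phi(Y)]$.

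The reverse implication is the heart of the matter, and I would phrase it as a statement about the pushforward measures $\mu = \Phi_*\mathbb{P}_X$ and $\nu = \Phi_*\mathbb{P}_Y$ on $T^*(E)$, both supported in the bounded set $\{\|\mathbf{t}\|\le K\}$. The first ingredient is that $\Phi$ is injective: $S$ is injective on the quotient $\mathcal{P}^1([0,T],E)$ by Hambly and Lyons' Theorem \ref{thm:sig_injectivity}, and $\Lambda$ is injective by Definition \ref{def:tensor_normalisation}, so a Borel-isomorphism argument reduces the goal to proving $\mu = \nu$. The hypothesis gives only the equality of the barycenters $\int \mathbf{t}\,\mu(d\mathbf{t}) = \int \mathbf{t}\,\nu(d\mathbf{t})$, so the task is to upgrade the coincidence of means to the coincidence of the full measures. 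The mechanism I would use is the algebraic structure of iterated integrals: the coordinate functions $x\mapsto \langle \ell, S(x)\rangle$, as $\ell$ ranges over words (linear functionals on $T(E)$), form a unital algebra under pointwise multiplication, because the product of two such functionals is again a linear functional of the signature (the shuffle-product identity), and this algebra separates the points of $\mathcal{P}^1([0,T],E)$ by Theorem \ref{thm:sig_injectivity}. A Stone--Weierstrass argument then makes this algebra dense in the continuous functions on compact subsets, so that all polynomial moments, and hence the measure, are determined once the expected signature is known; the boundedness enforced by the normalization is what makes the corresponding moment problem determinate without the compact-support assumption of \citet{fawcett2002problems}.

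The step I expect to be the main obstacle is precisely the reconciliation of these two ingredients. The clean shuffle-algebra structure is natural for the unnormalized signature $S$, whereas the normalization $\Lambda$ rescales each tensor level $\mathbf{X}^n$ by $\lambda(S(x))^n$, a factor that depends on the path itself and therefore destroys the linearity underlying the shuffle identity; consequently $\mathbb{E}[\Phi(X)]$ is not a simple transform of the expected signature, and the density argument cannot be applied verbatim to $\Phi$. Making the argument rigorous requires both the careful construction of an admissible normalization with the boundedness and continuity properties of Definition \ref{def:tensor_normalisation} (the existence being Proposition 14 of \citet{chevyrev2018signature}) and a functional-analytic passage from a point-separating algebra to the determinacy of a probability measure in the non-locally-compact space $T^*(E)$, which one handles through a tightness argument exploiting the uniform bound $\|\Phi(\cdot)\|\le K$. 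This is exactly the technical contribution of \citet{chevyrev2018signature}, and it is where the bulk of the work lies.
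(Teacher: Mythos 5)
The paper itself offers no proof of this statement: Theorem \ref{thm:sig_law_sto} is presented as a simplified version of Theorem 26 of \citet{chevyrev2018signature}, with the surrounding text and remarks simply pointing to that reference. So your proposal can only be measured against the proof in the cited work, and at the level of strategy it does match it: the forward implication is the trivial one (boundedness of $\Phi$ by the constant $K$ of Definition \ref{def:tensor_normalisation}, plus measurability), and the converse proceeds by (i) injectivity of $\Phi=\Lambda\circ S$ on $\mathcal{P}^1([0,T],E)$ via Theorem \ref{thm:sig_injectivity} and the injectivity of $\Lambda$, (ii) the shuffle-product identity making the coordinate functionals of the signature a point-separating algebra, and (iii) a Stone--Weierstrass/moment-determinacy argument in which the uniform bound $\|\Phi(\cdot)\|\le K$ replaces the compact-support hypothesis of \citet{fawcett2002problems}. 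Like the paper, you ultimately delegate the technical work to the reference, which is consistent with how the paper treats this result.

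However, the step you single out as the main obstacle is misdiagnosed, in a way worth correcting. You claim the normalization destroys the shuffle structure because the factor $\lambda(S(x))$ depends on the path. It does not: $\Lambda$ is a \emph{graded dilation}, multiplying the level-$n$ tensor by $\lambda^n$, and the shuffle product of a functional of degree $m$ with one of degree $n$ is homogeneous of degree $m+n$. Hence, pointwise in $x$, writing $\lambda=\lambda(S(x))$ and $\star$ for the shuffle product,
\begin{equation*}
\langle \ell_1, \Phi(x)\rangle\,\langle \ell_2, \Phi(x)\rangle
= \lambda^{m+n}\,\langle \ell_1, S(x)\rangle\,\langle \ell_2, S(x)\rangle
= \lambda^{m+n}\,\langle \ell_1 \star \ell_2, S(x)\rangle
= \langle \ell_1 \star \ell_2, \Phi(x)\rangle .
\end{equation*}
In other words, $\Phi(x)$ remains group-like, the functions $x\mapsto\langle\ell,\Phi(x)\rangle$ still form an algebra, and the density argument applies verbatim to $\Phi$; this compatibility between normalization and the shuffle identity is precisely what makes Chevyrev and Oberhauser's construction work. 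It is also the ingredient the paper's Remark \ref{rk:log-signature_caveat} alludes to: the shuffle identity survives normalization but fails for the log-signature, which is why the theorem cannot be transported to the log-signature. The only correct part of your objection is that $\mathbb{E}[\Phi(X)]$ is not a simple transform of the expected unnormalized signature $\mathbb{E}[S(X)]$; but the proof never needs that, since it works with $\Phi$ directly throughout.
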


\begin{rk}
    This theorem can be extended to a more general space of processes, namely the space of geometric $p$-rough paths quotiented by the tree-like equivalence. This extension corresponds to Theorem 26 in \cite{chevyrev2018signature}.  
\end{rk}

\begin{rk}\label{rk:log-signature_caveat}
The proof of this theorem does not work anymore if we replace the signature by the log-signature. Indeed, one of the key ingredients of the proof is the shuffle product identity (stated and proved in Theorem 2.15 of \citeauthor{lyons2007differential}, \citeyear{lyons2007differential}) which holds for the signature but not for the log-signature.
\end{rk}


\bibliographystyle{abbrvnat}
\bibliography{bibli}

\end{document}